\documentclass[aps,prx,twocolumn,notitlepage,nofootinbib]{revtex4-1}
\usepackage{amsmath, amsthm, amssymb, mathtools}
\usepackage{enumitem}

\usepackage{ifpdf}
\ifpdf
\usepackage[pdftex]{graphicx}
\else
\usepackage[dvips]{graphicx}
\fi
\usepackage{tikz}
 	 \usetikzlibrary{arrows,backgrounds}
\usepackage[all]{xy}
\usepackage{tocvsec2}
\usepackage{comment}
\usepackage{bbm}
\usepackage{tabularx}
\usepackage{stmaryrd}

\input xy
\xyoption{all}

\usepackage[pdftex,plainpages=false,hypertexnames=false,colorlinks,pdfpagelabels]{hyperref}
\newcommand{\arxiv}[1]{\href{http://arxiv.org/abs/#1}{\tt arXiv:\nolinkurl{#1}}}
\newcommand{\arXiv}[1]{\href{http://arxiv.org/abs/#1}{\tt arXiv:\nolinkurl{#1}}}

\newcommand{\googlebooks}[1]{(preview at \href{http://books.google.com/books?id=#1}{google books})}

\usepackage{xcolor}
\definecolor{dark-red}{rgb}{0.7,0.25,0.25}
\definecolor{dark-blue}{rgb}{0.15,0.15,0.55}
\definecolor{medium-blue}{rgb}{0,0,.8}
\definecolor{violet}{RGB}{138,43,226}
\definecolor{DarkGreen}{RGB}{0,150,0}
\definecolor{rufous}{HTML}{a81c07}
\definecolor{Forestgreen}{rgb}{0.0, 0.5, 0.0}
\definecolor{darkaquamarine}{rgb}{.12, .47, .42}
\hypersetup{
   linkcolor={purple},
   citecolor={medium-blue}, urlcolor={medium-blue}
}

\usepackage{fullpage}

\setlength\topmargin{-.25in}
\setlength\headheight{0in}
\setlength\headsep{.2in}
\setlength\textheight{9in}
\setlength\parindent{0.25in}

\theoremstyle{plain}
\newtheorem{thm}{Theorem}[section]
\newtheorem*{thm*}{Theorem}
\newtheorem{thmalpha}{Theorem}

\newtheorem{cor}[thm]{Corollary}

\newtheorem*{cor*}{Corollary}

\newtheorem*{conj*}{Conjecture}
\newtheorem{lem}[thm]{Lemma}

\newtheorem*{quest*}{Question}
\newtheorem*{claim*}{Claim}

\theoremstyle{definition}
\newtheorem{defn}[thm]{Definition}
\newtheorem{construction}[thm]{Construction}

\newtheorem{nota}[thm]{Notation}

\newtheorem{example}[thm]{Example}

\newtheorem{sub-ex}[thm]{Sub-Example}
\newtheorem{rem}[thm]{Remark}

\newtheorem*{rem*}{Remark}

\DeclareMathOperator{\Ad}{Ad}

\DeclareMathOperator{\End}{End}

\DeclareMathOperator{\Id}{\Id}

\DeclareMathOperator{\im}{im}
\DeclareMathOperator{\Irr}{Irr}

\DeclareMathOperator{\spt}{SPT}
\DeclareMathOperator{\qd}{QD}

\newcommand{\set}[2]{\left\{#1 \middle| #2\right\}}


\def\semicolon{;}
\def\applytolist#1{
    \expandafter\def\csname multi#1\endcsname##1{
        \def\multiack{##1}\ifx\multiack\semicolon
            \def\next{\relax}
        \else
            \csname #1\endcsname{##1}
            \def\next{\csname multi#1\endcsname}
        \fi
        \next}
    \csname multi#1\endcsname}

\def\calc#1{\expandafter\def\csname c#1\endcsname{{\mathcal #1}}}
\applytolist{calc}QWERTYUIOPLKJHGFDSAZXCVBNM;
\def\bbc#1{\expandafter\def\csname bb#1\endcsname{{\mathbb #1}}}
\applytolist{bbc}QWERTYUIOPLKJHGFDSAZXCVBNM;
\def\bfc#1{\expandafter\def\csname bf#1\endcsname{{\mathbf #1}}}
\applytolist{bfc}QWERTYUIOPLKJHGFDSAZXCVBNM;
\def\sfc#1{\expandafter\def\csname s#1\endcsname{{\sf #1}}}
\applytolist{sfc}QWERTYUIOPLKJHGFDSAZXCVBNM;
\def\fc#1{\expandafter\def\csname f#1\endcsname{{\mathfrak #1}}}
\applytolist{fc}QWERTYUIOPLKJHGFDSAZXCVBNM;
\def\rmc#1{\expandafter\def\csname rm#1\endcsname{{\mathrm #1}}}
\applytolist{rmc}QWERTYUIOPLKJHGFDSAZXCVBNM;

\newcommand{\Fun}{{\sf Fun}}
\newcommand{\Rep}{{\sf Rep}}

\newcommand{\Mod}{{\sf Mod}}

\newcommand{\Bim}{{\sf Bim}}

\newcommand{\Hilb}{{\sf Hilb}}



\newcommand{\noshow}[1]{}
\newcommand{\MR}[1]{}

\newcommand{\calH}{\mathcal H}
\newcommand{\C}{\mathbb C}

\newcommand{\Z}{\mathbb{Z}}


\usetikzlibrary{shapes}
\usetikzlibrary{cd}
\usetikzlibrary{backgrounds}
\usetikzlibrary{decorations,decorations.pathreplacing,decorations.markings,decorations.pathmorphing}
\usetikzlibrary{fit,calc,through}
\usetikzlibrary{external}
\usetikzlibrary{arrows}
\tikzset{vertex/.style = {shape=circle,draw,fill=black,inner sep=0pt,minimum size=5pt}}
\tikzset{edge/.style = {->,> = latex', bend right}}
\tikzset{
	super thick/.style={line width=3pt}
}
\tikzstyle{knot}=[preaction={super thick, white, draw}]

\tikzset{
    quadruple/.style args={[#1] in [#2] in [#3] in [#4]}{
        #1,preaction={preaction={preaction={draw,#4},draw,#3}, draw,#2}
    }
}
\tikzstyle{shaded}=[fill=red!10!blue!20!gray!30!white]
\tikzstyle{unshaded}=[fill=white]
\tikzstyle{empty box}=[circle, draw, thick, fill=white, opaque, inner sep=2mm]
\tikzstyle{annular}=[scale=.7, inner sep=1mm, baseline]
\tikzstyle{rectangular}=[scale=.75, inner sep=1mm, baseline=-.1cm]
\tikzstyle{mid>}=[decoration={markings, mark=at position 0.53 with {\arrow{>}}}, postaction={decorate}]
\tikzstyle{mid<}=[decoration={markings, mark=at position 0.5 with {\arrow{<}}}, postaction={decorate}]
\tikzstyle{over}=[double, draw=white, super thick, double=]
\tikzstyle{box} = [rectangle,draw,rounded corners=5pt,very thick]

\newcommand{\roundNbox}[6]{
	\draw[rounded corners=5pt, very thick, #1] ($#2+(-#3,-#3)+(-#4,0)$) rectangle ($#2+(#3,#3)+(#5,0)$);
	\coordinate (ZZa) at ($#2+(-#4,0)$);
	\coordinate (ZZb) at ($#2+(#5,0)$);
	\node at ($1/2*(ZZa)+1/2*(ZZb)$) {#6};
}

\newcommand{\tikzmath}[2][]{\vcenter{\hbox{\begin{tikzpicture}[#1]#2
\end{tikzpicture}}}
}

\begin{document}


\title{Boundary algebras of the Kitaev Quantum Double model}

\author{Mario Tomba$^{*1}$}
\author{Shuqi Wei$^{*2}$}
\author{Brett Hungar$^3$}
\author{Daniel Wallick$^3$}
\author{Kyle Kawagoe$^3$}
\author{Chian Yeong Chuah$^3$}
\author{David Penneys$^3$}
\affiliation{$^*$Both of these co-first authors contributed equally to this article.}
\affiliation{$^1$Department of Mathematics, Dartmouth College, Hanover, NH 03755, USA}
\affiliation{$^2$Department of Mathematics, University of California Berkeley, Berkeley, CA 94720, USA}
\affiliation{$^3$Department of Mathematics, The Ohio State University, Columbus, OH 43210, USA}

\begin{abstract}
The recent article [arXiv:2307.12552]
gave local topological order (LTO) axioms for a quantum spin system, showed they held in Kitaev's Toric Code and in Levin-Wen string net models, and gave a bulk boundary correspondence to describe bulk excitations in terms of the boundary net of algebras.
In this article, we prove the LTO axioms for Kitaev's Quantum Double model for a finite group $G$.
We identify the boundary nets of algebras with fusion categorical nets associated to $(\mathsf{Hilb}(G),\mathbb{C}[G])$
or 
$(\mathsf{Rep}(G),\mathbb{C}^G)$ 
depending on whether the boundary cut is rough or smooth respectively.
This allows us to make connections to work of Ogata on the type of the cone von Neumann algebras in the algebraic quantum field theory approach to topological superselection sectors.
We show that the boundary algebras can also be calculated from a trivial $G$-symmetry protected topological phase ($G$-SPT), and that the gauging map preserves the boundary algebras.
Finally, we compute the boundary algebras for the (3+1)D Quantum Double model associated to an abelian group.

\end{abstract}

\maketitle

\section{Introduction}

Long-range topological order has been studied extensively over the past few decades \cite{1610.03911}. 
However, central questions about the identification of topological order in a given microscopic model remain outstanding. 
By using the recent formalism of \cite{2307.12552}, we identify the topological order of the Quantum Double model in both two and three dimensions by computing their nets of boundary algebras.

One of the foundational principles of topological physics is the bulk-boundary correspondence \cite{PhysRevB.43.11025,PhysRevLett.95.226801, 1310.5708,PhysRevB.91.125124,1510.08744}.  A major consequence of this correspondence is that the bulk phase of matter can be determined from the boundary theory. For example, the boundary of a (2+1)D topological quantum field theory may host a (1+1)D conformal field theory (CFT) on its boundary \cite{1810.05697}. 
This relationship may be used to at least partially identify the bulk theory from the CFT, for example, through entanglement entropy \cite{1103.5437,MR3714128,MR3955452}. 

Identifying a topological phase from its boundary algebra has several advantages.  
First, as we demonstrate, this procedure does not require any special knowledge of the string operators of the theory.  
In comparison, although \cite{PhysRevB.101.115113} did not require knowledge of the string operators to identify the phase of matter in a microscopic model, they did assume the possibility of finding some specific local operators. 
The article \cite{MR4109024} gives another method to determine modular data of the bulk theory from a ground state satisfying an area law, but at this time, the associators/F-symbols remain out of reach. Both of these approaches operate in the bulk.

As an aside, we expect that the boundary algebra method is amenable to numerical studies.  
Given a Hamiltonian on a finite lattice it should be possible to gain enough numerical information about the boundary algebra to identify it in full, at least on this finite lattice.

The recent article \cite{2307.12552} gives mathematical axioms for a (2+1)D quantum system to be \emph{locally topologically ordered} (LTO) building on the topological quantum order (TQO) axioms of \cite{MR2742836}.
Given such a spin system and a choice of half-plane bounded by a 1D sub-lattice,  one associates a 1D net of boundary algebras $I\mapsto\fB(I)$ along the sublattice.
It is conjectured that the \emph{DHR bimodules} \cite{2304.00068} of the boundary algebra describe the localized anyonic excitations of the bulk theory, giving a bulk-boundary correspondence.

This corrspondence has been verified for Kitaev's Toric Code \cite{MR1611329} and the Levin-Wen string net model \cite{PhysRevB.71.045110,PhysRevB.103.195155}.
For both these models, the boundary net of algebras $I\mapsto \fB(I)$ is described by a \emph{fusion categorical net} \cite{2307.12552}.
In more detail, given a unitary fusion category (UFC) $\cC$, we let $X \coloneqq \bigoplus_{c \in \Irr(\cC)} c$ be the direct sum over all simples.
The fusion categorical net is given by
$$
I\mapsto \End_\cC(X^{|I|}),
$$
where $|I|$ is the number of sites in the 1D interval $I$.
It was shown in \cite{2304.00068} that the DHR bimodules of this fusion categorical net is braided equivalent to $\cZ(\cC)$, the Drinfeld center, which is well known to describe the anyonic excitations.
In the case of the Toric Code, $\cC=\Hilb(\bbZ/2)$, giving the double $\cD(\bbZ/2)$ as the anyonic excitations \cite{MR1951039}.  

In this article, we verify the LTO axioms for Kitaev's Quantum Double model  \cite{MR1951039}, and we verify the bulk-boundary correspondence by showing the nets of boundary algebras are again fusion categorical nets.

\begin{thmalpha}
\label{thm:BoundaryAlgebrasForKQD}
The Kitaev Quantum Double model for a finite group $G$ satisfies the LTO axioms of \cite{2307.12552}.
The boundary algebra for a rough cut is the fusion categorical net for $\Hilb(G)$ with generator $\bbC[G]$, and the boundary algebra for a smooth cut is the fusion categorical net for $\Rep(G)$ with generator $\bbC^G$.
\end{thmalpha}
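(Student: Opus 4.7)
My plan is to handle both claims at once by working with an explicit description of the ground state projector of the Kitaev Quantum Double Hamiltonian. The Hamiltonian is a sum of commuting projectors: vertex terms $A_v = \frac{1}{|G|}\sum_{g\in G} A_v^g$ enforcing gauge invariance, and plaquette terms $B_p$ enforcing trivial holonomy around $p$. This makes the commuting-projector LTO axioms automatic, while the remaining LTO axioms of \cite{2307.12552} reduce to standard computations with flat $G$-bundles modulo gauge on rectangular regions: the local ground space on a disk is one-dimensional, the ground state projector on any region decomposes as a product of local commuting projectors, and the required restriction/extension and injectivity properties follow from familiar toric-code-style gauge-theoretic arguments.

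For the rough-cut boundary algebra on an interval $I$ of $n = |I|$ boundary edges (each carrying $\bbC[G]$), I would construct an explicit isomorphism $\fB(I) \cong \End_{\Hilb(G)}(\bbC[G]^{\otimes n})$ in two steps. The forward map sends a $\Hilb(G)$-intertwiner $f$ to the $G$-equivariant operator it determines on $\bbC[G]^{\otimes n}$, using the left regular action to encode the $\Hilb(G)$-structure; one checks this operator commutes with all bulk projectors adjacent to $I$. The reverse direction identifies the image of the bulk projector, restricted to operators supported near $I$, with the operators invariant under all boundary-vertex gauge averagings, which is by construction exactly the intertwiner space in $\Hilb(G)$. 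For the smooth cut the analysis is parallel but with the roles of vertex and plaquette operators interchanged: the boundary sites effectively carry $\bbC^G$, plaquette-based constraints cut operators down to those intertwining a coproduct-style $G$-action, and the resulting algebra is $\End_{\Rep(G)}((\bbC^G)^{\otimes n})$.

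The main obstacle will be the surjectivity of the forward map into $\fB(I)$: showing that every operator supported on $I$ that commutes with \emph{all} bulk projectors (not merely those immediately above $I$) arises from a categorical intertwiner requires careful control of how the bulk constraints propagate. I plan to attack this by choosing a flat-connection basis for the ground space on a thin strip above $I$, labeled by group-valued holonomies along boundary-parallel paths together with representation-theoretic data at the top of the strip, then expressing the matrix coefficients of any boundary-commuting operator in this basis and matching them term-by-term with the diagrammatic generators of $\End_\cC(X^{\otimes n})$ for $X = \bigoplus_{c \in \Irr(\cC)} c$ and $\cC = \Hilb(G)$ or $\Rep(G)$. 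This diagrammatic identification should also give a conceptually clean route, via the DHR bimodules of \cite{2304.00068}, to the expected Drinfeld-double description of the bulk excitations.
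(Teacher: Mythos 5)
Your plan has two genuine gaps at exactly the points where the real work lies. First, the LTO axioms are not ``automatic'' from the commuting-projector structure, and \ref{LTO:BoundaryAlgebras}--\ref{LTO:Injectivity} are not standard toric-code facts for nonabelian $G$: the crux is to compute the compression $p_\Delta \fA(\Lambda)p_\Delta$ explicitly and show it equals $\fC(I)p_\Delta$ for an algebra $\fC(I)$ generated by \emph{truncated} plaquette/star terms along $I$ (operators with a ``ghost edge''). The paper does this by a cancellation algorithm: writing a general local operator as $L_{c_1}P_{c_2}$, sweeping plaquette translations across $\Lambda$ to annihilate the translation part up to boundary-supported truncated generators, and then showing any two flat projection parts with the same boundary labels are conjugate by plaquette operators, hence equal under compression; injectivity (\ref{LTO:Injectivity}) then needs a separate linear-independence argument on the image of $p_\Delta$. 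Your proposal defers all of this to ``familiar gauge-theoretic arguments'' and to matching matrix coefficients in a flat-connection basis, which is precisely the unproven heart. (Conversely, the obstacle you single out --- commuting with all bulk projectors rather than only nearby ones --- is the easy part once the compression is computed, since the truncated generators visibly commute with every $p_\Delta$, giving the sandwich $\fC(I)p_{\Delta_I}\subseteq \fB(I)\subseteq p_{\Delta_I}\fA(\Lambda_I)p_{\Delta_I}=\fC(I)p_{\Delta_I}$.)

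Second, your characterization of the rough-cut algebra is wrong as stated: operators ``invariant under boundary-vertex gauge averagings'' are the $G$-\emph{equivariant} operators, i.e.\ $\End_{\Rep(G)}((\bbC^G)^{\otimes n})$, not $\End_{\Hilb(G)}(\bbC[G]^{\otimes n})$. The rough boundary algebra is instead the $G$-\emph{grading-preserving} algebra: the grading on $\bbC[G]^{\otimes n}$ by the boundary holonomy $g_1\cdots g_n$ is enforced by the flatness constraints adjacent to $I$, and the boundary algebra acts on each graded block $V_g$ irreducibly as $M_{|G|^{n-1}}(\bbC)$ (the paper exhibits explicit matrix units built from the truncated translation generators). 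For nonabelian $G$ these two algebras are genuinely non-isomorphic --- this distinction is the content of the theorem --- so ``by construction exactly the intertwiner space in $\Hilb(G)$'' would either yield the wrong algebra or at least requires an argument you have not supplied; equivariance is the correct characterization only on the smooth side, where the paper combines it with a dimension count (or, alternatively, identifies both cases at once via planar-algebra exchange relations between the truncated star and plaquette generators).
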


The UFCs $\Hilb(G)$ or $\Rep(G)$ are equivalent when $G$ is abelian, but not when $G$ is nonabelian.  
However, both nets have the same category of DHR bimodules as $\cZ(\Hilb(G)) \cong \cZ(\Rep(G))$, which is the well-known category of excitations for Kitaev's Quantum Double model \cite{MR1951039}. 

While the first part of this theorem was completely expected, the second part is somewhat surprising in light of \cite[Rem.~4.10]{2307.12552}.
There, boundary algebras for the version of the Levin-Wen model from \cite{MR3204497,2305.14068} in which degrees of freedom are located on vertices were computed.
It was shown that Levin-Wen models for $\Hilb(S_3)$ and $\Rep(S_3)$ give highly different limit AF $\rmC^*$-algebras $\varinjlim_I \fB(I)$ that are not stably isomorphic.
However, for the Kitaev Quantum Double model, the limit AF $\rmC^*$-algebras are both the same UHF (ultra hyperfinite) algebra $M_{|G|^\infty}$.

In contrast to \cite[Thm.~B]{2307.12552}, we have the following somewhat surprising result.

\begin{thmalpha}
\label{thm:CanonicalStateTracialIntro}
The canonical state $\psi$ restricted to the boundary algebras of the Kitaev Quantum Double model is always tracial.
\end{thmalpha}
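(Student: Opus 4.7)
The plan is to reduce traciality of $\psi|_{\fB(I)}$ to cyclicity of the unnormalized matrix trace via an absorption identity for the Kitaev Quantum Double ground-state projector.

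In the LTO framework of \cite{2307.12552}, the canonical state on a local algebra $\fA(\Lambda)$ is $\psi(x) = \tr(x p_\Lambda)/\tr(p_\Lambda)$, where $p_\Lambda$ is the local ground-state projector and $\tr$ is the matrix trace on the local Hilbert space. For the Kitaev Quantum Double, $p_\Lambda = \prod_v A_v \prod_p B_p$ is a bona fide orthogonal projection, since the vertex and plaquette operators are mutually commuting orthogonal projections. My first step would be to verify the \emph{absorption identity} $p_\Lambda y = y p_\Lambda = y$ for every $y \in \fB(I)$ and every $\Lambda$ containing a sufficiently large neighborhood of $I$. Such absorption is implicit in the LTO construction of the boundary algebra, where elements of $\fB(I)$ are produced as compressions $p_\Lambda x p_\Lambda$ of local operators $x$ supported near $I$; such compressions are manifestly invariant under left and right multiplication by $p_\Lambda$ using $p_\Lambda^2 = p_\Lambda$.

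Given the absorption identity, for $y_1, y_2 \in \fB(I)$ we have
\[
\psi(y_1 y_2) = \frac{\tr(y_1 y_2 p_\Lambda)}{\tr(p_\Lambda)} = \frac{\tr(y_1 y_2)}{\tr(p_\Lambda)} = \frac{\tr(y_2 y_1)}{\tr(p_\Lambda)} = \psi(y_2 y_1),
\]
where the outer equalities use $y_1 y_2 p_\Lambda = y_1 y_2$ and $y_2 y_1 p_\Lambda = y_2 y_1$ (from absorption on $y_2$ and $y_1$ respectively), and the middle equality is cyclicity of the unnormalized trace. This yields traciality of $\psi|_{\fB(I)}$ for both rough and smooth boundary cuts.

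The main work is to confirm the absorption identity in this symmetric form for the explicit generators of $\fB(I)$ produced in the proof of Theorem~\ref{thm:BoundaryAlgebrasForKQD}, uniformly in $\Lambda$. This is the essential point behind the contrast with \cite[Thm.~B]{2307.12552}: in the Levin--Wen vertex model studied there, the ground-state projectors effectively carry quantum-dimension weights that obstruct the symmetric identity $p_\Lambda y = y p_\Lambda = y$, so the canonical state can fail to be tracial. For the Kitaev Quantum Double, both $A_v$ and $B_p$ arise as unweighted projections from the group algebra of $G$, and once the symmetric absorption is verified from the explicit combinatorics, traciality follows immediately from the short computation above.
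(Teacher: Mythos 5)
There is a genuine gap, and it sits exactly where the content of the theorem lies. Your argument needs two things to hold for \emph{one and the same} projection: (i) $\psi(x)=\tr(x\,p_\Lambda)/\tr(p_\Lambda)$, and (ii) the absorption identity $p_\Lambda y=yp_\Lambda=y$ for all $y\in\fB(I)$. No single choice of projection gives both. The canonical state is defined by compression with a \emph{completely} surrounding region: $p_\Delta x p_\Delta=\psi(x)p_\Delta$ for $\Lambda\ll_s\Delta$, which yields $\psi(x)=\tr(xp_\Delta)/\tr(p_\Delta)$ only for such $\Delta$; but for these $\Delta$ absorption fails badly, since by \ref{LTO:CompletelySurrounds} we have $p_\Delta y p_\Delta=\psi(y)p_\Delta$, so the only elements absorbed by $p_\Delta$ are scalar multiples of $p_\Delta$. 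If instead you use $p_{\Delta_I}$ from Definition \ref{defn:BoundaryAlgebras}, then absorption does hold (every $y\in\fB(I)$ is $xp_{\Delta_I}$ with $x$ commuting with $p_{\Delta_I}$, so $p_{\Delta_I}y=yp_{\Delta_I}=y$), but then the formula $\psi(y)=\tr(y)/\tr(p_{\Delta_I})$ is not a definition: it is precisely the nontrivial, model-dependent assertion that the canonical state agrees on $\fB(I)$ with the normalized matrix trace against the local boundary projector, i.e.\ essentially the statement you are trying to prove.

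The clearest symptom is that your argument proves too much. Absorption by $p_{\Delta_I}$ holds verbatim for \emph{any} net satisfying the LTO axioms (it uses only the definition of $\fB(I)$ and $p_{\Delta_I}^2=p_{\Delta_I}$), in particular for the Levin--Wen boundary nets of \cite{2307.12552}, where the canonical state is tracial if and only if the fusion category is pointed. So absorption cannot be ``the essential point behind the contrast''; the contrast lies in whether $\psi|_{\fB(I)}$ coincides with $\tr(\,\cdot\,p_{\Delta_I})/\tr(p_{\Delta_I})$, and that must be established by a model-specific computation. The paper does this by evaluating $\psi$ on the canonical basis of $\fC(I)\cong\fB(I)$ --- for a smooth cut, $\psi\bigl(\prod_i R^{g_i}_{\ell_i}\prod_i S^{(h_i)}_{s_i}\bigr)=|G|^{-(n-1)}$ if all $g_i=e$ and $0$ otherwise --- and then checking $\psi(xy)=\psi(yx)$ on pairs of basis elements using the exchange relation between the $R$ and $S$ generators. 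Your outline could be repaired by proving the trace formula for $\psi|_{\fB(I)}$ directly, but doing so requires essentially the same explicit computation that the paper performs, so as written the proposal assumes the crux rather than proving it.
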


For the version of the Levin-Wen model studied in \cite{2307.12552}, the canonical state gives a trace on the boundary algebra if and only if the UFC is \emph{pointed}, i.e., every simple object is invertible ($d_c=1$ for all $c\in \Irr(\cC)$). 
Thus the canonical state for the Levin-Wen model for $\Rep(G)$ is not tracial when $G$ is non-abelian, in contrast to Theorem \ref{thm:CanonicalStateTracialIntro} above.

Moreover, Theorem \ref{thm:CanonicalStateTracialIntro}
gives a corollary concerning cone algebras for the Kitaev Quantum Double model, which have been used to study the superselection theory in the algebraic quantum field theoretic sense \cite{MR3426207}.  
Recently, Ogata proved that the cone algebras for the Quantum Double model are type $\rmI\rmI_\infty$ factors \cite{2212.09036}.  
Our theorem provides an independent proof of this result.  

We also compute the boundary algebra of the Quantum Double by recognizing that it is a non-twisted bosonic lattice gauge theory \cite{RevModPhys.51.659,PhysRevB.87.125114}. 
In particular, it is the gauge theory of a trivial $G$-Symmetry Protected Topological phase ($G$-SPT). 
The gauging map provides a dictionary between the low energy operators of the two theories.  
The utility of this fact for our purposes is expressed in the following theorem.

\begin{thmalpha}
    For a finite group $G$, the boundary algebra of a bosonic $G$-SPT is isomorphic to that of the corresponding gauge theory.  
    In particular, the boundary algebra of the trivial $G$-SPT is isomorphic to the boundary algebra of Kitaev's Quantum Double.
\end{thmalpha}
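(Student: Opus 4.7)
The plan is to construct an explicit gauging $*$-homomorphism $\Gamma$ between the (symmetric) local operator algebras of the SPT and the (gauge-invariant) local operator algebras of the Kitaev Quantum Double, then argue that $\Gamma$ descends to an isomorphism of boundary nets, and finally combine with Theorem \ref{thm:BoundaryAlgebrasForKQD} to get the concrete identification in the trivial case.

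First I would set up the trivial $G$-SPT concretely: place $\bbC[G]$ on every vertex of the same 2D lattice as the Quantum Double, with global left-translation $G$-symmetry, and take the Hamiltonian $H_{\spt}=-\sum_v P_v$, where $P_v$ projects onto the uniform superposition $|G|^{-1/2}\sum_g |g\rangle$ (equivalently, the trivial $G$-representation) at site $v$. Because $H_{\spt}$ is a sum of commuting single-site projectors, the LTO axioms for the trivial SPT are essentially immediate from a direct frustration-free calculation, and the $G$-symmetric ground-state-projected interval algebras can be written down by hand. For a nontrivial $G$-SPT labeled by $\omega\in H^3(G,U(1))$, the same argument will apply after conjugation by a local finite-depth unitary implementing the SPT cocycle twist, which preserves the algebraic structure of the boundary net.

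Next I would spell out the gauging map $\Gamma$: adjoin a $\bbC[G]$ degree of freedom on each edge, restrict to the subspace satisfying the Gauss law $A_v^g=\id$ for all $v,g$, and send a $G$-symmetric vertex operator to the corresponding gauge-invariant operator with Wilson-line decorations on internal edges. The three properties I need are (i) $\Gamma$ is a $*$-homomorphism; (ii) $\Gamma$ sends operators supported on vertex set $S$ to operators supported on a bounded neighborhood of $S$; and (iii) $\Gamma$ is surjective onto the gauge-invariant subalgebra modulo the Gauss law. These are standard consequences of the construction of lattice gauge theories from global symmetries, and for the trivial SPT they can be checked by direct computation in the basis $\{|g\rangle\}_{g\in G}$.

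Then I would verify that $\Gamma$ intertwines the LTO local ground state projectors on both sides. On the SPT side the projector onto the local ground space of a region $R$ is $\prod_{v\in R} P_v$; on the Quantum Double side it is $\prod_{v\in R} A_v \prod_{p\subset R} B_p$. The $A_v$-factors are absorbed by the Gauss-law quotient defining the image of $\Gamma$, and one checks that $\Gamma$ carries the $P_v$-relations on the SPT side to precisely the $B_p$-relations modulo the gauge constraints. This intertwining descends $\Gamma$ to an isomorphism of the ground-state-projected interval algebras $p_\Lambda \cA(I) p_\Lambda$ defining $\fB(I)$ in \cite{2307.12552}. Combined with Theorem \ref{thm:BoundaryAlgebrasForKQD}, this identifies $\fB(I)$ for the trivial $G$-SPT with the fusion categorical nets for $(\Hilb(G),\bbC[G])$ and $(\Rep(G),\bbC^G)$ on rough and smooth cuts respectively.

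The hard part will be the careful bookkeeping of boundary-adjacent edges. The rough/smooth distinction in the Quantum Double model must be matched with how one truncates the SPT lattice at the boundary, and one must ensure the Wilson-line tails produced by $\Gamma$ terminate consistently along the boundary cut so that $\Gamma$ induces an isomorphism of nets $I\mapsto \fB(I)$ rather than merely of the limit UHF $\Cstar$-algebras $M_{|G|^\infty}$; a mismatch at a single boundary edge could alter the local algebra while preserving the inductive limit. I expect this to be resolved by choosing the SPT lattice so that its vertices sit on the dual of the Quantum Double edge lattice near the cut, and then tracking the Gauss law at the boundary site-by-site.
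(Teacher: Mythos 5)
Your overall route is the same as the paper's: realize the Quantum Double as the gauged trivial $G$-SPT, match the SPT Hamiltonian terms with the plaquette terms and the Gauss law with the star terms, and transport the ground-state-compressed interval algebras across the gauging map. The paper implements this not as an operator-level $*$-homomorphism with Wilson-line dressing but as an explicit unitary $\Gamma\colon \cH_{\spt}^G\to\im\bigl(\prod_{s\subset\Delta^\circ}A_s\bigr)$ defined on product basis vectors by difference variables on dual edges (Theorem \ref{thm:main_SPT}); your properties (i)--(iii), which you cite as ``standard,'' are exactly what is proved there via path-independence on a simply connected region (isometry) and a spanning-tree argument (surjectivity), so that part of your plan is fine in outline but would still need those arguments supplied.

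The genuine gap is precisely the step you flag as ``the hard part'' and then only guess at. In the case where the Quantum Double cut is rough (SPT cut smooth), no choice of lattice placement or Gauss-law bookkeeping makes the full compressed SPT algebra isomorphic to $\fB(I)$: the SPT boundary algebra is \emph{all} $G$-equivariant operators on the $|I|$ boundary vertices, i.e.\ $\End_{\Rep(G)}((\bbC^G)^{\otimes |I|})$ of dimension $|G|^{2|I|-1}$, while the rough Quantum Double boundary algebra is $\End_{\Hilb(G)}(\bbC[G]^{\otimes(|I|-1)})$ of dimension $|G|^{2|I|-3}$; so the statement your plan would prove is false as stated in that case. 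The paper's resolution is an actual restriction, not a relabeling: the $\Gamma$-conjugates of the symmetric generators $R^g_v$ at the two extreme boundary vertices fail to be supported in $\Lambda$, so one passes to the distinguished subalgebra generated by the remaining $R^g_v$ and the $Z^k_{i,j}$, and then uses the $G$-grading induced by the $Z^g_{t,b}$ operators to identify it with $\End_{\Hilb(G)}(\bbC[G]^{\otimes(|I|-1)})$ (Corollaries \ref{cor:SPTBoundaryAlg} and \ref{cor:SPTBoundaryAlg2}); the clean isomorphism with the full equivariant algebra holds only for the smooth Quantum Double cut. Separately, your reduction of a nontrivial $G$-SPT to the trivial one by conjugating with a finite-depth entangler is not justified: the entangler is not a symmetric circuit, it distorts supports near the cut, and it is not shown to intertwine the two gauging maps; the paper instead observes that the same gauging argument applies verbatim to a nontrivial cocycle, producing the twisted Quantum Double, whose boundary algebra is then computed by the same method rather than pulled back from the untwisted case.
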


Unlike long range topological orders, it is known how to uniquely identify the bulk bosonic SPT order from the boundary theory in (2+1)D \cite{PhysRevB.90.235137,PhysRevB.104.115156}. 
In fact, it is known how to compute the bulk order from certain operators in the boundary algebra specifically \cite{PhysRevB.104.115156}. 
It is therefore reasonable to anticipate our result: one can identify the bulk phase of matter of the Quantum Double, a dual theory of the trivial SPT, from the boundary algebra of the SPT.

Finally, in order to demonstrate how to apply our methods in three spatial dimensions, we compute the boundary algebra of the 3D Quantum Double where the underlying group is abelian. 
We believe that our arguments may be extended to any dimension.

\subsection*{Acknowledgements}
This article is the undergraduate research project of Mario Tomba and Shuqi Wei, who were supervised by Brett Hungar, Daniel Wallick, Chian Yeong Chuah, Kyle Kawagoe, and David Penneys during Summer 2023,
supported by Penneys’ NSF grants
DMS CAREER 1654159 and DMS 2154389.
The authors would like to thank
Corey Jones,
Junhwi Lim,
and Pieter Naaijkens
for helpful conversations.

\section{Local topological order axioms and boundary algebras}
\label{sec:LTO}

In this section, we give a brief overview of the local topological order axioms given in \cite{2307.12552}, simplified to the $(2+1)D$ spin system setting.  
For more detail we refer to reader to Sections 2.1 and 2.2 of that paper.  

Let $\cL \subseteq \bbR^2$ be a 2D lattice, where each site $\ell\in\cL$ carries $\bbC^d$ spins.
For simplicity, we assume $\cL$ is either $\bbZ^2\subseteq \bbR^2$ or the 2D edge lattice.
The \emph{quasi-local algebra} $\fA$ is the UHF (uniformly hyperfinite) infinite tensor product $\rm C^*$-algebra $\bigotimes_{\ell\in\cL} M_d(\bbC)$.
For a contractible bounded region $\Lambda \subseteq \cL$, which we usually assume to be a rectangle, we define the \emph{local operators} $\fA(\Lambda)\coloneqq\bigotimes_{\ell\in\Lambda}M_d(\bbC)$.
The assignment $\Lambda\mapsto \fA(\Lambda)$ forms a \emph{net of algebras} satisfying the following axioms:
\begin{itemize}
\item 
$\fA(\emptyset) = \bbC 1_{\fA}$,
\item 
If $\Lambda \subseteq \Delta$, then $\fA(\Lambda) \subseteq \fA(\Delta)$,
\item 
If $\Lambda_1 \cap \Lambda_2 = \emptyset$, then $[\fA(\Lambda_1), \fA(\Lambda_2)] = 0$,
\item 
\(\overline{\bigcup \fA(\Lambda)}^{\| \cdot \|} = \fA\).
\end{itemize}

Note that there is a canonical action of $\bbZ^2$ on $\fA$ by translation.
For $g \in \bbZ^2$, $g \cdot \fA(\Lambda) = \fA(g + \Lambda)$.

\begin{defn}[Net of projections]
A \emph{net of projections} is an assignment of an orthogonal projector $p_\Lambda\in \fA(\Lambda)$ satisfying
$p_\Delta \leq p_\Lambda$ ($p_\Delta p_\Lambda p_\Delta = p_\Delta$) if $\Lambda \subseteq \Delta$.  

All nets of projections in this article are assumed to be \emph{translation invariant}, i.e., 
$g \cdot p_\Delta = p_{g + \Delta}$ for all $g \in \bbZ^2$.  
\end{defn}

\begin{rem}
A net of projections is only required to assign projections to rectangles which are \emph{sufficiently large}, meaning that there is a global constant $r > 0$ such that $\Lambda$ contains an $r \times r$ square.  
If a rectangle $\Lambda$ is not sufficiently large, we set $p_\Lambda \coloneqq 1$, which extends $p$ to a well-defined net of projections on all rectangles.

When $\cL$ is the edge lattice, we count $r$ by projecting sites down to the $1D$ slices in the $x$ and $y$ direction.
For example, the following rectangles are sufficiently large for the specified $r$ values below.
$$
\tikzmath{
\draw[scale=.5] (-2.5,-2.5) grid (2.5,2.5);
\foreach \x in {-.75,-.25,.25,.75}{
\foreach \y in {-1,-.5,...,1}{
\filldraw (\x,\y) circle (.05cm);
}}
\foreach  \x in {-1,-.5,...,1}{
\foreach \y in {-.75,-.25,.25,.75}{
\filldraw (\x,\y) circle (.05cm);
}}
\draw[thick, blue, rounded corners=5pt] (-.875,-.875) rectangle (.875,.875);
\node at (0, -1.5) {$r\leq 7$};
}
\quad
\tikzmath{
\draw[scale=.5] (-1.9,-1.9) grid (2.5,2.5);
\foreach \x in {-.75,-.25,.25,.75}{
\foreach \y in {-.5,0,.5,1}{
\filldraw (\x,\y) circle (.05cm);
}}
\foreach \x in {-.5,0,.5,1}{
\foreach \y in {-.75,-.25,.25,.75}{
\filldraw (\x,\y) circle (.05cm);
}}
\draw[thick, blue, rounded corners=5pt] (-.625,-.625) rectangle (.875,.875);
\node at (0, -1.5) {$r\leq 6$};
}
\quad
\tikzmath{
\draw[scale=.5] (-1.9,-1.9) grid (1.9,1.9);
\foreach \x in {-.75,-.25,.25,.75}{
\foreach \y in {-.5,0,.5}{
\filldraw (\x,\y) circle (.05cm);
}}
\foreach \x in {-.5,0,.5}{
\foreach \y in {-.75,-.25,.25,.75}{
\filldraw (\x,\y) circle (.05cm);
}}
\draw[thick, blue, rounded corners=5pt] (-.625,-.625) rectangle (.625,.625);
\node at (0, -1.5) {$r\leq 5$};
}
$$

For our example, we work with the edge lattice and $r = 3$.  
\end{rem}

\begin{defn}[Surrounding regions] \label{defn:surrounding-regions}
Let $\Lambda$ and $\Delta$ be rectangles in $\cL$ with $\Lambda \subset \Delta$.
Fix a \emph{surrounding constant} $s>0$.
We say: 
\begin{itemize}
\item 
$\Lambda \subset_s \Delta$ if every lattice point $\ell\in \Delta\setminus \Lambda$ is contained in an $s\times s$ rectangle contained entirely in $\Delta\setminus \Lambda$.
$$
\tikzmath{
\draw[scale=.5] (-3.5,-1.5) grid (3.5,1.5);
\foreach \x in {-1.25,-.75,...,1.25}{
\foreach \y in {-.5,0,.5}{
\filldraw (\x,\y) circle (.05cm);
}}
\foreach \x in {-1.5,-1,...,1.5}{
\foreach \y in {-.25,.25}{
\filldraw (\x,\y) circle (.05cm);
}}
\draw[thick, blue, rounded corners=5pt] (-.375,-.35) rectangle (.375,.35);
\draw[thick, red, rounded corners=5pt] (-1.375,-.4) rectangle (1.375,.4);
}
\qquad
\begin{aligned}
\cL&=\text{edge lattice}
\\
r&\leq 3
\\
s&\leq 3
\\
\Lambda &\text{ is \textcolor{blue}{blue}}
\\
\Delta &\text{ is \textcolor{red}{red}}
\end{aligned}
$$
In the cartoon above, $\partial \Lambda \cap \partial \Delta$ consists of two disjoint intervals on the top and bottom, each consisting of one point.
\item
$\Lambda$ is \emph{completely $s$-surrounded} by $\Delta$, denoted $\Lambda \ll_s \Delta$, if $\Lambda \subset_s \Delta$ and $\partial \Lambda \cap \partial \Delta = \emptyset$.
$$
\tikzmath{
\draw[scale=.5] (-2.5,-2.5) grid (2.5,2.5);
\foreach \x in {-.75,-.25,.25,.75}{
\foreach \y in {-1,-.5,...,1}{
\filldraw (\x,\y) circle (.05cm);
}}
\foreach \x in {-1,-.5,...,1}{
\foreach \y in {-.75,-.25,.25,.75}{
\filldraw (\x,\y) circle (.05cm);
}}
\draw[thick, blue, rounded corners=5pt] (-.375,-.375) rectangle (.375,.375);
\draw[thick, red, rounded corners=5pt] (-.875,-.875) rectangle (.875,.875);
}
\qquad
\begin{aligned}
\cL&=\text{edge lattice}
\\
r&\leq 3
\\
s&\leq 2 
\\
\Lambda &\text{ is \textcolor{blue}{blue}}
\\
\Delta &\text{ is \textcolor{red}{red}}
\end{aligned}
$$
\item
$\Lambda$ is \emph{$($incompletely$)$ $s$-surrounded} by $\Delta$, denoted $\Lambda \Subset_s \Delta$, if $\Lambda \subset_s \Delta$ and $\partial \Lambda \cap \partial \Delta$ is a non-empty 1D interval in $\cL$, which lies on exactly 1 side of $\Lambda$ and $\Delta$.
$$
\tikzmath{
\draw[scale=.5] (-2.5,-2.5) grid (1.5,2.5);
\foreach \x in {-.75,-.25,.25}{
\foreach \y in {-1,-.5,...,1}{
\filldraw (\x,\y) circle (.05cm);
}}
\foreach \x in {-1,-.5,0,.5}{
\foreach \y in {-.75,-.25,.25,.75}{
\filldraw (\x,\y) circle (.05cm);
}}
\draw[thick, blue, rounded corners=5pt] (-.375,-.375) rectangle (.35,.375);
\draw[thick, red, rounded corners=5pt] (-.875,-.875) rectangle (.4,.875);
}
\qquad
\begin{aligned}
\cL&=\text{edge lattice}
\\
r&\leq 3
\\
s&\leq 2 
\\
\Lambda &\text{ is \textcolor{blue}{blue}}
\\
\Delta &\text{ is \textcolor{red}{red}}
\end{aligned}
$$
In the cartoon above, $\partial \Lambda \cap \partial \Delta$ is the single interval consisting of one point on the right hand side.
\end{itemize}
\end{defn}

We now choose a half-plane $\bbH\subseteq \bbZ^2$ whose boundary $\partial \bbH$ intersects $\cL$ in a 1D sublattice $\partial\bbH\cong \bbZ$.
We identify $\partial\bbH=\bbZ$ below.

\begin{defn}[Boundary algebras]
\label{defn:BoundaryAlgebras}
For an interval $I\subseteq \partial\bbH= \bbZ$, 
we define:
\begin{itemize}
\item
$\Lambda_I\subseteq \bbH$ is the smallest sufficiently large rectangle with $\partial \Lambda_I \cap \partial\bbH = I$,
and
\item
$\Delta_I\subseteq \bbH$ is the smallest rectangle such that $\Lambda_I \Subset_s \Delta_I$ and $\partial \Lambda_I \cap \partial \Delta_I = I$.
\end{itemize}
The \emph{boundary algebra} $\fB(I)$ is the $\rm C^*$-algebra consisting of all operators of the form $xp_{\Delta_I}$
such that
\begin{itemize}
    \item $x \in p_{\Lambda_I} \fA(\Lambda_I) p_{\Lambda_I}$ and
    \item
    $x p_{\Delta} = p_{\Delta}x$ whenever  $\Lambda_I\Subset_s \Delta$ with $\partial \Lambda_I \cap \partial \Delta = I$.
\end{itemize}
\end{defn}

We provide local topological order axioms that are equivalent to the ones given in \cite{2307.12552}.

\begin{defn}[Local topological order]
\label{defn:LTOaxioms}
We say that our net of projections has \emph{local topological order} if the following hold:
\begin{enumerate}[label=(LTO\arabic*)]
\item 
\label{LTO:CompletelySurrounds}
If $\Lambda \ll_s \Delta$, then 
\[
p_\Delta \fA(\Lambda) p_\Delta
=
\bbC p_\Delta,
\]

\item 
\label{LTO:BoundaryAlgebras}
If $\Lambda \Subset_s \Delta\subset \bbH$ with $\partial \Lambda \cap \partial \Delta = I \subset \partial \bbH$ and $I\neq \emptyset$, then 
\[
p_\Delta \fA(\Lambda) p_\Delta
=
\fB(I) p_\Delta,
\]

\item 
\label{LTO:Injectivity}
If $\Lambda_I \Subset_s \Delta\subset\bbH$ with $\partial \Delta \cap \partial \Lambda_I = I\subset \partial\bbH$ and $I\neq \emptyset$, then for $x \in \fB(I)$, $x p_\Delta = 0$ implies $x = 0$.  
\end{enumerate}
\end{defn}

\begin{construction}
\label{const:BoundaryAlgebra}
Given a net of projections satisfying the local topological order axioms, the algebras $\fB(I)$ form a net of algebras.  
In particular, we have that if $I \subseteq J$, then we have an inclusion $\fB(I) \hookrightarrow \fB(J)$ given by $x \mapsto x p_{\Delta_I}$.
This map is injective by \ref{LTO:Injectivity}.  
Furthermore, $I \mapsto \fB(I)$ forms an inductive system, so we may take the inductive limit $\fB \coloneqq \varinjlim_I \fB(I)$.  
One can also show that if $I \cap J = \emptyset$, then $[\fB(I), \fB(J)] = 0$.
Hence $I \mapsto \fB(I)$ forms a net of algebras as defined in \cite{2304.00068, 2307.12552}.  
\end{construction}

Given a net of projections satisfying \ref{LTO:CompletelySurrounds}, we have a pure state $\psi \colon \fA \to \bbC$ given as follows: for $x \in \fA(\Lambda)$, we have that $\psi(x)$ is the scalar given by 
\[
p_\Delta x p_\Delta = \psi(x) p_\Delta,
\]
where $\Delta$ is any rectangle satisfying that $\Lambda \ll_s \Delta$.  
The fact that $\psi$ is well-defined and is a pure state is shown in \cite{2307.12552}.  
Importantly, $\psi(p_\Lambda) = 1$ for any rectangle $\Lambda$, which implies that $\psi(xp_\Lambda) = \psi(x)$ for all $x \in \fA$.
Furthermore, $\psi$ is translation invariant assuming that the net of projections is as well.  
We also have that $\psi$ extends to a state $\psi_\fB$ on $\fB$, given for $x \in \fB(I)$ by $\psi_\fB(x) \coloneqq \psi(x)$.
Note that $\psi_\fB$ is well-defined since $\psi(xp_\Lambda) = \psi(x)$ for all $x \in \fA$.

Now, consider the half plane $\bbH$ used in Definition \ref{defn:BoundaryAlgebras}, and let $\fA_{\bbH}$ be the $\mathrm{C}^*$-algebra generated by $\fA(\Lambda)$ for $\Lambda \subseteq \bbH$.
We then have a quantum channel $\bbE \colon \fA_{\bbH} \to \fB$ given for $x \in \fA(\Lambda)$ by 
\[
p_\Delta x p_\Delta = \bbE(x)p_\Delta,
\]
where $\Delta$ is any rectangle satisfying that $\Lambda \Subset_s \Delta$ with $\partial \Lambda \cap \partial \Delta \neq \emptyset$.  
The fact that $\bbE$ is a well-defined unital completely positive map is proven in \cite{2307.12552}.
Furthermore, $\bbE(x) = x$ if $x \in \fB$.

\section{Boundary algebras of the Kitaev Quantum Double model} \label{sec: Boundary Algebra}

\subsection{Quantum double model} \label{subsec:qd}

In this section we give a brief overview of the Quantum Double model originally defined in \cite{MR1951039}.  
Consider a finite group $G$. 
Given a square 2D edge lattice $\cL$, we associate $\bbC^G$-spins to each edge of the lattice.
The computational orthonormal basis is given by $\set{|g\rangle}{g\in G}$.
Observe that $\set{L_g P_h}{g,h\in G}$ forms a system of matrix units for $B(\bbC^G)$ 
where 
$L_g$ is left translation by $g$ and
$P_h$ is the minimal projection onto $|h\rangle$:
$$
L_g|k \rangle \coloneqq |gk\rangle
\qquad\qquad
P_h|k\rangle \coloneqq \delta_{h=k}|k\rangle.
$$
The operators $L_g,P_h$ satisfy the following \emph{exchange relation}:
$$
L_g P_h = | gh \rangle \langle h | = P_{gh} L_{g}.
$$
Since the action of $G$ on itself is transitive, for each $g,h\in G$, there is a unique $k\in G$ such that $L_k P_g = P_h L_k$; this $k$ is necessarily equal to $hg^{-1}$.

Instead of left translation operators, we can also use the right translation operators $R_g$ given by 
$$
R_g |k\rangle = |kg\rangle.
$$
We have the similar exchange relation
$$
R_g P_h = | hg \rangle \langle h | = P_{hg} R_{g}.
$$

\begin{nota}
When only discussing a single site with $\bbC^G$-spins, we write our operators as $P_g,L_g,R_g$.
When there are multiple sites, we indicate the location of operators by either drawing pictures with sites labelled by these operators, or we write $P^g_j, L^g_j, R^g_j$ for the operators $P_g,L_g,R_g$ respectively acting at site $j$.
\end{nota}

We now define the interaction terms for our Hamiltonian.  
The only nonzero interaction terms correspond to stars and plaquettes in the edge lattice, illustrated below: 
\[
\tikzmath{
\draw[scale=.5] (-2.5,-2.5) grid (2.5,2.5);
\foreach \x in {-.75,-.25,.25,.75}{
\foreach \y in {-1,-.5,...,1}{
\filldraw (\x,\y) circle (.05cm);
}}
\foreach  \x in {-1,-.5,...,1}{
\foreach \y in {-.75,-.25,.25,.75}{
\filldraw (\x,\y) circle (.05cm);
}}
\draw[thick, blue] (0,0) rectangle (.5,-.5);
\node[blue] at (.25,-.25) {$\scriptstyle p$};
\filldraw[blue] (0,-.25) circle (.05cm);
\filldraw[blue] (.25,0) circle (.05cm);
\filldraw[blue] (.5,-.25) circle (.05cm);
\filldraw[blue] (.25,-.5) circle (.05cm);
\draw[thick, red] (-1,.5) -- (0,.5);
\draw[thick, red] (-.5,0) -- (-.5,1);
\node[red] at (-.6,.4) {$\scriptstyle s$};
\filldraw[red] (-.75,.5) circle (.05cm);
\filldraw[red] (-.5,.25) circle (.05cm);
\filldraw[red] (-.25,.5) circle (.05cm);
\filldraw[red] (-.5,.75) circle (.05cm);
}
\]
The star term $A_s$ is the projection
$$
A_s 
\coloneqq 
\sum_{gh=\ell k}
\tikzmath{
\draw (-1,0) -- (1,0);
\draw (0,-1) -- (0,1);
\node at (-.1,-.1) {$\scriptstyle s$};
\filldraw (-.5,0) node[above]{$\scriptstyle P_g$} circle (.05cm);
\filldraw (0,-.5) node[right]{$\scriptstyle P_h$} circle (.05cm);
\filldraw (.5,0) node[above]{$\scriptstyle P_k$} circle (.05cm);
\filldraw (0,.5) node[right]{$\scriptstyle P_\ell$} circle (.05cm);
}
$$
and the plaquette term $B_p\coloneqq \frac{1}{|G|}\sum_{g\in G}B^{(g)}_p$ is the average of the translation operators
\begin{equation}
\label{eq:QDBp}
B_p^{(g)} \coloneqq
\tikzmath{
\draw (-.5, -.5) rectangle (.5, .5);
\filldraw (-.5, 0) node[left]{$\scriptstyle R_{g^{-1}}$} circle (0.05cm);
\filldraw (.5, 0) node[right]{$\scriptstyle L_{g}$} circle (0.05cm);
\filldraw (0, .5) node[above]{$\scriptstyle R_{g^{-1}}$} circle (0.05cm);
\filldraw (0, -.5) node[below]{$\scriptstyle L_g$} circle (0.05cm);
\node at (0, 0) {$\scriptstyle p$};
}
\,.
\end{equation}
We observe that $\left[A_s, B_p^{(g)}\right] = 0$ for all $s, p \subset \Lambda$ and $g \in G$.  
For a rectangle $\Lambda$, our local commuting projector Hamiltonian is 
\[
H_\Lambda
\coloneqq
\sum_{s \subset \Lambda} (I - A_s) + \sum_{p \subset \Lambda} (I - B_p),
\]
and the projection onto the local ground state space is
\[
p_{\Lambda} \coloneqq \prod_{s\subset \Lambda}A_s \prod_{p\subset \Lambda}B_p.
\]
Observe that $p_\Lambda$ absorbs all $A_s$ and $B_p^{(g)}$ for $s,p\subset \Lambda$ and $g \in G$.

\begin{defn}
Let $\Lambda \subseteq \cL$ be a rectangle, and let $c \colon \Lambda \to G$.
Note that we identify $\Lambda$ with the set of edges contained in $\Lambda$; we will continue to do this without further comment.
We define the operators 
$L_c\coloneqq\bigotimes_{\ell\in \Lambda} L^{c(\ell)}_\ell$
and 
$P_c\coloneqq\bigotimes_{\ell\in \Lambda} P^{c(\ell)}_\ell$.
Observe that every $x\in \fA(\Lambda)$ can be written as a linear combination of operators of the form $L_{c_1}P_{c_2}$.

We say $c\colon \Lambda \to G$ is \emph{flat} if $P_c=A_sP_cA_s$ for all stars $s\subset \Lambda$, and we call $P_c$ a \emph{flat operator}.
\end{defn}

\begin{example}
$\tikzmath{
\draw (-1,0) -- (1,0);
\draw (0,-1) -- (0,1);
\node at (-.1,-.1) {$\scriptstyle s$};
\filldraw (-.5,0) node[above]{$\scriptstyle P_g$} circle (.05cm);
\filldraw (0,-.5) node[right]{$\scriptstyle P_h$} circle (.05cm);
\filldraw (.5,0) node[above]{$\scriptstyle P_k$} circle (.05cm);
\filldraw (0,.5) node[right]{$\scriptstyle P_\ell$} circle (.05cm);
}$
is flat iff $gh=\ell k$.
\end{example}

\subsection{Local topological order axioms}

We now prove that the LTO axioms in Definition \ref{defn:LTOaxioms} hold for our model, where the surrounding constant is $s=2$.  
To ease the notation, we write $\ll$ and $\Subset$ to denote $\ll_2$ and $\Subset_2$.  

The version of \ref{LTO:CompletelySurrounds} originally due to \cite{MR2742836} was proven in \cite{Cui2020kitaevsquantum}.
Another proof can be adapted from \cite[Thm.~12.1.3 and Lem.~12.1.2]{NaaijkensThesis}; we rapidly recall this latter strategy.

\begin{thm}[\cite{Cui2020kitaevsquantum,NaaijkensThesis}] 
\label{thm:LTO1QuantumDouble}
The axiom \ref{LTO:CompletelySurrounds} holds for Kitaev's Quantum Double model; i.e., if $\Lambda \ll \Delta$, then $p_\Delta \fA(\Lambda) p_\Delta = \bbC p_\Delta$. 
\end{thm}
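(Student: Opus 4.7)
The plan is to use linearity to reduce the claim to a spanning set of $\fA(\Lambda)$, for which we take products of the form $L_c P_{c'}$ with $c, c' \colon \Lambda \to G$; after applying the exchange relation $L_g P_h = P_{gh} L_g$, we may normalize these as $P_{c_1} L_{c_2}$. The key geometric input is that $\Lambda \ll \Delta$ (with surrounding constant $s=2$) provides a full collar of stars and plaquettes inside $\Delta$ incident to every edge on the boundary of $\Lambda$; the corresponding $A_s$ and $B_p$ are absorbed into $p_\Delta$ on either side and become tools for symmetrizing the sandwiched operator.

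For the $P_{c_1}$ factor, I would use the average $A_s = \frac{1}{|G|} \sum_g A_s^g$ together with the conjugation identity $A_s^g P_h^\ell (A_s^g)^{-1} = P_{gh}^\ell$ (or $P_{hg^{-1}}^\ell$, depending on edge orientation), which follows directly from the exchange relation. For each star $s$ in the collar, inserting $A_s$ on both sides replaces the relevant $P$-labels with their orbit average under the gauge action. Because $\Lambda \ll \Delta$ guarantees every star touching $\Lambda$ is in $\Delta$, iterating the procedure collapses $p_\Delta P_{c_1} p_\Delta$ to a scalar multiple of $p_\Delta$, with that scalar equal to the probability that a uniformly random flat configuration matches $c_1$.

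For the $L_{c_2}$ factor, I would use the dual argument with the plaquette projections $B_p = \frac{1}{|G|} \sum_g B_p^{(g)}$. Each $B_p^{(g)}$ is a loop of $L$- and $R$-operators around the plaquette, so adjacent plaquette operators can be combined to deform the support of an $L$-string by passing it through a plaquette, at the cost of a $\frac{1}{|G|}$-factor per step. Using the plaquette collar, any $L$-string inside $\Lambda$ can be iteratively deformed across the collar and, since it is contractible in $\Delta$, fully collapsed via plaquette products, leaving a scalar multiple of $p_\Delta$. Combining with the $P$-analysis gives $p_\Delta L_c P_{c'} p_\Delta \in \bbC p_\Delta$ for every basis element.

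The main obstacle is the bookkeeping around the exchange relation: one must verify that moving $L$'s past $P$'s in the normal-form step does not disrupt the subsequent averaging, and orientation conventions must be tracked carefully (since $A_s^g$ uses $L_g$ or $L_{g^{-1}}$ depending on whether the edge enters or exits the vertex, and similarly for $B_p^{(g)}$). The reason $\ll$ is needed rather than merely $\Subset$ is topological: it ensures there are no leftover exterior boundary stars or plaquettes, and that every closed loop of $L$-operators in $\Lambda$ is contractible inside $\Delta$. This is exactly what fails in \ref{LTO:BoundaryAlgebras}, where the interval $I = \partial \Lambda \cap \partial \Delta \neq \emptyset$ supports noncontractible string operators along $I$ which survive the projection and generate the nontrivial boundary algebra $\fB(I)$.
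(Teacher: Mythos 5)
Your treatment of the $L$-factor is essentially the paper's Step 1: multiply by well-chosen plaquette translations $B_p^{(g)}$ (which $p_\Delta$ absorbs) to cancel the $L$-string, with the stars forcing trivial $L$-labels on edges sticking out of $\Lambda$. But your treatment of the $P$-factor contains a genuine gap: it rests on writing $A_s=\frac{1}{|G|}\sum_g A_s^g$ with unitaries satisfying $A_s^g P_h^\ell (A_s^g)^{-1}=P_{gh}^\ell$ that are absorbed into $p_\Delta$. In the conventions of this paper no such operators are available. Here $A_s$ is the \emph{diagonal} flatness projector $\sum_{gh=\ell k}P_gP_hP_kP_\ell$, which commutes with every $P_h^\ell$, and the natural candidates for $A_s^g$ (loops of translation operators around a star, i.e.\ the gauge unitaries of the dual convention) are \emph{not} absorbed by $p_\Delta$: already for $G=\bbZ/2$ a star of $X$'s anticommutes with the neighboring diagonal star terms, so inserting it changes the sandwich. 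You have in effect mixed two conventions — you treat the star term as an average of gauge unitaries \emph{and} the plaquette term as an average of translation loops, whereas in any consistent presentation of the model exactly one of the two is an average of non-diagonal unitaries and the other is the diagonal flatness constraint. So the step ``iterating over the collar of stars averages the $P$-labels'' does not go through as written.

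The repair is what the paper does in its Step 2: the transitivity on flat $P$-configurations is implemented by conjugation with the \emph{plaquette} translations $B_p^{(g)}$, $p\subset\Delta$, using $R_gP_hR_{g^{-1}}=P_{hg}$ and $L_gP_hL_{g^{-1}}=P_{gh}$, swept column by column so that flatness is preserved and every flat $f_1$ is carried to any other flat $f_2$ inside $\Delta$ (this is where $\Lambda\ll\Delta$ supplies the collar of plaquettes, including those needed to adjust the boundary edges of $\Lambda$). Combined with the facts that non-flat $P_{c}$ are annihilated by the $A_s$'s and that $\sum_{\text{flat }f}P_f=\prod_{s\subset\Lambda}A_s$ is absorbed by $p_\Delta$, one gets $p_\Delta P_{c}p_\Delta=(\#\{\text{flat }c\})^{-1}p_\Delta$ for flat $c$, which is the scalar you describe probabilistically. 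Note also that no $\frac{1}{|G|}$ cost per deformation step is incurred in the $L$-part: one multiplies by the unitaries $B_p^{(g)}$ themselves, not the averaged projectors, and for nonabelian $G$ the ``contractible loop'' picture needs the explicit relation $R_gL_hP_k=L_{hkgk^{-1}}P_k$ to track the conjugated labels, as in the paper's cancellation algorithm.
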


The second proof proceeds in the following steps.
First, we may assume that $x\in \fA(\Lambda)$ is of the form $L_{c_1}P_{c_2}$ for $c_1,c_2\colon \Lambda\to G$.
\begin{enumerate}[label=\underline{Step \arabic*:}]
\item 
If $p_\Delta xp_\Delta \neq 0$, then $p_\Delta xp_\Delta=p_\Delta P_{c_2}p_\Delta$, and $c_2$ is necessarily flat.
\item
Whenever $f_1,f_2\colon\Lambda\to G$ are flat, 
$ p_\Delta P_{f_1} p_\Delta= p_\Delta P_{f_2} p_\Delta$.
\end{enumerate}
Since $\sum_{\text{flat }f}P_f = \prod_{s\subset \Lambda} A_s$, $p_{\Delta}\sum_{f\text{ flat}}P_fp_{\Delta} = p_{\Delta}$, so we can conclude from the above two steps that
$$
p_\Delta xp_\Delta
=
p_\Delta P_{c_2} p_\Delta
=
\frac{1}{\text{number of flat $c$ on $\Lambda$}}\cdot p_\Delta.
$$

Since we will modify the above steps to prove \ref{LTO:BoundaryAlgebras} below, we include complete proofs of the above steps.

\begin{proof}[Proof of Step 1]
The strategy has two parts.
First, we use the plaquette translation operators $B^{(g)}_p$ for plaquettes $p \subset \Lambda$ to cancel off the $L_{c_1}$ term, i.e.,
$B^{(g_1)}_{p_1}\cdots B^{(g_n)}_{p_n} L_{c_1}P_{c_2}=P_{c_2}$.
Since $p_\Delta$ absorbs every $B^{(g)}_p$, this proves that 
\begin{align*}
p_\Delta L_{c_1}P_{c_2}p_\Delta 
&=
p_\Delta B^{(g_1)}_{p_1}\cdots B^{(g_n)}_{p_n} L_{c_1}P_{c_2}p_\Delta
\\&=
p_\Delta P_{c_2}p_\Delta.
\end{align*}
Second, the only way $p_\Delta P_{c_2}p_\Delta\neq 0$ is if $c_2$ is flat, as $A_s P_{c_2} A_s \neq 0$ for all $s \subset \Delta$ if and only if $c_2$ is flat, and $p_\Delta = \prod_{s \subset \Delta} A_s \prod_{p \subset \Delta} B_p$.

The cancellation algorithm proceeds as follows.
Without loss of generality, we will assume that $\Lambda$ has the form shown below. 
\[
\tikzmath{
\draw[scale = .7] (-2.5,-2.5) grid (2.5, 2.5);
\foreach \x in {-1.05, -.35, ..., 1.05}
{
\foreach \y in {-1.4, -.7, ..., 1.4}{
\filldraw (\x, \y) circle (0.05cm);
}
}
\foreach \y in {-1.05, -.35, ..., 1.05}
{
\foreach \x in {-1.4, -.7, ..., 1.4}{
\filldraw (\x, \y) circle (0.05cm);
}
}
\draw[thick, blue, rounded corners, scale=0.65] (-1.9, 1) -- (-1.9, -1.9) -- (1.9, -1.9) -- (1.9, 1.9) -- (-1.9, 1.9) -- (-1.9, 1);
\node[scale=.9][blue] at (1.6,.9) {$\Lambda$};
}
\]
We proceed from the left boundary of $\Lambda$. 
First, we may assume that $L^{c_1(\ell)}_\ell = I$ for all leftmost edges $\ell$, as otherwise $A_sxA_s = 0$, where $s \subset \Delta$ is the star that intersects $\Lambda$ at exactly the edge $\ell$.
We now consider the next column of smooth edges. 
If $\ell$ in this column is an outermost edge, then $L^{c_1(\ell)}_\ell = I$ for the same reason as above. 
If $\ell$ is not an outermost edge, then due to the relations
$$
R_gL_hP_k = L_{hkgk^{-1}}P_k ~\text{ and }~ L_gL_hP_k=L_{gh}P_k,
$$
we can apply a $B_p^{(g)}$ to the right of $l$ such that $B_p^{(g)}L_{c_1}P_{c_2} = L_{c_1'}P_{c_2}$, where $L_{c_1'}(\ell) = I$. We repeat this procedure for other non-outermost edges so that $L^{c_1'(\ell)}_\ell = I$ for all edges in this column. Note that $L^{c_1'(\ell)}_\ell = I$ for all edges in the next rough column, otherwise $A_sxA_s = 0$ for some $A_s$.

Moving left-to-right in this way, we eventually obtain that $B^{(g_1)}_{p_1}\cdots B^{(g_n)}_{p_n} L_{c_1}P_{c_2} = L_{c_1'}P_{c_2}$, where $L_{c_1'}(\ell) = I$ for all edges to the left of the rightmost smooth column. In particular, $L_{c_1'}$ is supported on at most two consecutive columns of edges.
We then have that $L_{c_1'} = I$ using the argument for the rough edges repeatedly, working from the outside in.
\end{proof}

\begin{proof}[Proof of Step 2] We will prove that $p_{\Delta}P_{f_1}p_{\Delta} = p_{\Delta}P_{f_2}p_{\Delta}$ for any two flat operators $P_{f_1}$ and $P_{f_2}$, and hence prove the theorem. Without loss of generality, we will assume that $\Lambda$ has the form shown below, where all four sides of $\Lambda$ have a smooth boundary.

\[
\tikzmath{
\draw[scale = .7] (-2.5,-2.5) grid (2.5, 2.5);
\foreach \x in {-1.05, -.35, ..., 1.05}
{
\foreach \y in {-1.4, -.7, ..., 1.4}{
\filldraw (\x, \y) circle (0.05cm);
}
}
\foreach \y in {-1.05, -.35, ..., 1.05}
{
\foreach \x in {-1.4, -.7, ..., 1.4}{
\filldraw (\x, \y) circle (0.05cm);
}
}
\draw[thick, blue, rounded corners, scale=0.65] (-2.35, 1) -- (-2.35, -2.35) -- (2.35, -2.35) -- (2.35, 2.35) -- (-2.35, 2.35) -- (-2.35, 1);
\node[scale=.9][blue] at (1.8,1.4) {$\Lambda$};
}
\]

To prove this, we first show that there exist $g_i \in G$ and $p_i \subset \Delta$ such that 
\[
\prod_{i = 1}^m B_{p_i}^{(g_i)} P_{f_1} \prod_{i = 1}^m B_{p_i}^{(g_i^{-1})} = P_{f_1'},
\]
where $P_{f_1'}$ and $P_{f_2}$ agree on all edges except for those on any smooth boundary of $\Lambda$. 
(Note that in the particular region $\Lambda$ depicted above, we will actually have $p_i\subset \Lambda$, but this need not be the case more generally.)
Observe that $f_1' \colon \Lambda \to G$ is necessarily flat since each $B_p^{(g)}$ commutes with every $A_s$ (and hence preserves the image of each $A_s$).  
Then, by applying more $B_p^{(g)}$ terms with $p \subset \Delta$, we will have that 
\[
\prod_{i = 1}^n B_{p_i}^{(g_i)} P_{f_1} \prod_{i = 1}^n B_{p_i}^{(g_i^{-1})} = P_{f_2}.
\]
Since $p_{\Delta}$ absorbs every $B_p^{(g)}$, this shows that $p_{\Delta}P_{f_1}p_{\Delta} = p_{\Delta}P_{f_2}p_{\Delta}$.

Let $P_{f_1}$ and $P_{f_2}$ be two flat operators.  
We fix the edges not along a smooth boundary by proceeding from the rightmost column of such edges. 
Due to the relations
\[
R_{g}P_hR_{g^{-1}} = P_{hg} ~\text{ and }~ L_{g}P_hL_{g^{-1}} = P_{gh},
\]
for each inner edge of this column, going from the top-most inner edge to the bottom-most, we apply some $B_p^{(g)}$ where $p$ is the face below it to obtain 
\[
\prod_{i = 1}^{k} B_{p_i}^{(g_i)}P_{f_1}\prod_{i = 1}^{k} B_{p_i}^{(g_i^{-1})} = P_{f_1'},
\]
where $P_{f_1'}$ agrees with $P_{f_2}$ on this column of edges and is a flat operator.

We move to the next column of edges. 
Due to the same exchange relations, for each edge, we apply a $B_p^{(g)}$ where $p$ is the face to the left of it to obtain 
\[
\prod_{i = 1}^{2k+1} B_{p_i}^{(g_i)}P_{f_1}\prod_{i = 1}^{2k+1} B_{p_i}^{(g_i^{-1})} = P_{f_1'},
\]
where $P_{f_1'}$ agrees with $P_{f_2}$ on this column of edges as well. Since $P_{f_1'}$ is a flat operator, it also agrees with $P_{f_2}$ on the next column of edges. We repeat this procedure by applying $B_p^{(g)}$ operators to the columns consisting of vertically-oriented edges, and we obtain that 
\[
\prod_{i = 1}^m B_{p_i}^{(g_i)}P_{f_1}\prod_{i = 1}^m B_{p_i}^{(g_i^{-1})} = P_{f_1'},
\]
where $P_{f_1'}$ and $P_{f_2}$ agree on all but the edges along any smooth boundary of $\Lambda$.

We now deal with the edges along the smooth boundaries of $\Lambda$.
For any such edge $\ell$, we apply some $B_p^{(g)}$ where the plaquette $p \subset \Delta$ intersects $\Lambda$ at exactly this edge.  
We then obtain $\prod_{i = 1}^n B_{p_i}^{(g_i)}P_{f_1}\prod_{i = 1}^n B_{p_i}^{(g_i^{-1})} = P_{f_2}$.
\end{proof}

We now prove \ref{LTO:BoundaryAlgebras}.  
In our proof, we will consider the case $\Lambda \Subset \Delta$ where $I \coloneqq \partial \Lambda \cap \partial \Delta$ is nonempty and vertical and $\Lambda$, $\Delta$ both lie to the left of $I$.  
Using the notation in Definition \ref{defn:BoundaryAlgebras}, we have that $\partial\bbH$ is vertical and $\bbH$ is the left half plane.  
The other cases can be handled similarly. 

Our proof of \ref{LTO:BoundaryAlgebras} will make use of the algebra $\fC(I)$, which we later show is isomorphic to $\fB(I)$.  
Let $\widetilde I \subseteq \Lambda$ be the set comprising the edges in $I$ and the column adjacent to $I$.
The algebra $\fC(I)$ if $I$ is a rough interval is given as below:
\[
\tikzmath{
\draw (1.5,.05) -- (1.5,3.7);
\foreach \y in {.75,1.5,2.25,3}{
    \draw (2.25,\y) -- (1.5,\y);
}
\draw[thick, red] (2.25,1.5) -- (1.5,1.5) -- (1.5,2.25) -- (2.25,2.25);
\foreach \x in {1.875}{
\foreach \y in {.75, 1.5, 2.25, 3}{
\filldraw (\x, \y) circle (0.05cm);
}}
\foreach \x in {1.5}{
\foreach \y in {.375, 1.125, 1.875, 2.625, 3.375}{
\filldraw (\x, \y) circle (0.05cm);
}}
\foreach \x in {1.875}{
\foreach \y in {1.5, 2.25}{
\filldraw[red] (\x,\y) circle (.05cm);
}}
\foreach \x in {1.5}{
\foreach \y in {1.875}{
\filldraw[red] (\x,\y) circle (.05cm);
}}
\foreach \x in {1.875}{
\foreach \y in {3}{
\filldraw[orange] (\x,\y) circle (.05cm);
}}
\node[red][scale = 0.8] at (1.15,1.875) {$R_{g^{-1}}$};
\node[red][scale = 0.8] at (1.875,1.3) {$L_g$};
\node[red][scale = 0.8] at (1.875,2.5) {$R_{g^{-1}}$};
\node[red] at (2.7,1.875) {$Q_p^{(g)}$};
\node[red] at (1.875,1.875) {$\scriptstyle p$};
\draw[thick, orange] (1.5,3) -- (2.25,3);
\node[orange] at (1.875,3.2) {$\scriptstyle P_g$};
\node[orange] at (2.7,3) {$P_\ell^{g}$};
\node at (2,-.5) {$\mathfrak{C}(I) \coloneqq
{ C^*}\set{P_\ell^{g_1},Q_p^{(g_2)}}{ \ell\subset I,p \subset \widetilde{I},g_1,g_2\in G}$};
\draw[thick, blue, rounded corners=5pt] (1,3.7) -- (2.25,3.7) -- (2.25,.05) -- (1,.05);
\node[blue] at (1.25,3.5) {$\Lambda$}; 
}\]

Next, we define $S_s^{(g)}$ to be the projection
$$
S_s^{(g)} 
\coloneqq 
\sum_{k\ell=hg}
\tikzmath{
\draw (-1,0) -- (0,0);
\draw[dotted] (0,0) -- (1,0);
\draw (0,-1) -- (0,1);
\node at (-.1,-.1) {$\scriptstyle s$};
\filldraw (-.5,0) node[above]{$\scriptstyle P_k$} circle (.05cm);
\filldraw (0,-.5) node[right]{$\scriptstyle P_\ell$} circle (.05cm);
\filldraw[fill=white] (.5,0) node[above]{$\scriptstyle P_g$} circle (.05cm);
\filldraw (0,.5) node[right]{$\scriptstyle P_h$} circle (.05cm);
}
$$
The dotted edge above represents a \emph{ghost edge}, which is not actually a part of $S_s^{(g)}$.
While this ghost edge would be a part of the corresponding $A_s$ operator, we delete it here to obtain a projection that only acts on 3 sites. 
The group element $g$ that would have labelled the ghost edge can be recovered from $h,k,\ell$ and the defining relation of the $A_s$ operator.
 
The algebra $\fC(I)$ for the smooth interval case is then given as below:
\[
\tikzmath{
\draw (2.25,.05) -- (2.25,3.7);
\foreach \y in {.75,1.5,2.25,3}{
    \draw (2.25,\y) -- (1.5,\y);
}
\foreach \x in {1.875}{
\foreach \y in {.75, 1.5, 2.25, 3}{
\filldraw (\x, \y) circle (0.05cm);
}}
\foreach \x in {2.25}{
\foreach \y in {.375, 1.125, 1.875, 2.625, 3.375}{
\filldraw (\x, \y) circle (0.05cm);
}}
\foreach  \x in {2.25}{
\foreach \y in {1.125}{
\filldraw[red] (\x,\y) circle (.05cm);
}}
\foreach  \x in {2.25}{
\foreach \y in {2.625, 3.375}{
\filldraw[orange] (\x,\y) circle (.05cm);
}}
\foreach  \x in {1.875}{
\foreach \y in {3}{
\filldraw[orange] (\x,\y) circle (.05cm);
}}
\draw[thick, red] (2.25,.75) -- (2.25,1.5);
\node[orange] at (1,3) {$\scriptstyle \sum\limits_{h^{-1}k\ell = g}$};
\node[orange] at (2.05,2.575) {$\scriptstyle P_{\ell}$};
\node[orange] at (2.05,3.375) {$\scriptstyle P_{h}$};
\node[orange] at (1.85,2.85) {$\scriptstyle P_{k}$};
\node[red] at (2,1.175) {$\scriptstyle R_g$};
\node[red] at (2.8,1.175) {$R_\ell^{g}$};
\draw[thick, orange] (1.5,3) -- (2.25,3);
\draw[thick, orange] (2.25,2.25) -- (2.25,3.7);
\node[orange] at (2.8,3) {$S_s^{(g)}$};
\node at (2,-.4) {$\mathfrak{C}(I) \coloneqq
{ C^*}\set{S_s^{(g_1)},R_\ell^{g_2}}{ \ell\subset I, s \subset \widetilde{I},g_1,g_2\in G}$};
\draw[thick, blue, rounded corners=5pt] (1.25,3.7) -- (2.35,3.7) -- (2.35,.05) -- (1.25,.05);
\node[blue] at (1.5,3.5) {$\Lambda$}; 
}
\]
The generators can be viewed as $A_s$ and $B_p^{(g)}$ operators restricted to the boundary.

\begin{thm} \label{thm:LTO2QuantumDouble} The axiom \ref{LTO:BoundaryAlgebras} holds for Kitaev's Quantum Double model; i.e., if $\Lambda \Subset \Delta$ with $\partial \Lambda \cap \partial \Delta = I \neq \emptyset$, then 
$p_\Delta \fA(\Lambda) p_\Delta = \fB(I) p_\Delta$.
\end{thm}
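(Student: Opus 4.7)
The plan is to introduce $\fC(I)$ as an intermediate algebra and establish the chain of equalities
\[
p_\Delta \fA(\Lambda) p_\Delta \;=\; \fC(I)\, p_\Delta \;=\; \fB(I)\, p_\Delta,
\]
where $\fC(I)$ is the explicitly presented algebra displayed above the theorem statement. I would treat the rough case in detail and indicate that the smooth case is dual. Throughout, I would write a typical $x\in\fA(\Lambda)$ as a linear combination of $L_{c_1}P_{c_2}$ and proceed monomial by monomial.

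For the first equality, I would adapt the two-step strategy used for Theorem~\ref{thm:LTO1QuantumDouble}. In the analogue of Step~1, I would run the plaquette cancellation algorithm, but since every $B_p^{(g)}$ with $p\subset\Delta$ is absorbed by $p_\Delta$, I would use only those plaquettes disjoint from $\widetilde I$ to sweep all $L$-operators across the lattice toward the boundary column $\widetilde I$. The $L$'s that cannot be fully cleared are precisely those on the horizontal edges landing in $\widetilde I$, and these reassemble into the truncated-plaquette projections $Q_p^{(g)}$ supported on three sites next to $I$. In the analogue of Step~2, I would modify the $P$-exchange argument: starting from the rightmost interior column and proceeding leftward, I would use $B_p^{(g)}P_hB_p^{(g^{-1})}=P_{\cdot}$ relations to move the $P_{c_2}$ data into a canonical flat form, stopping the algorithm when I hit the boundary column $\widetilde I$. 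The undetermined $P$-data that survives is exactly the assignment on $I$, which is encoded by the generators $P^{g}_\ell$. Both sweeps only use $B_p^{(g)}$ operators absorbed by $p_\Delta$, so modulo $p_\Delta$ we have rewritten $p_\Delta x p_\Delta$ as an element of $\fC(I)p_\Delta$. Conversely, each generator of $\fC(I)$ is already supported in $\fA(\Lambda)$, giving the reverse inclusion immediately.

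For the second equality $\fC(I)p_\Delta=\fB(I)p_\Delta$, I would first verify that every generator of $\fC(I)$ satisfies the defining properties of $\fB(I)$ in Definition~\ref{defn:BoundaryAlgebras}. Concretely, one checks that each $P^{g}_\ell$ and each $Q_p^{(g)}$ lies in $p_{\Lambda_I}\fA(\Lambda_I)p_{\Lambda_I}$ after multiplication by $p_{\Delta_I}$ (using that $p_{\Lambda_I}$ absorbs the relevant local $A_s$ and $B_p$), and that each such generator commutes with $p_{\Delta'}$ for every $\Delta'$ with $\Lambda_I\Subset_s\Delta'$ and $\partial\Lambda_I\cap\partial\Delta'=I$. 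Commutation with $p_{\Delta'}$ follows because the generators are built from $A_s$- and $B_p^{(g)}$-type pieces whose non-boundary extension lives in $\Delta'$, so they commute with every star and plaquette projector in $\Delta'$. This gives $\fC(I)p_\Delta\subseteq\fB(I)p_\Delta$. For the reverse inclusion, I would apply the first equality to an arbitrary $y=x p_{\Delta_I}\in\fB(I)$ with $x\in p_{\Lambda_I}\fA(\Lambda_I)p_{\Lambda_I}$: since $p_\Delta$ absorbs $p_{\Delta_I}$ we have $yp_\Delta=p_{\Lambda_I}xp_{\Lambda_I}p_\Delta=p_\Delta xp_\Delta\in\fC(I)p_\Delta$ after redistributing $p_\Delta$'s.

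The main obstacle I anticipate is the careful bookkeeping in the cancellation sweep near $\widetilde I$: unlike the bulk case of Theorem~\ref{thm:LTO1QuantumDouble}, one must track exactly which $L$- and $P$-factors along the boundary column fail to be eliminated, and verify that the residual operators reassemble precisely into the three-site generators $Q_p^{(g)}$ (resp.\ $S_s^{(g)}$ in the smooth case) rather than into stray operators outside $\fC(I)$. A subtle point is the role of the ``ghost edge'' in the smooth case: when truncating an $A_s$ across $I$, the summation index must be re-labeled so that the resulting $S_s^{(g)}$ is genuinely a projection on the three remaining sites, and one must confirm that the cancellation algorithm actually produces this specific combination. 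Handling this carefully, and then checking the algebra-theoretic containment $\fC(I)p_\Delta=\fB(I)p_\Delta$ in both directions, should complete the proof.
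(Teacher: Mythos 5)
Your proposal matches the paper's proof in essentially every respect: it introduces the same intermediate algebra $\fC(I)$, proves $p_\Delta\fA(\Lambda)p_\Delta=\fC(I)p_\Delta$ by the same two-step adaptation of the Theorem~\ref{thm:LTO1QuantumDouble} argument (sweep the $L$-part into the truncated plaquette operators $Q_p^{(g)}$ near $\widetilde I$, then use $B_p^{(g)}$-conjugation to identify all flat $P$-operators agreeing on $I$), and then identifies $\fC(I)p_{\Delta}$ with $\fB(I)p_\Delta$ via the same sandwich of inclusions using Definition~\ref{defn:BoundaryAlgebras}. The only quibble is cosmetic: the $Q_p^{(g)}$ are unitaries (truncated translation operators), not projections, though this does not affect your argument.
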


This proof proceeds in the following steps.  
First, we will show that $p_\Delta \fA(\Lambda) p_\Delta = \fC(I) p_\Delta$.  
This will prove the desired claim, since by how $\fB(I)$ is defined, we will have that
\[
\fC(I) p_{\Delta_I}
\subseteq
\fB(I)
\subseteq
p_{\Delta_I} \fA(\Lambda_I) p_{\Delta_I}
=
\fC(I)p_{p_{\Delta_I}}.
\]

We will prove the case where $I$ is a rough interval since the proof of the smooth interval case is analogous.
Note that as before, we may assume that $x\in \fA(\Lambda)$ is of the form $L_{c_1}P_{c_2}$ for $c_1,c_2\colon \Lambda\rightarrow G$.
We then have the following two steps, analogous to the two steps in the proof of Theorem \ref{thm:LTO1QuantumDouble}.

\begin{enumerate}[label=\underline{Step \arabic*:}]
\item If $p_{\Delta}xp_{\Delta}\neq 0$, then 
\[
p_{\Delta}xp_{\Delta} = \prod_{i = 1}^n Q_{p_i}^{(g_i)}p_{\Delta} P_{c_2}p_{\Delta}
\]
for some $Q_{p_i}^{(g_i)}$, and $c_2$ is necessarily flat. 
\item Let $S$ be the set of $c\colon\Lambda\rightarrow G$ such that $c$ is flat and $c(\ell) = c_2(\ell)$ for all $\ell\in I$. Then for all $c_3\in S$, we have that $p_{\Delta}P_{c_3}p_{\Delta} = p_{\Delta}P_{c_2}p_{\Delta}$.
\end{enumerate}

Note that since 
\[
\sum_{c\in S}P_c = \prod_{s\subset \Lambda}A_s\prod_{\ell \in I} P_{\ell}^{c(\ell)}
\]
we can conclude from the above two steps that
$$
p_\Delta xp_\Delta
=
\frac{1}{|S|}\prod_{i = 1}^n Q_{p_i}^{(g_i)}\prod_{\ell \in I} P_{\ell}^{c(\ell)}p_{\Delta}.
$$

\begin{proof}[Proof of Step 1] 
Observe that the region $\widetilde{I}$ has the following form:
\[
\tikzmath{
\draw (1.5,.05) -- (1.5,3.7);
\foreach \y in {.75,1.5,2.25,3}{
    \draw (2.25,\y) -- (1.5,\y);
}
\foreach \x in {1.875}{
\foreach \y in {.75, 1.5, 2.25, 3}{
\filldraw (\x, \y) circle (0.05cm);
}}
\foreach \x in {1.5}{
\foreach \y in {.375, 1.125, 1.875, 2.625, 3.375}{
\filldraw (\x, \y) circle (0.05cm);
}}
\node[scale = .7] at (2.45,3) {$\ell_1$}; 
\node[scale = .7] at (2.55,.75) {$\ell_{n+1}$}; 
\node[scale = .7] at (1.95,2.6) {$p_1$}; 
\node[scale = .7] at (1.95,1.95) {$\vdots$}; 
\node[scale = .7] at (1.95,1.05) {$p_n$}; 
\node[scale = .9] at (.7,2) {$\widetilde{I}$};
}
\]
By the first step of Theorem \ref{thm:LTO1QuantumDouble}, there exist operators $B_p^{(g)}$ with $p \subset \Lambda$ such that $\prod B_p^{(g)}  L_{c_1}P_{c_2} = L_{c_1'}P_{c_2}$, where $L_{c_1'}$ is supported on $\widetilde{I}$.  
In addition, we have that $c_1'(\ell) = e$ for outermost edges $\ell$ in $\widetilde{I}\setminus I$.

Due to the exchange relations shown before, there exist operators $Q_{p_1}^{(g_1)},\cdots, Q_{p_n}^{(g_n)}$ such that $\prod_{i = 1}^nQ_{p_i}^{(g_i)}L_{c_1'}P_{c_2} = L_{c_1''}P_{c_2}$, where the support of $L_{c_1''}$ is on $I$. Furthermore, its support is empty, otherwise $A_sxA_s = 0$ for some $s\subset \Delta$ that intersects $I$. Therefore, $L_{c_1'} = \prod_{i = 1}^n Q_{p_i}^{(g_i^{-1})}$, and hence 
\[
p_{\Delta}xp_{\Delta} = \prod_{i = 1}^n Q_{p_i}^{(g_i^{-1})}p_{\Delta}P_{c_2}p_{\Delta}.
\]
In addition, $P_{c_2}$ is a flat operator, as otherwise $p_{\Delta}P_{c_2}p_{\Delta} = 0$.
\end{proof}

\begin{proof}[Proof of Step 2] Let $c_3\in S$. 
The result follows by the proof of the second step of Theorem \ref{thm:LTO1QuantumDouble}.
Specifically, we proceed from the right boundary, and we apply $B_p^{(g)}$ operators where each $p \subset \Delta$ lies to the left of a vertically-oriented edge.  
Using the argument from that proof, we have that \[
\prod_{i = 1}^k B_{p_i}^{(g_i)}P_{c_3}\prod_{i = 1}^k B_{p_i}^{(g_i^{-1})} = P_{c_2},
\]
so $p_{\Delta}P_{c_3}p_{\Delta} = p_{\Delta}P_{c_2}p_{\Delta}$. 
\end{proof}

\begin{rem}
\label{rem:DimensionsOfBoundaryAlgebras}
We observe that we have canonical bases for $\fC(I)$ in both the smooth and rough cases.  
Let $n \coloneqq |I|$.
If $I$ is rough, then the following is a basis for $\fC(I)$:
\begin{equation}
\label{eq:RoughCanonicalBasis}
\set{\prod_{i = 1}^{n - 1}Q^{(g_i)}_{p_i} \prod_{i = 1}^n P^{h_i}_{\ell_i}}{g_i, h_i \in G}.
\end{equation}
Here $p_1, \dots, p_{n - 1}$ are the $n - 1$ partial plaquettes in $\widetilde I$ and $\ell_1, \dots, \ell_n$ are the $n$ edges of $I$.  
Similarly, if $I$ is smooth, then the following is a basis for $\fC(I)$:
\begin{equation}
\label{eq:SmoothCanonicalBasis}
\set{\prod_{i = 1}^{n}R^{g_i}_{\ell_i} \prod_{i = 1}^{n - 1}S^{(h_i)}_{s_i}}{g_i, h_i \in G}.
\end{equation}
As before, $\ell_1, \dots, \ell_n$ are the $n$ edges of $I$, and $s_1, \dots, s_{n - 1}$ are the $n - 1$ partial stars in $\widetilde I$.  
Note that in both cases, the dimension of $\fC(I)$ is $|G|^{2n - 1}$.
\end{rem}

\begin{rem}
\label{rem:BasisOfImageOfPDelta}
We also provide a description of the image of $p_\Delta$ for certain rectangles $\Delta$.  
First, suppose $\Delta$ has the following form: 
\[
\tikzmath{
\draw[scale = .7] (-2.5,-2.5) grid (2.5, 2.5);
\foreach \x in {-1.05, -.35, ..., 1.05}
{
\foreach \y in {-1.4, -.7, ..., 1.4}{
\filldraw (\x, \y) circle (0.05cm);
}
}
\foreach \y in {-1.05, -.35, ..., 1.05}
{
\foreach \x in {-1.4, -.7, ..., 1.4}{
\filldraw (\x, \y) circle (0.05cm);
}
}
\draw[thick, blue, rounded corners, scale=0.65] (-1.9, 1) -- (-1.9, -1.9) -- (1.9, -1.9) -- (1.9, 1.9) -- (-1.9, 1.9) -- (-1.9, 1);
\node[scale=.9][blue] at (1.6,.9) {$\Delta$};
}
\]
In that case, the image of $p_\Delta$ has a basis whose elements are the sum of all flat simple tensors with the same labels for the boundary edges.  
There is also a condition on the boundary labels in order for there to exist flat tensors with that particular labeling; this condition is analogous to the condition for $A_s$.  
Now, suppose $\Delta$ has the following form: 
\[
\tikzmath{
\draw[scale = .7] (-2.5,-2.5) grid (2.5, 2.5);
\foreach \x in {-1.05, -.35, ..., 1.05}
{
\foreach \y in {-1.4, -.7, ..., 1.4}{
\filldraw (\x, \y) circle (0.05cm);
}
}
\foreach \y in {-1.05, -.35, ..., 1.05}
{
\foreach \x in {-1.4, -.7, ..., 1.4}{
\filldraw (\x, \y) circle (0.05cm);
}
}
\draw[thick, blue, rounded corners, scale=0.65] (-2.35, 1) -- (-2.35, -2.35) -- (2.35, -2.35) -- (2.35, 2.35) -- (-2.35, 2.35) -- (-2.35, 1);
\node[scale=.9][blue] at (1.8,1.4) {$\Delta$};
}
\]
In that case, the image of $p_\Delta$ has a basis whose elements are the sum of all flat simple tensors obtained by applying a sequence of $B_p^{(g)}$ operators to a simple tensor with every interior edge labeled by the identity.  
One can verify these characterizations by using step 2 of the proof of Theorem \ref{thm:LTO1QuantumDouble}.
\end{rem}

\begin{thm} 
\label{thm:LTO3QuantumDouble}
Suppose that we have rectangles $\Lambda \Subset\Delta$ with $\partial \Lambda \cap \partial \Delta = I \neq \emptyset$.
\begin{enumerate}[label=(\arabic*)]
\item 
If $I$ is rough, then $x\in \fC(I)$ and $xp_{\Delta} = 0$ implies that $x = 0$.
\item 
If $I$ is smooth, then $x\in \fC(I)$ and $xp_{\Delta} = 0$ implies that $x = 0$.
\end{enumerate}
\end{thm}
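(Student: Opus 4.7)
My plan is to prove the rough case (1); the smooth case (2) follows analogously using basis \eqref{eq:SmoothCanonicalBasis} and the partial star operators $S_s^{(g)}$ in place of $Q_p^{(g)}$.  The strategy is to show that $\fC(I)$ acts faithfully on $V_\Delta := \im(p_\Delta)$, which is equivalent to the desired implication.  By Remark~\ref{rem:BasisOfImageOfPDelta}, $V_\Delta$ admits a basis $\{|w_c\rangle\}_c$ indexed by the valid boundary labelings $c : \partial \Delta \to G$, where
$|w_c\rangle := \sum_{c'} |c'\rangle$
sums over flat configurations $c'$ on $\Delta$ with $c'|_{\partial \Delta} = c$.

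The key computation is the action formula
\[
x_{\vec g, \vec h}\, |w_c\rangle = \delta_{c|_I,\,\vec h}\, |w_{c_g}\rangle
\qquad\text{where}\qquad
x_{\vec g, \vec h} := \prod_i Q_{p_i}^{(g_i)} \prod_j P_{\ell_j}^{h_j},
\]
and $c_g$ agrees with $c$ off $I$ and has modified $I$-labels $c_g(\ell_1) = h_1 g_1^{-1}$, $c_g(\ell_j) = g_{j-1} h_j g_j^{-1}$ for $1 < j < n$, and $c_g(\ell_n) = g_{n-1} h_n$.  The derivation relies on verifying that each $Q_p^{(g)}$ preserves every flatness constraint enforced by $p_\Delta$: the shifts on the three affected edges of $p$ cancel against each other at the interior stars bordering $\widetilde I$, while the only stars that $Q_p^{(g)}$ could genuinely disturb are those adjacent to the ``missing'' right edge of $p$, which lie outside $\Delta$ and hence are not enforced.

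Linear independence of $\{x_{\vec g, \vec h}|_{V_\Delta}\}$ then follows by inspecting matrix entries in the basis $\{|w_c\rangle\}$: the $(c', c)$-entry of $x_{\vec g, \vec h}$ is nonzero iff $\vec h = c|_I$ and $c' = c_g$, and for each such pair $(c, c')$ (necessarily agreeing on $\partial \Delta \setminus I$) the corresponding $(\vec g, \vec h)$ is uniquely recoverable---$\vec h = c|_I$, and $\vec g$ is extracted from $c'|_I$ via the recursion above, which is component-wise invertible for fixed $\vec h$.  Hence if $\sum_{\vec g, \vec h}\alpha_{\vec g, \vec h}\, x_{\vec g, \vec h}$ annihilates $V_\Delta$, each matrix entry directly forces the individual $\alpha_{\vec g, \vec h}=0$, provided every $(\vec g, \vec h) \in G^{n-1}\times G^n$ is realized by a valid pair $(c, c_g)$.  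The main remaining point---and the expected technical bottleneck---is to verify this realization condition; it reduces to producing, for every $\vec h \in G^n$, a flat configuration $c'$ on $\Delta$ with $c'|_I = \vec h$, which one obtains by specifying $c'$ along a spanning tree of $\Delta$ containing $I$ and extending via the star constraints, using that $\Delta$ is simply connected.  The flatness-preservation of $M_{\vec g}$ established above then automatically makes $c_g$ valid as well.
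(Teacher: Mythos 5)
Your argument is sound and is, at its core, the same strategy as the paper's, executed on the opposite case: you decompose $x$ in the canonical basis \eqref{eq:RoughCanonicalBasis}, use the explicit description of $\im(p_\Delta)$ from Remark~\ref{rem:BasisOfImageOfPDelta}, and show the basis elements of $\fC(I)$ act on the ground space with pairwise disjoint matrix supports, so that coefficients are read off one at a time. The paper instead proves the \emph{smooth} case (leaving the rough case to the reader): there one evaluates $p_\Delta x$ on flat simple tensors with identity interior labels, uses the partial-star projections to isolate a single $(n-1)$-tuple $\vec h$, and invokes Lemma~\ref{lem:LinearlyIndependentSetForLTO3Proof} for linear independence of $\{p_\Delta \prod_j R^{g_j}_{\ell_j}|c\rangle\}_{\vec g}$. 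Your rough-case computation is correct: the $P^{h_j}_{\ell_j}$ act diagonally on the boundary-label basis $\{|w_c\rangle\}$ because the edges of $I$ are boundary edges of (an all-rough) $\Delta$, the truncated plaquettes $Q^{(g)}_{p}$ commute with every $A_s$, $s\subset\Delta$ (the only disturbed stars sit at the free ends of the rough edges and are not enforced), and your recursion recovering $\vec g$ from $(c|_I, c_g|_I)$ is the right injectivity statement. You also correctly flag the realization condition; it is easily met, e.g.\ by the ``horizontal string'' configuration with all vertical edges labeled $e$ and the row of $\ell_j$ labeled $h_j$, or by your spanning-tree argument. Two small cautions: you should state the WLOG reduction on the shape of $\Delta$ (as the paper does), since $xp_\Delta=0$ for the given $\Delta$ implies $xp_{\Delta'}=0$ for any larger $\Delta'$, which is what licenses using the all-rough form of Remark~\ref{rem:BasisOfImageOfPDelta}; and your claim that the smooth case ``follows analogously'' is the one place where the analogy is not literal, because for a smooth-boundary $\Delta$ the ground space is \emph{not} indexed by boundary labels (plaquettes adjacent to the smooth side move the labels on $I$), which is precisely the difficulty the paper's identity-interior representatives and Lemma~\ref{lem:LinearlyIndependentSetForLTO3Proof} are designed to handle. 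In that sense your proof and the paper's are complementary halves rather than one subsuming the other.
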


\begin{proof} 
We prove the second case, and the first is similar and left to the reader. 
We let $n \coloneqq |I|$. 
Without loss of generality, we may assume that $\Delta$ has the following form: 
\[
\tikzmath{
\draw[scale = .7] (-2.5,-2.5) grid (2.5, 2.5);
\foreach \x in {-1.05, -.35, ..., 1.05}
{
\foreach \y in {-1.4, -.7, ..., 1.4}{
\filldraw (\x, \y) circle (0.05cm);
}
}
\foreach \y in {-1.05, -.35, ..., 1.05}
{
\foreach \x in {-1.4, -.7, ..., 1.4}{
\filldraw (\x, \y) circle (0.05cm);
}
}
\draw[thick, blue, rounded corners, scale=0.65] (-2.35, 1) -- (-2.35, -2.35) -- (2.35, -2.35) -- (2.35, 2.35) -- (-2.35, 2.35) -- (-2.35, 1);
\node[scale=.9][blue] at (1.8,1.4) {$\Delta$};
}
\]
We first prove the following lemma.  

\begin{lem}
\label{lem:LinearlyIndependentSetForLTO3Proof}
Let $|c \rangle$ be a flat simple tensor whose interior edges are all labeled by the identity.
Then the set 
\[
\set{p_\Delta \prod_{j = 1}^n R^{g_j}_{\ell_j} |c \rangle}{g_1, \dots, g_n \in G}
\] 
is linearly independent.
\end{lem}

\begin{proof}[Proof of Lemma]
For each distinct $n$-tuple of group elements $(g_1, \dots, g_n)$, we obtain a distinct flat simple tensor $\prod_{j = 1}^n R^{g_j}_{\ell_j} |c \rangle$, and each of these flat simple tensors have interior edges labeled by the identity.  
For a flat simple tensor $|v\rangle$, $p_\Delta |v\rangle$ is a scalar multiple of the sum of all flat simple tensors obtained from $|v \rangle$ by applying a sequence of $B_p^{(g)}$ terms.  
In particular, we obtain that for each distinct $n$-tuple $(g_1, \dots, g_n) \in G^n$, $p_\Delta \prod_{j = 1}^n R^{g_j}_{\ell_j} |c \rangle$ corresponds to a distinct element of the basis for the image of $p_\Delta$ described in Remark \ref{rem:BasisOfImageOfPDelta} above. 
\end{proof}

Now, let $x\in \fC(I)$.  
Then $x = a_1x_1+ \cdots + a_kx_k$, where the $x_i$ enumerate the canonical basis for $\fC(I)$ (so $k = |G|^{2n - 1}$).
Suppose $xp_{\Delta} = 0$.  
Then 
$$
xp_{\Delta} = p_{\Delta}x = \sum_{i = 1}^{k}a_ip_{\Delta}x_i= 0.$$ 
Now, let $|c \rangle$ be a flat simple tensor whose interior edges are all labeled by the identity.  
Then $|c \rangle \in \im\left( \prod_{i = 1}^{n - 1} S_{s_i}^{(h_i)}\right)$ for exactly one $(n - 1)$-tuple $(h_1, \dots, h_{n- 1}) \in G^{n - 1}$.  
Without loss of generality, we may assume $x_1, \dots, x_m$ are the basis elements satisfying that $x_i = \prod_{j = 1}^n R^{g_{i, j}}_{\ell_j} \prod_{j =1}^{n - 1} S^{(h_j)}_{s_j}$, 
where $(h_1, \dots, h_{n - 1})$ is the specific $(n - 1)$-tuple mentioned above.
(Note that $m = |G|^n$.)
Then 
\begin{align*}
0
&=
\sum_{i = 1}^k a_i p_\Delta x_i|c\rangle
=
\sum_{i = 1}^m a_i p_\Delta x_i|c\rangle
\\&=
\sum_{i=1}^m a_{i} p_\Delta \prod_{j = 1}^n R^{g_{i,j}}_{\ell_j} |c \rangle.
\end{align*}
Now, for each $i = 1, \dots, m$, $p_\Delta \prod_{j = 1}^n R^{g_{i,j}}_{\ell_j} |c \rangle$ is a distinct element of the linearly independent set described in Lemma \ref{lem:LinearlyIndependentSetForLTO3Proof}.
Hence $a_i = 0$ for all $i = 1, \dots, m$.
Letting $|c \rangle$ vary over all flat simple tensors with interior edges labeled by the identity, we obtain that every $a_i = 0$, as desired.
\end{proof}

\begin{rem}
Note that since $\fC(I)p_{\Delta_I} = \fB(I)$, by Theorem \ref{thm:LTO3QuantumDouble} we have that $\fC(I) \cong \fB(I)$.  
Hence Theorem \ref{thm:LTO3QuantumDouble} implies that \ref{LTO:Injectivity} is satisfied for Kitaev's Quantum Double model.
\end{rem}

\subsection{Abstract characterization of the boundary algebras}

In this section, we prove Theorem \ref{thm:BoundaryAlgebrasForKQD}:
\begin{enumerate}[label=(\arabic*)]
\item 
If $I$ is a rough interval with $n$ horizontal boundary edges, then we have that $\fC(I) \cong \End_{\Hilb(G)}(\bbC[G]^{\otimes n})$.
\item 
If $I$ is a smooth interval with $n$ vertical boundary edges, then $\fC(I) \cong \End_{\Rep(G)} ((\bbC^G)^{\otimes n})$.
\end{enumerate}
We provide two proofs of these results.  
The first is a direct proof, whereas the second is a quicker proof using fusion category techniques.  
Although the local algebras $\fC(I)$ are not isomorphic when $G$ is non-abelian, the limit AF $\rmC^*$-algebras $\varinjlim_I \fC(I)$ are both equivalent to the same UHF algebra $M_{|G|^\infty}$.

\subsubsection{Direct proof}

We first analyze the rough boundary algebra.
Observe that the algebra 
$$
\mathfrak{C}(I) 
=
{\rmC^*}\set{P_\ell^{g_1},Q_p^{(g_2)}}{\ell\subset I,p \subset \widetilde{I},g_1,g_2\in G}
$$
\[
\tikzmath{
\draw (1.5,.05) -- (1.5,3.7);
\foreach \y in {.75,1.5,2.25,3}{
    \draw (2.25,\y) -- (1.5,\y);
}
\draw[thick, red] (2.25,1.5) -- (1.5,1.5) -- (1.5,2.25) -- (2.25,2.25);
\foreach \x in {1.875}{
\foreach \y in {.75, 1.5, 2.25, 3}{
\filldraw (\x, \y) circle (0.05cm);
}}
\foreach \x in {1.5}{
\foreach \y in {.375, 1.125, 1.875, 2.625, 3.375}{
\filldraw (\x, \y) circle (0.05cm);
}}
\foreach \x in {1.875}{
\foreach \y in {1.5, 2.25}{
\filldraw[red] (\x,\y) circle (.05cm);
}}
\foreach \x in {1.5}{
\foreach \y in {1.875}{
\filldraw[red] (\x,\y) circle (.05cm);
}}
\foreach \x in {1.875}{
\foreach \y in {3}{
\filldraw[orange] (\x,\y) circle (.05cm);
}}
\node[red][scale = 0.8] at (1.15,1.875) {$R_{g^{-1}}$};
\node[red][scale = 0.8] at (1.875,1.3) {$L_g$};
\node[red][scale = 0.8] at (1.875,2.5) {$R_{g^{-1}}$};
\node[red] at (2.7,1.875) {$Q_p^{(g)}$};
\node[red] at (1.875,1.875) {$\scriptstyle p$};
\draw[thick, orange] (1.5,3) -- (2.25,3);
\node[orange] at (1.875,3.2) {$\scriptstyle P_g$};
\node[orange] at (2.7,3) {$P_\ell^{g}$};
\draw[thick, blue, rounded corners=5pt] (1,3.7) -- (2.25,3.7) -- (2.25,.05) -- (1,.05);
\node[blue] at (1.25,3.5) {$\Lambda$}; 
}
\qquad
\rightsquigarrow
\qquad
\tikzmath{
\draw (2.25,.75) -- (1.5,.75);
\draw[thick, red] (2.25,1.5) -- (1.5,1.5);
\draw[thick, red] (1.5,2.25) -- (2.25,2.25);
\filldraw (1.875, .75) circle (0.05cm);
\foreach \x in {1.875}{
\foreach \y in {1.5, 2.25}{
\filldraw[red] (\x,\y) circle (.05cm);
}}
\filldraw[orange] (1.875,3) circle (.05cm);
\node[red][scale = 0.8] at (1.875,1.3) {$L_g$};
\node[red][scale = 0.8] at (1.875,2.5) {$R_{g^{-1}}$};
\node[red] at (2.7,1.875) {$\tilde{Q}_p^{(g)}$};
\node[red] at (1.875,1.875) {$\scriptstyle p$};
\draw[thick, orange] (1.5,3) -- (2.25,3);
\node[orange] at (1.875,3.2) {$\scriptstyle P_g$};
\node[orange] at (2.7,3) {$P_\ell^{g}$};
\draw[thick, blue, rounded corners=5pt] (1.4,3.7) -- (2.25,3.7) -- (2.25,.05) -- (1.4,.05);
\node[blue] at (1.45,3.5) {$\Lambda$}; 
}
\]
is $*$-isomorphic to the algebra
$$
\fD(I)
\coloneqq
{\rmC^*}\set{P_\ell^{g_1},\tilde{Q}_p^{(g_2)}}{ \ell,p \subset I,g_1,g_2\in G}.
$$
Thus the boundary algebra $\fC(I)$ is $*$-isomorphic to an algebra which acts only on the horizontal edges of the boundary.

We now identify the Hilbert spaces on the rough horizontal boundary edges with $\bbC[G]$ with canonical ONB $\{|g\rangle\}_{g\in G}$ viewed as a $G$-graded Hilbert space.
That is, when $I$ consists of $n$ horizontal boundary edges,
\begin{equation}
\label{eq:G-GradedONB}
\bigotimes^n_{j=1} \bbC[G]
=
\bigoplus_{g\in G} 
\underbrace{\operatorname{span}\set{\bigotimes_{j = 1}^{n} | g_{j}\rangle}{\prod_{j = 1}^{n} g_{j} = g}}_{V_g\coloneqq}
\end{equation}
is the decomposition of the tensor product Hilbert space into its $g$-graded components.
Observe that the set in \eqref{eq:G-GradedONB} above gives a distinguished ONB of $V_g$, and thus $\dim(V_g)=|G|^{n-1}$ for each $g\in G$.

\begin{thm} \label{thm:2d-qd-matrix-decomp}
Each $g$-graded subspace $V_g$ is $\fD(I)$-invariant, and $\fD(I)$ acts irreducibly on each $V_g$ as a full matrix algebra isomorphic to $M_{|G|^{n - 1}} (\mathbb{C})$.
Thus
$$
\fD(I) \cong \End_{\Hilb(G)}(\bbC[G]^{\otimes n}).
$$
\end{thm}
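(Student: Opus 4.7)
My plan is to decompose $\fD(I)$ according to the $G$-grading on $\bbC[G]^{\otimes n}$, show that $\fD(I)$ acts as a full matrix algebra on each $V_g$ by explicit construction of matrix units, and then match the result with $\End_{\Hilb(G)}(\bbC[G]^{\otimes n})$.

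I would first verify invariance of each $V_g$ under $\fD(I)$. The projections $P^h_\ell$ are diagonal in the computational basis, so they clearly preserve each $V_g$. For the generators $\tilde Q_p^{(k)}$, the figure shows that they act on the two adjacent horizontal boundary edges as $R_{k^{-1}}\otimes L_k$, so on $|g_i\rangle\otimes|g_{i+1}\rangle$ they produce $|g_ik^{-1}\rangle\otimes|kg_{i+1}\rangle$, whose local product $(g_ik^{-1})(kg_{i+1})=g_ig_{i+1}$ is unchanged. Hence the total product $\prod_j g_j\in G$ is invariant, so every $V_g$ is $\fD(I)$-invariant, with $\dim V_g=|G|^{n-1}$.

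Next I would show $\fD(I)|_{V_g}=\End(V_g)$ by exhibiting all rank-one operators $|\vec h\rangle\langle\vec g|$ for $\vec g,\vec h\in V_g$ inside $\fD(I)$. The product $P_{\vec g}\coloneqq\prod_i P^{g_i}_{\ell_i}\in\fD(I)$ is the rank-one projection $|\vec g\rangle\langle\vec g|$. To move $|\vec g\rangle$ to $|\vec h\rangle$ I would inductively set $k_i\coloneqq h_i^{-1}h_{i-1}^{-1}\cdots h_1^{-1}g_1\cdots g_i$ and define $U_{\vec h,\vec g}\coloneqq\tilde Q_{p_{n-1}}^{(k_{n-1})}\cdots\tilde Q_{p_1}^{(k_1)}$. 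A short induction, applying the operators right to left, shows that after step $i$ the first $i$ coordinates are $h_1,\dots,h_i$ and the $(i{+}1)$st coordinate equals $k_ig_{i+1}$; at the final step the last coordinate is forced to be $h_n$ precisely because $\prod g_j=\prod h_j=g$. The individual $\tilde Q$ operators also commute with each other, since their only overlap on a shared edge is a left translation against a right translation. Thus $U_{\vec h,\vec g}$ is a unitary in $\fD(I)$, and $U_{\vec h,\vec g}P_{\vec g}=|\vec h\rangle\langle\vec g|\in\fD(I)$. Combined with grading-preservation above, this yields $\fD(I)=\bigoplus_{g\in G}\End(V_g)\cong\bigoplus_{g\in G}M_{|G|^{n-1}}(\bbC)$.

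Finally, the identification with $\End_{\Hilb(G)}(\bbC[G]^{\otimes n})$ is essentially by definition: in $\Hilb(G)$ the object $\bbC[G]$ carries $|g\rangle$ in grade $g$, so $\bbC[G]^{\otimes n}$ has $g$-graded component exactly $V_g$, and morphisms in $\Hilb(G)$ are the grading-preserving linear maps, giving $\End_{\Hilb(G)}(\bbC[G]^{\otimes n})=\bigoplus_g\End(V_g)$. The main technical obstacle is the inductive verification that the specific product $U_{\vec h,\vec g}$ of $\tilde Q$-operators genuinely transports $|\vec g\rangle$ to $|\vec h\rangle$ with the correct final coordinate; this is the one place where the hypothesis $\prod g_i=\prod h_i$ is used, and everything else then follows formally from invariance of the grading and the dimension count $\dim V_g=|G|^{n-1}$.
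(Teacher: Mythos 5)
Your proposal is correct and follows essentially the same route as the paper: invariance of each $V_g$ because the generators permute the distinguished product basis, followed by an explicit construction of matrix units, where your transport operators $\tilde Q_{p_{n-1}}^{(k_{n-1})}\cdots\tilde Q_{p_1}^{(k_1)}$ with $k_i=h_i^{-1}\cdots h_1^{-1}g_1\cdots g_i$ are exactly the paper's product $x$ of $\tilde Q$-type generators, with the same use of $\prod g_i=\prod h_i=g$ to force the final coordinate to be $h_n$. The only cosmetic difference is that you compress $P_{\vec h}\,x\,P_{\vec g}$ to $U_{\vec h,\vec g}P_{\vec g}$ and spell out the final identification $\End_{\Hilb(G)}(\bbC[G]^{\otimes n})=\bigoplus_g\End(V_g)$ slightly more explicitly.
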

\begin{proof}
First, observe that the generators of $\fD(I)$ map elements of the distinguished ONB of $V_g$ from \eqref{eq:G-GradedONB} to other elements of the ONB, and thus $V_g$ is $\fD(I)$-invariant.

It remains to prove $\fD(I)$ acts irreducibly on each $V_g$.
To do so, we construct a system of matrix units in the restriction of $\fD(I)$ to $V_g$.
We denote the restriction of the generators to $V_g$ by the same name to ease the notation.

When $g_1\cdots g_n=g$, observe $P^{g_1}_1\cdots P^{g_n}_n$ is the minimal projection onto $\bbC\bigotimes_{j = 1}^{n} | g_{j}\rangle\subset V_g$, where the subscript denotes that $P^{g_i}_i$ acts at site $i$.
When in addition $h_1\cdots h_n=g$,
the partial isometry from 
$\bbC\bigotimes_{j = 1}^{n} | g_{j}\rangle$
onto
$\bbC\bigotimes_{j = 1}^{n} | h_{j}\rangle$
is given by
$$
P^{h_1}_1\cdots P^{h_n}_n
x
P^{g_1}_1\cdots P^{g_n}_n
$$
where 
$x$ is a product of generators of type $\tilde Q^{(k)}_p$:
\begin{align*}
x=&(L^{h_{n-1}^{-1}\cdots h_2^{-1}h_1^{-1}g_1g_2\cdots g_{n-1}}_nR_{n-1}^{g_{n-1}^{-1}\cdots g_2^{-1}g_1^{-1}h_1h_2\cdots h_{n-1}})
\\&
\cdots
(L^{h_2^{-1}h_1^{-1}g_1g_2}_3R^{g_2^{-1}g_1^{-1}h_1h_2}_2)(L^{h_1^{-1}g_1}_2 R^{g_1^{-1}h_1}_1).
\end{align*}
These $n-1$ operators switch the first $n-1$ $|g_i\rangle$ to $|h_i\rangle$ and the last $|g_n\rangle$ to 
$$
|\underbrace{h_{n-1}^{-1}\cdots h_2^{-1}h_1^{-1}}_{=h_ng^{-1}}\underbrace{g_1g_2\cdots g_{n-1}g_{n}}_{=g}\rangle
=
|h_n\rangle
$$
as promised.
\end{proof}

We now turn our attention to the smooth boundary algebra.
As with the rough case, we first make the observation that $$
\fC(I)=
{\rmC^*}\set{S_s^{(g_1)},R_\ell^{g_2}}{\ell\subset I, s \subset \widetilde{I},g_1,g_2\in G}
$$
\begin{equation}   
\label{eq:RemoveUnneccessaryEdges}
\tikzmath{
\draw (2.25,.05) -- (2.25,3.7);
\foreach \y in {.75,1.5,2.25,3}{
    \draw (2.25,\y) -- (1.5,\y);
}
\foreach \x in {1.875}{
\foreach \y in {.75, 1.5, 2.25, 3}{
\filldraw (\x, \y) circle (0.05cm);
}}
\foreach \x in {2.25}{
\foreach \y in {.375, 1.125, 1.875, 2.625, 3.375}{
\filldraw (\x, \y) circle (0.05cm);
}}
\foreach  \x in {2.25}{
\foreach \y in {1.125}{
\filldraw[red] (\x,\y) circle (.05cm);
}}
\foreach  \x in {2.25}{
\foreach \y in {2.625, 3.375}{
\filldraw[orange] (\x,\y) circle (.05cm);
}}
\foreach  \x in {1.875}{
\foreach \y in {3}{
\filldraw[orange] (\x,\y) circle (.05cm);
}}
\draw[thick, red] (2.25,.75) -- (2.25,1.5);
\node[orange] at (1,3) {$\scriptstyle \sum\limits_{h^{-1}k\ell = g}$};
\node[orange] at (2.05,2.575) {$\scriptstyle P_{\ell}$};
\node[orange] at (2.05,3.375) {$\scriptstyle P_{h}$};
\node[orange] at (1.85,2.85) {$\scriptstyle P_{k}$};
\node[red] at (2,1.175) {$\scriptstyle R_g$};
\node[red] at (2.8,1.175) {$R_\ell^{g}$};
\draw[thick, orange] (1.5,3) -- (2.25,3);
\draw[thick, orange] (2.25,2.25) -- (2.25,3.7);
\node[orange] at (2.8,3) {$S_s^{(g)}$};
\draw[thick, blue, rounded corners=5pt] (1.25,3.7) -- (2.35,3.7) -- (2.35,.05) -- (1.25,.05);
\node[blue] at (1.5,3.5) {$\Lambda$}; 
}
\qquad\rightsquigarrow\qquad
\tikzmath{
\draw (2.25,.05) -- (2.25,3.7);
\foreach \x in {2.25}{
\foreach \y in {.375, 1.125, 1.875, 2.625, 3.375}{
\filldraw (\x, \y) circle (0.05cm);
}}
\filldraw[red] (2.25,1.125) circle (.05cm);
\foreach  \x in {2.25}{
\foreach \y in {2.625, 3.375}{
\filldraw[orange] (\x,\y) circle (.05cm);
}}
\draw[thick, red] (2.25,.75) -- (2.25,1.5);
\node[orange] at (1.75,3) {$\scriptstyle \sum\limits_{h^{-1}\ell = g}$};
\node[orange] at (2.05,2.575) {$\scriptstyle P_{\ell}$};
\node[orange] at (2.05,3.375) {$\scriptstyle P_{h}$};
\node[red] at (2,1.175) {$\scriptstyle R_g$};
\node[red] at (2.8,1.175) {$R_\ell^{g}$};
\draw[thick, orange] (2.25,2.25) -- (2.25,3.7);
\node[orange] at (2.8,3) {$\widetilde{S}_s^{(g)}$};
\draw[thick, blue, rounded corners=5pt] (1.5,3.7) -- (2.35,3.7) -- (2.35,.05) -- (1.5,.05);
\node[blue] at (1.75,3.5) {$\Lambda$}; 
}
\end{equation}
is $*$-isomorphic to
$$
\fD(I)\coloneqq
{\rmC^*}\set{\tilde{S}_s^{(g_1)},R_\ell^{g_2}}{ s,\ell \subset I,g_1,g_2\in G}.
$$
Thus similar to before, the boundary algebra $\fC(I)$ is $*$-isomorphic to an algebra which acts only on the vertical edges of the boundary.

We now identify the Hilbert spaces on the smooth vertical boundary with $\bbC^G=\Fun(G\to \bbC)$, which each carry the left regular representation given by $(L_gf)(h)=f(g^{-1}h)$.
In bra-ket notation,
$|g\rangle\colon G\to \bbC$ is the Dirac delta function at $g$,
and $L_g|h\rangle = |gh\rangle$.
Observe there is a diagonal $G$-action on the tensor product Hilbert space $\bigotimes_{i=1}^n \bbC^G$
given by $U_g\bigotimes_{j=1}^n f_j \coloneqq \bigotimes_{j=1}^n L_g f_j$, so we may view the tensor product Hilbert space as an object in $\Rep(G)$.

\begin{thm}
The $\rmC^*$-algebra $\fD(I)$ acting on $\bigotimes_{j=1}^n \bbC^G$ is exactly the $G$-equivariant endomorphisms, i.e., 
$$
\fD(I)=\set{T\in \End( \bbC^G)^{\otimes n}}{TU_g=U_gT \quad\forall g\in G}.
$$
In particular,
$\fD(I)=\End_{\Rep(G)}(\bigotimes_{j=1}^n \bbC^G)$.
\end{thm}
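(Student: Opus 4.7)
My plan is to verify the stated equality by sandwiching $\fD(I)$ between two algebras of known dimension: I show the generators lie in the commutant of the diagonal $G$-action, then match dimensions via character theory.

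First, I would prove the inclusion $\fD(I) \subseteq \End_{\Rep(G)}\bigl(\bigotimes_{j=1}^n \bbC^G\bigr)$ by checking each generator commutes with $U_g = L_g^{\otimes n}$. For the right-translation generators $R_\ell^h$, this is immediate from the one-site identity $L_g R_h = R_h L_g$, together with the fact that $L_g$ on all other factors acts separately and trivially with respect to $R_\ell^h$. For the simplified star operator $\tilde S_s^{(g)} = \sum_{h^{-1}\ell = g} P_h \otimes P_\ell$ (acting on the two adjacent smooth edges of $s$ in $I$ as drawn in \eqref{eq:RemoveUnneccessaryEdges}), conjugation by $L_k \otimes L_k$ sends $P_h \otimes P_\ell$ to $P_{kh} \otimes P_{k\ell}$, and since $(kh)^{-1}(k\ell) = h^{-1}\ell = g$, the summation set is preserved and $\tilde S_s^{(g)}$ is invariant. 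Hence every generator, and thus the whole $*$-algebra $\fD(I)$, lies in the commutant.

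Second, I would compute the dimension of the commutant via character theory. The regular representation $\bbC^G$ has character $\chi(g) = |G| \cdot \delta_{g,e}$, so the $n$-fold tensor product has character $\chi_n(g) = |G|^n \cdot \delta_{g,e}$, and Schur orthogonality gives
$$
\dim \End_{\Rep(G)}\Bigl(\bigotimes_{j=1}^n \bbC^G\Bigr) = \langle \chi_n, \chi_n \rangle = \frac{1}{|G|}\sum_{g \in G}|\chi_n(g)|^2 = \frac{|G|^{2n}}{|G|} = |G|^{2n-1}.
$$
By Remark \ref{rem:DimensionsOfBoundaryAlgebras}, $\dim \fC(I) = |G|^{2n-1}$, and the $*$-isomorphism $\fC(I) \cong \fD(I)$ described in \eqref{eq:RemoveUnneccessaryEdges} preserves dimension, so $\dim \fD(I) = |G|^{2n-1}$ as well. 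Combined with the inclusion from the first step, this forces equality.

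The main obstacle is really just the bookkeeping for $\tilde S_s^{(g)}$: one must trust the reduction from the three-site operator $S_s^{(g)}$ to its two-site simplification $\tilde S_s^{(g)}$ shown in \eqref{eq:RemoveUnneccessaryEdges} and verify that the $G$-invariance of the index constraint $h^{-1}\ell = g$ is precisely what makes the diagonal left-translation action commute with it. Once this is in hand, the commutation for $R_\ell^h$ and the character-theoretic dimension count are both routine, and the theorem follows with no further input.
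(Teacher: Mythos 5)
Your proof is correct and follows essentially the same route as the paper: show the generators $R_\ell^h$ and $\tilde S_s^{(g)}$ commute with the diagonal action $U_g$ to get $\fD(I)\subseteq\End_{\Rep(G)}((\bbC^G)^{\otimes n})$, then force equality by matching dimensions using $\dim\fC(I)=\dim\fD(I)=|G|^{2n-1}$ from Remark \ref{rem:DimensionsOfBoundaryAlgebras}. The only (harmless) difference is that you compute $\dim\End_{\Rep(G)}((\bbC^G)^{\otimes n})=|G|^{2n-1}$ directly via Schur orthogonality for the regular character, whereas the paper gets the same number by citing the equality $\dim\End_{\Hilb(G)}(\bbC[G]^{\otimes n})=\dim\End_{\Rep(G)}((\bbC^G)^{\otimes n})$ together with the already-proved rough-boundary identification in Theorem \ref{thm:2d-qd-matrix-decomp}; your version is slightly more self-contained.
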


\begin{proof}
Observe that the generators of $\fD(I)$ commute with the diagonal $G$-action on $(\bbC^G)^{\otimes n}$, so $\fD(I)\subseteq \End_{\Rep(G)}((\bbC^G)^{\otimes n})$.
Equality follows by a dimension argument.
It is well-known from fusion category theory (see also \S\ref{sec:FusionCategoricalIdentification} below) that
$$
\dim\End_{\Hilb(G)}(\bbC[G]^{\otimes n})
=
\dim\End_{\Rep(G)}((\bbC^G)^{\otimes n}).
$$
We know that the dimension of $\fC(I) \cong \fD(I)$ is the same for both the rough and smooth boundary algebras from the bases in \eqref{eq:RoughCanonicalBasis} and \eqref{eq:SmoothCanonicalBasis} from Remark \ref{rem:DimensionsOfBoundaryAlgebras}.
As we already identified the rough boundary algebra in Theorem \ref{thm:2d-qd-matrix-decomp} above, the result follows.
\end{proof}

\subsubsection{Fusion categorical proof}
\label{sec:FusionCategoricalIdentification}

We now give a quick fusion categorical proof of Theorem \ref{thm:BoundaryAlgebrasForKQD} using Vaughan Jones' shaded planar algebras \cite{MR4374438}.

Consider the $2\times 2$ unitary multifusion category
$$
\begin{pmatrix}
\Hilb(G) & \Hilb
\\
\Hilb & \Rep(G)
\end{pmatrix}
$$
where the top right copy of $\Hilb$ is $\Mod-\bbC[G]$ in $\Hilb(G)$ generated by $X=\bbC[G]_{\bbC[G]}$,
and the bottom left copy of $\Hilb$ is $\bbC[G]-\Mod$ in $\Hilb(G)$ generated by $\overline{X}={}_{\bbC[G]}\bbC[G]$.
Observe that 
$$
X\otimes\overline{X} 
= 
\bbC[G]\otimes_{\bbC[G]}\bbC[G]
\cong
\bbC[G]
$$
as $G$-graded Hilbert spaces, and
$$
{}_{\bbC[G]}\overline{X} \otimes X_{\bbC[G]}
=
{}_{\bbC[G]}\bbC[G] \otimes \bbC[G]_{\bbC[G]}.
$$
Under the equivalence of categories between $\Bim(\bbC[G])=\bbC[G]-\Mod-\bbC[G]$ in $\Hilb(G)$ and $\Rep(G)$ \cite[\S5]{MR1868118}, the above space corresponds to $\bbC^G$.
In particular, observe that the algebra of $\bbC[G]-\bbC[G]$ bimodular endomorphisms of $\overline{X}\otimes X$ in $\Hilb(G)$
is spanned by the translation operators $T_g\coloneqq R_g^{-1}\otimes L_g$, which preserve the $G$-grading.

Vaughan Jones' shaded planar algebras \cite{MR4374438} give an elegant graphical representation of this multifusion category.
We denote $X$ as an unshaded-shaded strand, and $\overline{X}$ by its horizontal reflection:
\[
X
=
\tikzmath{
\begin{scope}
\clip[rounded corners=5pt] (-.5,-.5) rectangle (.5,.5);
\draw[dotted, rounded corners=5pt] (0,.5) -- (-.5,.5) -- (-.5,-.5) -- (0,-.5);
\fill[gray!30] (0,-.5) rectangle (.5,.5);
\end{scope}
\draw (0,-.5) -- (0,.5);
}
\qquad\qquad
\overline{X}
=
\tikzmath{
\begin{scope}
\clip[rounded corners=5pt] (-.5,-.5) rectangle (.5,.5);
\draw[dotted, rounded corners=5pt] (0,.5) -- (.5,.5) -- (.5,-.5) -- (0,-.5);
\fill[gray!30] (0,-.5) rectangle (-.5,.5);
\end{scope}
\draw (0,-.5) -- (0,.5);
}\,.
\]
Fusion products of $X$ and $\overline{X}$ are given by horizontal juxtaposition.  
Note that 
$\End(X \otimes \overline{X})\cong\End_{\Hilb(G)}(\bbC[G])$ is generated by the projections $P_g$ onto the simple objects,
and
$\End(\overline{X} \otimes X)\cong \End_{\Rep(G)}(\bbC^G)$ is generated by the translation operators $T_g$.
Graphically, we represent $P_g$ and $T_g$ as follows:
\[
\tikzmath{
\fill[gray!30](0, -.7) rectangle (.4, .7);
\draw (0, -.7) -- (0, .7);
\draw (.4, -.7) -- (.4, .7);
\roundNbox{fill=white}{(.2, 0)}{.3}{0}{0}{$P_g$}
}
\qquad\qquad
\tikzmath{
\begin{scope}
\clip[rounded corners=5pt](-.4, -.7) rectangle (.8, .7);
\fill[gray!30](-.4, -.7) rectangle (0, .7);
\fill[gray!30] (.4, -.7) rectangle (.8, .7);
\end{scope}
\draw (0, -.7) -- (0, .7);
\draw (.4, -.7) -- (.4, .7);
\roundNbox{fill=white}{(.2, 0)}{.3}{0}{0}{$T_g$}
}\,.
\]
We then have that the following exchange relation is satisfied \cite[\S7]{MR1950890}
(This shaded planar algebra is actually the standard invariant of the \emph{group subfactor} $R\subset R\rtimes G$; see \cite{MR1950890} for more details).
\[
\tikzmath{
\begin{scope}
\clip[rounded corners=5pt] (-1.4,-1) rectangle (1.4,1);
\fill[gray!30] (-1.4,-1) rectangle (-1,1);
\fill[gray!30] (-.6,-1) rectangle (-.2,1);
\fill[gray!30] (.2,-1) rectangle (.6,1);
\fill[gray!30] (1,-1) rectangle (1.4,1);
\end{scope}
\draw (-1,-1) -- (-1,1);
\draw (-.6,-1) -- (-.6,1);
\draw (-.2,-1) -- (-.2,1);
\draw (.2,-1) -- (.2,1);
\draw (.6,-1) -- (.6,1);
\draw (1,-1) -- (1,1);
\roundNbox{fill=white}{(0,.4)}{.3}{0}{0}{$T_g$}
\roundNbox{fill=white}{(.4,-.4)}{.3}{0}{0}{$P_h$}
}
=
\tikzmath{
\begin{scope}
\clip[rounded corners=5pt] (-1.4,-1) rectangle (1.4,1);
\fill[gray!30] (-1.4,-1) rectangle (-1,1);
\fill[gray!30] (-.6,-1) rectangle (-.2,1);
\fill[gray!30] (.2,-1) rectangle (.6,1);
\fill[gray!30] (1,-1) rectangle (1.4,1);
\end{scope}
\draw (-1,-1) -- (-1,1);
\draw (-.6,-1) -- (-.6,1);
\draw (-.2,-1) -- (-.2,1);
\draw (.2,-1) -- (.2,1);
\draw (.6,-1) -- (.6,1);
\draw (1,-1) -- (1,1);
\roundNbox{fill=white}{(0,-.4)}{.3}{0}{0}{$T_g$}
\roundNbox{fill=white}{(.4,.4)}{.3}{0}{0}{$P_{gh}$}
}
\]
For the rough boundary algebra, the operators $P_\ell^{g}$ and $Q_p^{(g)}$ satisfy these relations, with $P_\ell^{g}$ playing the role of $P_g$ and $Q_p^{(g)}$ playing the role of $T_g$.  
Hence, if $I$ is rough with $|I|=n$, 
\[
\fC(I)
\cong
\End_{\Hilb(G)}(\bbC[G]^{\otimes n}).
\]
Similarly, for the smooth algebra, $S_s^{(g)}$ and $R_\ell^{g}$ satisfy these relations.  
Hence if $I$ is smooth with $|I|=n$,
\[
\fC(I)
\cong
\End_{\Rep(G)}((\bbC^G)^{\otimes n}).
\]
This proves Theorem \ref{thm:BoundaryAlgebrasForKQD}.

Moreover, this 2-shaded proof also shows that both limit AF $\rmC^*$-algebras are isomorphic to $M_{|G|^\infty}$.
Indeed, $\varinjlim \End_{\Hilb(G)}(\bbC[G]^{\otimes n})$ is classified by the Bratteli diagram \cite{MR0112057,MR0312282,MR0397420}
\begin{equation}
\label{eq:BratteliHilbG}
\tikzmath{
\filldraw (0,0) node[left]{$\scriptstyle 1$} circle (.05cm);
\filldraw (.75,.5) node[above]{$\scriptstyle g_1$} circle (.05cm);
\node at (.75,.05) {$\vdots$};
\filldraw (.75,-.5) node[below]{$\scriptstyle g_k$} circle (.05cm);
\filldraw (1.5,0) circle (.05cm);
\filldraw (2.25,.5) node[above]{$\scriptstyle g_1$} circle (.05cm);
\node at (2.25,.05) {$\vdots$};
\filldraw (2.25,-.5) node[below]{$\scriptstyle g_k$} circle (.05cm);
\filldraw (3,0) circle (.05cm);
\node at (3.5,0) {$\cdots$};
\draw (0,0) -- (.75,-.5) -- (1.5,0) -- (2.25,-.5) -- (3,0);
\draw (0,0) -- (.75,.5) -- (1.5,0) -- (2.25,.5) -- (3,0);
}
\end{equation}
where $\{g_1,\dots, g_k\}$ is an enumeration of $G$.
Here, $\End_{\Hilb(G)}(\bbC[G]^{\otimes n})$ corresponds to the $k=|G|$ nodes of \eqref{eq:BratteliHilbG} on alternating levels
which correspond to the $|G|$ summands of size $|G|^{n-1}$ at depth $(n+1)/2$ (where we start counting at depth 0).
Similarly, $\varinjlim \End_{\Rep(G)}((\bbC^G)^{\otimes n})$ is is classified by the Bratteli diagram
\begin{equation}
\label{eq:BratteliRepG}
\tikzmath{
\filldraw (0,0) node[left]{$\scriptstyle 1$} circle (.05cm);
\filldraw (.75,.5) node[above]{$\scriptstyle V_1$} circle (.05cm);
\node at (.75,.05) {$\vdots$};
\filldraw (.75,-.5) node[below]{$\scriptstyle V_j$} circle (.05cm);
\filldraw (1.5,0) circle (.05cm);
\filldraw (2.25,.5) node[above]{$\scriptstyle V_1$} circle (.05cm);
\node at (2.25,.05) {$\vdots$};
\filldraw (2.25,-.5) node[below]{$\scriptstyle V_j$} circle (.05cm);
\filldraw (3,0) circle (.05cm);
\node at (3.5,0) {$\cdots$};
\draw (0,0) --node[above]{$\scriptstyle m_1$} (.75,.5) --node[above]{$\scriptstyle m_1$} (1.5,0) --node[above]{$\scriptstyle m_1$} (2.25,.5) --node[above]{$\scriptstyle m_1$} (3,0);
\draw (0,0) --node[below]{$\scriptstyle m_j$} (.75,-.5) --node[below]{$\scriptstyle m_j$} (1.5,0) --node[below]{$\scriptstyle m_j$} (2.25,-.5) --node[below]{$\scriptstyle m_j$} (3,0);
}
\end{equation}
where $\{V_1,\dots, V_j\}$ is an enumeration of the irreps of $G$, where $m_i=\dim(V_i)$ for all $i$.
The result follows by the fact that $|G|=\sum_{i=1}^j m_j^2$ (and $|G|=k$).

\subsection{Canonical state on the boundary algebra}

In this section, we compute that the canonical state $\psi_\fB$ on $\fB = \varinjlim_I \fB(I)$ is a trace, connecting to \cite{2212.09036} on the type of cone von Neumann algebras.  

A \emph{cone} $\Lambda$ consists of all edges intersecting the region enclosed by two rays\footnote{The article \cite{2212.09036} uses a slightly different definition.  In her paper, she only considers edges that are completely contained in the region enclosed by the two rays.  However, she uses a dual convention to ours for the Quantum Double model, so this definition more closely aligns with hers when the differing conventions are taken into account.}, as illustrated below: 
\[
\tikzmath{
\draw[scale = .7] (-2.5,-2.5) grid (2.5, 2.5);
\foreach \x in {-1.05, -.35, ..., 1.05}
{
\foreach \y in {-1.4, -.7, ..., 1.4}{
\filldraw (\x, \y) circle (0.05cm);
}
}
\foreach \y in {-1.05, -.35, ..., 1.05}
{
\foreach \x in {-1.4, -.7, ..., 1.4}{
\filldraw (\x, \y) circle (0.05cm);
}
}
\draw[thick, rufous] (.4, 1.75) -- (-1.05, -1.05) -- (1.75, -.5);
}
\]
We then define the algebra $\fA(\Lambda) \subseteq \fA$ to be the $\rmC^*$-subalgebra generated by all $\fA(\Delta)$ ranging over all rectangles $\Delta$ contained in $\Lambda$.  
By \cite{NaaijkensThesis}, the state $\psi$ on $\fA$ is the unique translation-invariant ground state for the Quantum Double model.  
Hence the \emph{cone algebras} studied in \cite{MR3426207,2212.09036} are given by $\pi_\psi(\fA(\Lambda))''$, where $\pi_\psi$ is the GNS representation for $\psi$.  
The article \cite{2212.09036} shows these von Neumann algebras are type $\rmI\rmI_\infty$\footnote{Ogata only claims this result in the case when the group is abelian.  However, her work in \cite{2212.09036}, combined with an argument from \cite{MR3426207}, can be used to prove the more general case.}.  
She did this essentially by showing that the state $\psi$ restricted to the compression of $\pi_\psi(\fA(\Lambda))''$ by the infimum of the projections $p_\Delta \in \fA(\Lambda)$ is a trace.
(Her approach exactly yields this result in the case where the cone is in fact a quadrant or a half-plane and has its vertex at the center of a face.)

In what follows, we consider the case where the cone is a half plane $\bbH$.  
In the case where the boundary of the half plane is rough, this is an example of a cone under the definition of \cite{2212.09036}.  
The more general case can be treated by adapting our work.  
We prove that $\pi_\psi(\fA_\bbH)''$ is a type $\rmI\rmI_\infty$ factor both when the boundary of $\bbH$ is rough and when the boundary of $\bbH$ is smooth.  
Indeed, $\fB = \varinjlim_I \fB(I)$ is exactly the compression of $\pi_\psi(\fA_{\bbH})''$ by the infimum $p$ of the projections $p_\Delta \in \fA_{\bbH}$, where $\bbH$ is the half-plane used in Definition \ref{defn:BoundaryAlgebras}.
Hence by proving that $\psi_\fB$ is a trace, we have that $p\pi_\psi(\fA_\bbH)''p$ is a type $\rmI\rmI_1$ factor, as the Bratteli diagram is stationary and periodic by (\ref{eq:BratteliHilbG},\ref{eq:BratteliRepG}).
The fact that this is true when the boundary of $\bbH$ is smooth is slightly surprising, since in that case $\fB(I) \cong \End_{\Rep(G)} ((\bbC^G)^{\otimes n})$.  
Indeed, the Levin-Wen model corresponding to $\Rep(G)$ studied in \cite{2307.12552} gives type $\rmI\rmI\rmI$ factors when $G$ is not abelian.

\begin{thm}
\label{thm:CanonicalStateTracial}
Let $\fB = \varinjlim_I \fB(I)$ as in Construction \ref{const:BoundaryAlgebra}.  
Then the canonical state $\psi_\fB$ on $\fB$ is tracial.
\end{thm}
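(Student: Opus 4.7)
The plan is to identify, under the isomorphism $\fB(I)\cong \End_\cC(X^{\otimes n})$ of Theorem~\ref{thm:BoundaryAlgebrasForKQD}, the canonical state $\psi_\fB$ with the restriction of the normalized matrix trace $\tau_V\coloneqq (\dim V)^{-1}\Tr_V$ on the ambient Hilbert space $V=X^{\otimes n}$ (equal to $\bbC[G]^{\otimes n}$ in the rough case and $(\bbC^G)^{\otimes n}$ in the smooth case). Since $\tau_V$ is tracial on all of $B(V)$, its restriction to the subalgebra $\fB(I)\subseteq B(V)$ will automatically be a trace, giving the conclusion.

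The first step is to use the edge-reduction isomorphism $\fC(I)\cong\fD(I)\subseteq B(V)$ from Section~\ref{sec:FusionCategoricalIdentification}, so that $\fB(I)$ is regarded as a subalgebra of $B(V)$. Next, for $y\in\fD(I)$ and any surrounding rectangle $\Delta'\subseteq \bbZ^2$ with $\Lambda_I\ll \Delta'$, the defining equation $\psi(y)=\Tr(yp_{\Delta'})/\Tr(p_{\Delta'})$ combined with the partial trace over the bulk will give $\psi(y)=\Tr_V(y\rho_V)$, where $\rho_V$ is the reduced density matrix of $p_{\Delta'}/\Tr(p_{\Delta'})$ onto $V$. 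The theorem thus reduces to showing $\rho_V=|G|^{-n}I_V$.

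For this density matrix claim, the strategy is to exhibit enough unitary symmetries of the ground state whose net action on $V$ is either $L_g^{\ell_i}$ or $R_g^{\ell_i}$ at a single boundary edge. The basic input is that $B^{(g)}_p p_{\Delta'}=p_{\Delta'}=p_{\Delta'}B^{(g)}_p$ for every plaquette $p\subseteq \Delta'$ and every $g\in G$, so $\psi\circ\Ad(B^{(g)}_p)=\psi$, and hence $\psi$ is also invariant under conjugation by any product $\prod_i B^{(g_i)}_{p_i}$. Since the bulk factors of such a product commute with $y\in\fD(I)$, the conjugation acts on $y$ only through the boundary-edge factor of the product. By choosing products along ``ribbons'' of plaquettes that combine plaquettes in the interior of $\bbH$ with plaquettes on the exterior side of $\partial\bbH$ (which is allowed because $\Delta'$ extends past $\partial\bbH$), one arranges the net boundary action to be $L_g$ at a single $\ell_i$; an analogous ribbon yields $R_g^{\ell_i}$. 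Hence $\rho_V$ commutes with $L_g^{\ell_i}$ and $R_g^{\ell_i}$ for every $i$ and every $g\in G$. Since $\{L_g, R_h : g, h\in G\}$ generates $B(\bbC^G)$ on a single edge, $\rho_V$ must commute with all of $B(V)=\bigotimes_i B(\bbC^G)$, forcing $\rho_V\in \bbC I_V$; the normalization $\Tr_V\rho_V=1$ then gives $\rho_V=|G|^{-n}I_V$.

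The hard part will be the careful construction of the isolating ribbons: any plaquette lying entirely inside $\bbH$ that touches $\ell_i$ also touches $\ell_{i\pm 1}$, so conjugation by a single such plaquette cannot isolate the action at one $\ell_i$, and one must combine interior plaquettes with their exterior counterparts across $\partial\bbH$ to cancel the unwanted actions on neighboring boundary edges. Once $\rho_V=|G|^{-n}I_V$ is established, $\psi_\fB(y)=\tau_V(y)$ on all of $\fD(I)$, and the tracelity of $\psi_\fB$ follows from that of $\tau_V$. The smooth case is handled identically, with $(\bbC^G)^{\otimes n}$ in place of $\bbC[G]^{\otimes n}$.
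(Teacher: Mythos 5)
Your overall target is correct---$\psi_\fB$ really is the restriction of the normalized matrix trace on $B(V)$, and $\rho_V=|G|^{-n}I_V$ does hold---but the argument you give for the key step fails. The deduction ``$\rho_V$ commutes with $L^{\ell_i}_g$ and $R^{\ell_i}_g$ for all $i,g$, and these generate $B(\bbC^G)$, hence $\rho_V\in\bbC I_V$'' rests on a false claim: left and right translations do \emph{not} generate $B(\bbC^G)$ for any nontrivial $G$. Already for $G=\bbZ/2$ one has $L_g=R_g\in\{1,X\}$, spanning a two-dimensional abelian algebra; in general the commutant of $\{L_g,R_h\}$ contains the center of $\bbC[G]$ (all of $\bbC[G]$ when $G$ is abelian). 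Worse, this is not just a fixable slip in the last line: every product of plaquette operators acts on the boundary edges purely by left/right translations, so conjugation-invariance of $\psi$ under such products can never force $\rho_V$ to be scalar. For instance $\bigotimes_i\frac{1}{|G|}\sum_g L_g$ is a non-scalar density matrix invariant under conjugation by \emph{all} single-edge (and hence all ribbon-assembled) translations. Pinning down $\rho_V$ requires the star/flatness data as well---e.g.\ uniformity of the diagonal and vanishing of off-diagonal terms via the explicit description of $\im(p_\Delta)$ in Remark~\ref{rem:BasisOfImageOfPDelta}, or equivalently a direct evaluation of $\psi$ on the canonical basis of Remark~\ref{rem:DimensionsOfBoundaryAlgebras}---which is essentially the computation the paper performs using the exchange relations.

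A second, smaller gap: you transport $\psi_\fB$ to $\fD(I)$ through the edge-reduction isomorphism $\fC(I)\cong\fD(I)$ without justification. Under Construction~\ref{const:BoundaryAlgebra}, $\psi_\fB$ on $\fB(I)$ corresponds to $\psi|_{\fC(I)}$, and the reduction map replaces operators (e.g.\ $S^{(g)}_s\mapsto \widetilde S^{(g)}_s$, $Q^{(g)}_p\mapsto\tilde Q^{(g)}_p$) by operators with different support in $\fA$; an abstract $*$-isomorphism need not intertwine $\psi$ with $\psi$. The identity $\psi(x)=\psi(\tilde x)$ does hold here, but verifying it again amounts to evaluating $\psi$ on basis elements, i.e.\ the same computation you were trying to avoid. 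So as written the proposal does not close; if you repair it by computing $\rho_V$ (or $\psi$ on the canonical basis) directly, you recover a slightly stronger statement than the paper's (tracelity via identification with the unique trace on the fusion categorical net), but by essentially the paper's own calculation.
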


\begin{proof} 
It suffices to show that for each interval $I$, $\psi_\fB(xy) = \psi_\fB(yx)$ for all $x, y \in \fB(I)$.  
The map $\fC(I) \to \fB(I)$ given by $x \mapsto x p_{\Delta_I}$ is an isomorphism, and for all $x \in \fC(I)$
\[
\psi_\fB(xp_{\Delta_I})
=
\psi(x p_{\Delta_I})
=
\psi(x).
\]
Hence it suffices to show that $\psi(xy) = \psi(yx)$ for all $x, y \in \fC(I)$.
We prove this result in the case that $I$ is smooth; the proof when $I$ is rough is analogous.

Let $n$ denote the number of edges in $I$.
Then a canonical basis of $\fC(I)$ is given by $\set{\prod_{i = 1}^{n}R_{\ell_i}^{g_i}\prod_{i = 1}^{n-1} S_{s_i}^{(h_i)}}{g_i,h_i\in G}$. 
It suffices to prove that $\psi(xy) = \psi(yx)$ for basis elements $x,y$.

Let $x = \prod_{i = 1}^{n}R_{\ell_i}^{g_i}\prod_{i = 1}^{n-1} S_{s_i}^{(h_i)}$ for some $R_{\ell_i}^{g_i}$ and $S_{s_i}^{(h_i)}$. 
Observe that
$$
\psi(x) =\begin{cases}
		    \frac{1}{|G|^{n-1}}, & \text{if $g_i = e$ for all $i$}\\
            0, & \text{otherwise.}
		 \end{cases}
$$
Now, we let $x = \prod_{i = 1}^{n}R_{\ell_i}^{g_i}\prod_{i = 1}^{n-1} S_{s_i}^{(h_i)}$ and $y = \prod_{i = 1}^{n}R_{\ell_i}^{\widetilde{g}_i}\prod_{i = 1}^{n-1} S_{s_i}^{(\widetilde{h}_i)}$.  
Then by the exchange relation
\[
S_{s_i}^{(h_i)}R_{\ell_i}^{\widetilde{g}_i}R_{\ell_{i+1}}^{\widetilde{g}_{i+1}} = R_{\ell_i}^{\widetilde{g}_i}R_{\ell_{i+1}}^{\widetilde{g}_{i+1}}S_{s_i}^{(\widetilde{g}_ih_i\widetilde{g}_{i+1}^{-1})}
\]
we have that 
\[
\psi(xy) = \prod_{i = 1}^{n}R_{\ell_i}^{g_i\widetilde{g}_i}\prod_{i = 1}^{n-1}(S_{s_i}^{(\widetilde{g}_ih_i\widetilde{g}_{i+1}^{-1})}S_{s_i}^{(\widetilde{h}_i)})
\]
Hence  we have that $\psi(xy) = \frac{1}{|G|^{n-1}}$ if $g_i\widetilde{g_i} = e$ and $\widetilde{g}_ih_i\widetilde{g}_{i+1}^{-1} = \widetilde{h}_i$ for every $i$, and $\psi(xy) = 0$ otherwise.
By the same argument, $\psi(yx) = \frac{1}{|G|^n}$ under the same conditions and $\psi(yx) = 0$ otherwise.  
Hence $\psi(xy) = \psi(yx)$.
\end{proof}

\begin{cor}
Let $\bbH$ be a half plane with rough or smooth boundary.  
Then $\pi_\psi(\fA_\bbH)''$ is a type $\rmI\rmI_\infty$ factor, where $\pi_\psi$ is the GNS representation for $\psi$.  
\end{cor}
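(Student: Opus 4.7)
The plan is to deduce the corollary from Theorem \ref{thm:CanonicalStateTracial} combined with background AQFT results on the Quantum Double. Set $M := \pi_\psi(\fA_\bbH)''$ and let $p \in M$ be the infimum of the projections $\{\pi_\psi(p_\Delta) : \Delta \subset \bbH\}$. As noted in the paragraph preceding the corollary, $\pi_\psi(\fB)$ is weakly dense in $pMp$, and the vector state $\omega_{\Omega_\psi}$ on $pMp$ restricts to $\psi_\fB$ on $\fB$ (using $p\Omega_\psi = \Omega_\psi$, since $\psi(p_\Delta) = 1$ for every $\Delta$).

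My first step is to identify $pMp$ with the hyperfinite II$_1$ factor $R$. By Theorem \ref{thm:CanonicalStateTracial}, $\psi_\fB$ is a faithful tracial state on $\fB$. Because the Bratteli diagrams (\ref{eq:BratteliHilbG}) and (\ref{eq:BratteliRepG}) are stationary and periodic, $\fB$ is a simple AF algebra with a unique tracial state, so its weak closure in the GNS representation of $\psi_\fB$ is $R$. Hence $pMp \cong R$, which forces $p$ to be a finite projection in $M$ and $M$ to be not of type I. Moreover $p \lneq 1_M$: for any sufficiently large rectangle $\Delta \subset \bbH$, the projection $p_\Delta$ is a proper subprojection of $1$ in the simple matrix algebra $\fA(\Delta)$, and faithfulness of $\pi_\psi|_{\fA(\Delta)}$ gives $\pi_\psi(p_\Delta) \lneq 1_M$, from which $p \leq \pi_\psi(p_\Delta) \lneq 1_M$.

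The final step is to upgrade from a II$_1$ corner to a II$_\infty$ factor. For the rough cut, $\bbH$ is a cone in the convention of \cite{2212.09036} and Ogata's arguments apply directly. For the smooth cut, I would adapt her strategy using the factoriality of $M$ established in \cite{MR3426207, NaaijkensThesis}: by translating $p$ in the bulk direction and using partial isometries built from elements of $\fA_\bbH$ implementing ground-state equivalences, one constructs infinitely many mutually orthogonal projections in $M$, each equivalent to $p$, summing to a projection at most $1_M$. This forces the unique (up to scaling) semifinite trace on $M$ to satisfy $\tau(1_M) = \infty$, so $M$ is type II$_\infty$. The main technical obstacle is this last verification for the smooth boundary, where the dual convention of \cite{2212.09036} has to be translated into our setting and one must check that the translated projections are genuinely equivalent \emph{inside} $M$ and not merely in the ambient $B(H_\psi)$; factoriality of $M$ together with equality of their (formal) traces is the tool that closes this gap.
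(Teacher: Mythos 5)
Your overall route is the paper's: compress $M\coloneqq\pi_\psi(\fA_\bbH)''$ by $p=\bigwedge_{\Delta\subseteq\bbH}p_\Delta$, use Theorem \ref{thm:CanonicalStateTracial} together with the stationary, periodic Bratteli diagrams \eqref{eq:BratteliHilbG}, \eqref{eq:BratteliRepG} to see that $pMp$ is a (hyperfinite) type $\rmI\rmI_1$ factor, and combine this with factoriality/infiniteness of $M$ from the AQFT literature to conclude type $\rmI\rmI_\infty$. Your first two steps are at the same level of rigor as the paper, which likewise relies on the preamble's identification of $\fB$ with the compression $pMp$ and on \cite{MR3426207}.

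The genuine problem is your last step for the smooth cut. The paper disposes of it by noting that the arguments of \cite{MR3426207} show $M$ is an \emph{infinite factor} for either boundary (this does not depend on the rough/smooth distinction), after which ``infinite factor with a type $\rmI\rmI_1$ corner'' immediately gives type $\rmI\rmI_\infty$. Your proposed replacement---``translating $p$ in the bulk direction'' to obtain infinitely many mutually orthogonal projections equivalent to $p$---fails as described: if $v$ points into the bulk, the translation unitary $U_v$ satisfies $U_vpU_v^* = \bigwedge_{\Delta\subseteq\bbH+v}\pi_\psi(p_\Delta)\geq p$, since the infimum is now over a \emph{smaller} family of rectangles; translates of $p$ perpendicular to the boundary are nested, not orthogonal, and translates parallel to the boundary leave $p$ fixed. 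So no family of orthogonal translates of $p$ exists, and the equivalence-checking issue you flag never even arises. To produce orthogonal projections equivalent to $p$ one must leave the ground-state sector, e.g.\ via isometries/partial isometries built from ribbon (string) operators creating excitations localized in $\bbH$---which is essentially the content of the cited arguments in \cite{MR3426207,2212.09036}. The clean fix is simply to invoke those arguments for infiniteness and factoriality of $M$ for both boundary types, as the paper does, rather than re-deriving infiniteness by hand.
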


\begin{proof}
By arguments in \cite{MR3426207}, $\pi_\psi(\fA_\bbH)''$ is an infinite factor.  
Let $p \coloneqq \bigwedge_{\Delta \subseteq \bbH} p_\Delta$.  
Then since $\fB \cong p \pi_\psi(\fA_\bbH)'' p$, we have by Theorem \ref{thm:CanonicalStateTracial} that $p \pi_\psi(\fA_\bbH)''p$ is a type $\rmI\rmI_1$ factor.
The result follows.
\end{proof}

\section{Isomorphism with the SPT Boundary Algebra}\label{sec:SPT Boundary Algebra}

In addition to our direct method of calculating the boundary algebra of the Quantum Double model, we may calculate it using gauge theory.  
One construction of the Quantum Double model for a group $G$ is by gauging a trivial bosonic $G$-Symmetry Protected Topological theory (SPT). 
The gauging map induces a relationship between the boundary algebra of the Quantum Double and certain operators on the $G$-equivariant SPT Hilbert space on a finite lattice.

We first construct the $G$-SPT and then review the standard gauging procedure to produce the Quantum Double \cite{PhysRevB.87.125114}. 
We then prove a correspondence between the low energy operators of the SPT and the boundary algebra of the Quantum Double.  
Finally, we will compute the low energy operators of the SPT, thereby computing the boundary algebra of the Quantum Double.

Let $G$ be a finite group and consider the $\Z^2$ lattice in the plane, denoted $\cL$. 
We associate $\mathbb{C}^G$-spins to each vertex of the lattice, rather than each edge.
Since the edges play no role in the SPT-theory until we construct the gauging map, we draw edges in orange rather than black.
$$
\tikzmath{
\draw[thick, orange, scale = .5] (-2.5,-2.5) grid (2.5, 2.5);
\foreach \x in {-1,-.5,0,.5,1}{
\foreach \y in {-1,-.5,0,.5,1}{
\filldraw (\x, \y) circle (0.05cm);
}}
}
$$
For $g\in G$, we use the operators $L^g_v,R^g_v,P^g_v$ on each vertex $v$ as they are defined on the edges of the Quantum Double.

\begin{nota}
For any rectangular subgraph $\Delta\subset \mathcal{L}$, we obtain a finite-dimensional Hilbert space $\calH_{\spt}=\cH_{\spt}(\Delta) = \C[G]^{\otimes N}$, where $N$ is the number of vertices in $\Delta$. We will consider two types of rectangular regions $\Delta$, rough and smooth, respectively, as in the following diagram:
$$
\tikzmath{
\foreach \y in {0,.5,1} {
\draw[thick,orange] (-.5,\y) -- (1.5,\y);
}
\foreach \x in {0,.5,1} {
\draw[thick,orange] (\x,-.5) -- (\x,1.5);
}
\foreach \x in {0,.5,1} {
\filldraw[DarkGreen] (\x,-.5) circle (.05cm);
\filldraw[DarkGreen] (\x,1.5) circle (.05cm);
\foreach \y in {0,.5,1} {
\filldraw[blue] (\x,\y) circle (.05cm);
}}
\foreach \y in {0,.5,1} {
\filldraw[DarkGreen] (-.5,\y) circle (.05cm);
\filldraw[DarkGreen] (1.5,\y) circle (.05cm);
}
\foreach \y in {-.5,0,.5,1,1.5} {
\draw[thick,orange] (2.5,\y) -- (4.5,\y);
}
\foreach \x in {2.5,3,3.5,4,4.5} {
\draw[thick,orange] (\x,-.5) -- (\x,1.5);
}
\foreach \x in {0,.5,1} {
\filldraw[DarkGreen] (\x+3,-.5) circle (.05cm);
\filldraw[DarkGreen] (\x+3,1.5) circle (.05cm);
\foreach \y in {0,.5,1} {
\filldraw[blue] (\x+3,\y) circle (.05cm);
}}
\foreach \y in {0,.5,1} {
\filldraw[DarkGreen] (2.5,\y) circle (.05cm);
\filldraw[DarkGreen] (4.5,\y) circle (.05cm);
}
\filldraw[DarkGreen] (4.5,1.5) circle (.05cm);
\filldraw[DarkGreen] (2.5,1.5) circle (.05cm);
\filldraw[DarkGreen] (4.5,-.5) circle (.05cm);
\filldraw[DarkGreen] (2.5,-.5) circle (.05cm);
}
$$
We let $\Delta^\circ$ denote the full subgraph of $\Delta$ of all vertices of degree four in $\Delta$, i.e., all vertices whose 4 nearest neighbors all lie in $\Delta$.
Above, we denote vertices in $\Delta^\circ$ by blue nodes.
We define $\partial \Delta\coloneqq \Delta\setminus \Delta^\circ$, which we denote above by green nodes. 
(When we study boundary algebras for rough edges below, some green nodes on the edges will be shaded gray; we will discus this distinction later on.)
\end{nota}

We define the Hamiltonian on $\calH_{\spt}$ by
\[
    H_{\spt}=H_{\spt}(\Delta)\coloneqq
    -\sum\limits_{v\in\Delta^\circ}B_v
\]
where $B_v$ is the orthogonal projector on $\bbC^G$ given by 
\[
B_v = \frac{1}{|G|} \sum_{g \in G} R^g_v.
\]
There is a symmetry action on $\calH_{\spt}$ defined by the map $U_g = \prod_{v}L^g_v$ for each $g \in G$. 
Now consider the $G$-equivariant subspace $\calH^G_{\spt} \subset \calH_{\spt}$, which is the space of fixed points of the $U_g$ for all $g \in G$, i.e., 
\[
\cH_{\spt}^G
\coloneqq 
\set{|v\rangle \in \cH_{\spt}}{U_g|v\rangle = |v\rangle \; \forall g \in G}.
\]
Notice that $P_G \coloneqq \frac{1}{|G|}\sum_{g \in G} U_g$ is the projector onto $\cH_{\spt}^G$,
and
\[
\set{\frac{1}{\sqrt{|G|}}\sum_{g \in G} |g, gh_1, \ldots, gh_{N-1}\rangle}{h_i \in G}
\]
is an orthonormal basis for $\cH_{\spt}^G$.

We now define the gauging map $\Gamma\colon\cH_{\spt}^G \to \calH_{\qd}$, with both systems on finite lattices. 
The edges in the Quantum Double lattice will be the edges of the dual lattice of $\mathcal{L}$ which intersect an edge in $\Delta$. 
The intersection points exactly correspond to the sites of the Quantum Double.
Importantly, the rough boundary in the SPT will become the smooth boundary in the Quantum Double and vice versa.
The black lines represent the region $\Delta$ for the Quantum Double as defined in the earlier sections.
Here is an example below building on our earlier examples.
$$
\tikzmath{
\foreach \y in {0,.5,1} {
\draw[thick,orange] (-.5,\y) -- (1.5,\y);
}
\foreach \x in {0,.5,1} {
\draw[thick,orange] (\x,-.5) -- (\x,1.5);
}
\foreach \y in {-.25,.25,.75,1.25} {
\draw[black] (-.25,\y) -- (1.25,\y);
}
\foreach \x in {-.25,.25,.75,1.25} {
\draw[black] (\x,-.25) -- (\x,1.25);
}
\foreach \x in {-.25,.25,.75,1.25}{
\foreach \y in {0,.5,1}{
\filldraw[black] (\x,\y) circle (.04cm);
}
}

\foreach \x in {0,.5,1}{
\foreach \y in {-.25,.25,.75,1.25}{
\filldraw[black] (\x,\y) circle (.04cm);
}
}

\foreach \x in {0,.5,1} {
\filldraw[DarkGreen] (\x,-.5) circle (.05cm);
\filldraw[DarkGreen] (\x,1.5) circle (.05cm);
\foreach \y in {0,.5,1} {
\filldraw[blue] (\x,\y) circle (.05cm);
}}
\foreach \y in {0,.5,1} {
\filldraw[DarkGreen] (-.5,\y) circle (.05cm);
\filldraw[DarkGreen] (1.5,\y) circle (.05cm);
}
\foreach \y in {-.5,0,.5,1,1.5} {
\draw[thick,orange] (2.5,\y) -- (4.5,\y);
}
\foreach \x in {2.5,3,3.5,4,4.5} {
\draw[thick,orange] (\x,-.5) -- (\x,1.5);
}

\foreach \y in {-.25,.25,.75,1.25} {
\draw[black] (2.25,\y) -- (4.75,\y);
}
\foreach \x in {2.75,3.25,3.75,4.25} {
\draw[black] (\x,-.75) -- (\x,1.75);
}
\foreach \x in {2.75,3.25,3.75,4.25}{
\foreach \y in {-.5,0,.5,1,1.5}{
\filldraw[black] (\x,\y) circle (.04cm);
}
}

\foreach \x in {2.5,3,3.5,4,4.5}{
\foreach \y in {-.25,.25,.75,1.25}{
\filldraw[black] (\x,\y) circle (.04cm);
}
}

%
\foreach \x in {0,.5,1} {
\filldraw[DarkGreen] (\x+3,-.5) circle (.05cm);
\filldraw[DarkGreen] (\x+3,1.5) circle (.05cm);
\foreach \y in {0,.5,1} {
\filldraw[blue] (\x+3,\y) circle (.05cm);
}}
\foreach \y in {0,.5,1} {
\filldraw[DarkGreen] (2.5,\y) circle (.05cm);
\filldraw[DarkGreen] (4.5,\y) circle (.05cm);
}
\filldraw[DarkGreen] (4.5,1.5) circle (.05cm);
\filldraw[DarkGreen] (2.5,1.5) circle (.05cm);
\filldraw[DarkGreen] (4.5,-.5) circle (.05cm);
\filldraw[DarkGreen] (2.5,-.5) circle (.05cm);
}
$$

One can interpret this construction as a derivation of the Quantum Double from the SPT.  
As we will see, the map $\Gamma$ induces a map on operators which sends the SPT Hamiltonian to the flux ($B_p$) terms in the Quantum Double Hamiltonian.  
The local gauge constraint obtained from the gauging map will be enforced in the Quantum Double by adding the star terms ($A_v$).

We orient the edges of $\mathcal{L}$ from left to right and up to down.
Consider the map from $\cH_{\spt}$ to $\cH_{\qd}$ 
which maps between standard ONB vectors according to the following rule.
\begin{align*}
\tikzmath{
\draw[thick, orange] (0,0) -- (1,0);
\filldraw (0,0)node[left]{$\scriptstyle h$} circle (.05cm);
\filldraw (1,0)node[right]{$\scriptstyle k$} circle (.05cm);
}
&\longmapsto
\quad
\tikzmath{
\draw (0,0) -- (0,1);
\filldraw (0,.5) node[right]{$\scriptstyle h^{-1}k$} circle (.05cm);
}
\\
\tikzmath{
\draw[thick, orange] (0,0) -- (0,1);
\filldraw (0,0)node[left]{$\scriptstyle k$} circle (.05cm);
\filldraw (0,1)node[left]{$\scriptstyle h$} circle (.05cm);
}
\qquad
&\longmapsto
\tikzmath{
\draw (0,0) -- (1,0);
\filldraw (.5,0) node[above]{$\scriptstyle h^{-1}k$} circle (.05cm);
}
\end{align*}
Observe that this map does not see the $U_g$ action on $\cH_{\spt}$, as 
\begin{align*}
\tikzmath{
\draw[thick, orange] (0,0) -- (1,0);
\filldraw (0,0)node[left]{$\scriptstyle gh$} circle (.05cm);
\filldraw (1,0)node[right]{$\scriptstyle gk$} circle (.05cm);
}
&\longmapsto
\quad
\tikzmath{
\draw (0,0) -- (0,1);
\filldraw (0,.5) node[right]{$\scriptstyle h^{-1}g^{-1}gk=h^{-1}k$} circle (.05cm);
}
\\
\tikzmath{
\draw[thick, orange] (0,0) -- (0,1);
\filldraw (0,0)node[left]{$\scriptstyle gk$} circle (.05cm);
\filldraw (0,1)node[left]{$\scriptstyle gh$} circle (.05cm);
}
\qquad
&\longmapsto
\tikzmath{
\draw (0,0) -- (1,0);
\filldraw (.5,0) node[above]{$\scriptstyle h^{-1}g^{-1}gk=h^{-1}k$} circle (.05cm);
}
\end{align*}
We thus see that this map descends to a map on $P_G\cH_{\spt}$ given by
\begin{align*}
\frac{1}{\sqrt{|G|}}\sum_g
\tikzmath{
\draw[thick, orange] (0,0) -- (1,0);
\filldraw (0,0)node[left]{$\scriptstyle gh$} circle (.05cm);
\filldraw (1,0)node[right]{$\scriptstyle gk$} circle (.05cm);
}
&\longmapsto
\quad
\tikzmath{
\draw (0,0) -- (0,1);
\filldraw (0,.5) node[right]{$\scriptstyle h^{-1}k$} circle (.05cm);
}
\\
\frac{1}{\sqrt{|G|}}\sum_g
\tikzmath{
\draw[thick, orange] (0,0) -- (0,1);
\filldraw (0,0)node[left]{$\scriptstyle gk$} circle (.05cm);
\filldraw (0,1)node[left]{$\scriptstyle gh$} circle (.05cm);
}
\qquad
&\longmapsto
\tikzmath{
\draw (0,0) -- (1,0);
\filldraw (.5,0) node[above]{$\scriptstyle h^{-1}k$} circle (.05cm);
}
\end{align*}

That gauging the $G$-SPT produces the Quantum Double model follows from the following theorem.

\begin{thm}
\label{thm:main_SPT}
The gauging map $\Gamma$ is a unitary isomorphism 
$\cH_{\spt}^G \to \im(\prod_{s\subset \Delta^\circ} A_s)\subset \cH_{\qd}$.
Moreover, conjugation by $\Gamma$ sends
$$
-\sum_{v\in \Delta^\circ} B_v
\overset{\Ad(\Gamma)}{\longmapsto}
-\sum_{p_v}B_{p_v}
$$
\end{thm}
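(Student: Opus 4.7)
The plan is to verify both claims at the level of simple tensors in a chosen orthonormal basis. Fix a reference SPT vertex $v_0 \in \Delta$; then $\cH_{\spt}^G$ has an orthonormal basis $|\vec h\rangle_{\spt} \coloneqq \frac{1}{\sqrt{|G|}} \sum_{g \in G} |g, gh_1, \dots, gh_{N-1}\rangle$ indexed by tuples $\vec h = (h_1, \dots, h_{N-1}) \in G^{N-1}$ (with $h_0 = e$). Because the gauging rule $|k, k'\rangle \mapsto |k^{-1}k'\rangle$ on each edge is invariant under simultaneous left-multiplication $k \mapsto gk$, $k' \mapsto gk'$, the $|G|$ summands in $|\vec h\rangle_{\spt}$ all have the same image, so $\Gamma$ sends $|\vec h\rangle_{\spt}$ to the single unit simple tensor $|e(\vec h)\rangle_{\qd}$ whose label on each oriented edge from $u$ to $v$ is $h_u^{-1} h_v$. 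Distinct tuples yield distinct edge configurations (one recovers $\vec h$ by propagating labels along a spanning tree rooted at $v_0$), so $\Gamma$ is a well-defined isometry.

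To identify the image with $\im(\prod_{s\subset\Delta^\circ} A_s)$, I would show two inclusions. First, each tensor $|e(\vec h)\rangle_{\qd}$ is flat at every interior star $s$: the four edge labels around $s$ satisfy the relation $gh = \ell k$ appearing in the definition of $A_s$ because they measure telescoping differences of SPT vertex labels along the two paths joining opposite corners of the SPT face dual to $s$; hence $A_s |e(\vec h)\rangle_{\qd} = |e(\vec h)\rangle_{\qd}$. Conversely, given any simple tensor in the joint image of the relevant $A_s$'s, one can set $h_{v_0} = e$ and propagate the vertex label along paths in $\Delta$, the result being path-independent precisely because of flatness at every enclosed interior star; this produces a unique $\vec h$ with $\Gamma|\vec h\rangle_{\spt}$ equal to the given tensor. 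Both spaces have dimension $|G|^{N-1}$, confirming that $\Gamma$ is a unitary isomorphism onto the stated target.

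For the Hamiltonian intertwining, I would compute $\Gamma R^g_v \Gamma^*$ on a basis vector $|e(\vec h)\rangle_{\qd}$ at an interior vertex $v \in \Delta^\circ$. The operator $R^g_v$ replaces $h_v$ by $h_v g$; via the gauging rule, the four incident edges transform as follows: edges oriented into $v$ (the left and top neighbors, with labels $h_u^{-1} h_v$) are right-multiplied by $g$ on the QD side, and edges oriented out of $v$ (the right and bottom neighbors, with labels $h_v^{-1} h_u$) are left-multiplied by $g^{-1}$. Comparing with \eqref{eq:QDBp}, which places $R_{g^{-1}}$ on the left and top of the plaquette and $L_g$ on the right and bottom, this action is precisely $B_{p_v}^{(g^{-1})}$, so $\Gamma R^g_v \Gamma^* = B_{p_v}^{(g^{-1})}$; averaging and reindexing gives $\Gamma B_v \Gamma^* = B_{p_v}$, and the Hamiltonians agree up to sign. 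The main obstacle is the bookkeeping in this last step: tracking which of the four edges around $v$ is incoming versus outgoing under the chosen edge orientation, and matching left- versus right-multiplication on each edge type against the placement of $L_g$ and $R_{g^{-1}}$ in $B_p^{(g)}$. Once this dictionary is locked in, both halves of the theorem follow directly from the explicit formula for $\Gamma$ on basis vectors.
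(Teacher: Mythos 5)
Your proposal is correct and follows essentially the same route as the paper's proof: you verify $\Gamma$ on the explicit ONB of $\cH_{\spt}^G$, establish the image via flatness at interior stars together with path/spanning-tree propagation (path-independence from simple connectedness), and compute $\Gamma R^g_v \Gamma^* = B_{p_v}^{(g^{-1})}$ directly, matching the paper's identification of $R_v^h$ with $B_{p_v}^{(h^{-1})}$. Your orientation bookkeeping (incoming left/top edges acquire $R_g$, outgoing right/bottom edges acquire $L_{g^{-1}}$) agrees with the conventions of \eqref{eq:QDBp}, so no changes are needed.
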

\begin{proof}
The proof of the first claim consists of three steps:
(1) $\im(\Gamma) \subset \im(\prod_{s\subset \Delta^\circ} A_s)$,
(2) $\Gamma$ is isometric, and 
(3), $\Gamma$ is surjective onto $\im(\prod_{s\subset \Delta^\circ} A_s)$.

To prove (1),
we note that each star $s\subset \Delta^\circ$ is the image of a face in the SPT.
Notice that $\Gamma$ maps a face as follows,
\[
\frac{1}{\sqrt{|G|}}\sum_g
\tikzmath{
\draw[orange, thick] (0,0) rectangle (1,1);
\filldraw (0,0)node[left]{$\scriptstyle gh$} circle (0.05);
\filldraw (1,0)node[right]{$\scriptstyle gi$} circle (0.05);
\filldraw (1,1)node[right]{$\scriptstyle gj$} circle (0.05);
\filldraw (0,1)node[left]{$\scriptstyle gk$} circle (0.05);
} 
\longmapsto
\tikzmath{
\draw (-1,0) -- (1,0);
\draw (0,-1) -- (0,1);
\filldraw (.5,0)node[above]{$\scriptstyle j^{-1}i$} circle (0.05);
\filldraw (0,.5)node[right, yshift=.1cm]{$\scriptstyle k^{-1}j$} circle (0.05);
\filldraw (-.5,0)node[above]{$\scriptstyle k^{-1}h$} circle (0.05);
\filldraw (0,-.5)node[right]{$\scriptstyle h^{-1}i$} circle (0.05);
}
\]
and the star on the right is in the image of the corresponding $A_s$ operator.

By construction, $\Gamma$ sends our chosen normalized elements of $\calH_{\spt}^G$ to the basis elements of $\calH_{\qd}$. 
Moreover, it maps distinct ONB elements to distinct ONB elements, because the difference in group element labels on two sites in $\calH_{\spt}^G$ is the same as the product of the path of edge labels between these two sites on the dual lattice in $\calH_{\qd}$, being careful to invert factors in this product based on the orientation of the edges.
Note that $\Delta$ is simply connected in the sense that any closed path may be deformed to the identity by shrinking the path across one plaquette at a time.
Since $\Delta$ is simply connected and $\im(\Gamma)\subseteq \im\left(\prod_{s \subset \Delta^\circ} A_s \right)$, the resulting ONB element is independent of the choice of path.
Physically, this corresponds to the fact that $G$-equivariant observables in the SPT are also observables in the Quantum Double.
Hence $\Gamma$ is isometric, proving (2).

To prove (3), pick any basis state $|v\rangle\in\im\left(\prod_{s \subset \Delta^\circ} A_s \right)$, and 
pick a maximal spanning tree 
$\Lambda\subset \Delta$.
Since $\Lambda$ is acyclic, we may choose a basis state $|w\rangle\in\calH_{\spt}^G$ such that the basis state $\Gamma|w\rangle$ agrees with $|v\rangle$ on the edges of the dual lattice which $\Lambda$ intersects. 
By using the fact that $|v\rangle \in\im\left(\prod_{s \subset \Delta^\circ} A_s \right)$ we can deduce that $\Gamma|w\rangle$ agrees with $|v\rangle$ on the remaining edges as well. 

To prove the second claim, using the following diagram, one can derive the fact that the operator $R_v^h$ in the SPT gets sent to $B^{h^{-1}}_{p_v}$ where $p_v$ is the plaquette in the Quantum Double associated with the vertex $v$.
$$
\frac{1}{\sqrt{|G|}}\sum_g
\tikzmath{
\draw[thick, orange] (0,-1) -- (0,1);
\draw[thick, orange] (-1,0) -- (1,0);
\filldraw (0,0) node[right, yshift=.2cm]{$\scriptstyle gh$} circle (.05cm);
\filldraw (1,0) node[right]{$\scriptstyle gi$} circle (.05cm);
\filldraw (0,1) node[left]{$\scriptstyle gj$} circle (.05cm);
\filldraw (-1,0) node[left]{$\scriptstyle gk$} circle (.05cm);
\filldraw (0,-1) node[left]{$\scriptstyle g\ell$} circle (.05cm);
}
\overset{\Gamma}{\longmapsto}
\tikzmath{
\draw (-.5,-.5) rectangle (.5,.5);
\filldraw (.5,0)node[right]{$\scriptstyle h^{-1}i$} circle (.05cm);
\filldraw (0,.5)node[above]{$\scriptstyle j^{-1}h$} circle (.05cm);
\filldraw (-.5,0)node[left]{$\scriptstyle k^{-1}h$} circle (.05cm);
\filldraw (0,-.5)node[below]{$\scriptstyle h^{-1}\ell$} circle (.05cm);
}
$$
Therefore conjugation by $\Gamma$ sends the SPT Hamiltonian $-\sum_{v\in \Delta^\circ} B_v$
on $\cH_{\spt}^G$
to $-\sum_{p_v}B_{p_v}$ on $\cH_{\qd}$ as claimed.
\end{proof}

It remains to describe the boundary algebra of $\calH^G_{\spt}$.
We take $\Lambda$ to be a subrectangle in the rectangular region $\Delta$. 
We also require that $\partial\Delta\cap\partial\Lambda$ is some 1D interval of vertices $I$;
here, $\partial\Delta$ and $\partial\Lambda$ are the outermost vertices of $\Delta$ and $\Lambda$ respectively.

We define the boundary algebra of the SPT to be the algebra of operators $p_\Delta x p_\Delta $ where $x$ is a $G$-equivariant operator supported in $\Lambda$ and $p_\Delta$ is the product of $B_v$ operators with $v\in\Delta^\circ$. 
These are the $G$-equivariant low-energy operators where we equate two operators if they are equivalent on the low-energy subspace.

We demonstrate the geometry of the interval $I$ as follows with the smooth boundary of the Quantum Double on the left and the rough boundary on the right.  The boundary algebra of the Quantum Double is supported on the edges intersected by the solid orange lines.
We will explain the meaning of the gray vertices later.

\begin{equation}
\label{eq:RoughAndSmoothSPTvsQD}
\tikzmath[auto=center,every node/.style={circle, fill=black, scale=0.4}, thick, scale=.8]{
    \draw (0, 4) -- (0, -1);
    \draw (0, 3) -- (-1, 3);
    \draw (0, 2) -- (-1, 2);
    \draw (0, 1) -- (-1, 1);
    \draw (0, 0) -- (-1, 0);
    \draw[orange] (-.5,-.5)--(-.5,3.5); 
    \draw[orange] (-.5,-.5)--(.5,-.5);
    \draw[orange] (-.5,.5)--(.5,.5);
    \draw[orange] (-.5,1.5)--(.5,1.5);
    \draw[orange] (-.5,2.5)--(.5,2.5);
    \draw[orange] (-.5,3.5)--(.5,3.5);
    \filldraw[DarkGreen] (.5,-.5) circle (.07cm);
    \filldraw[DarkGreen] (.5,.5) circle (.07cm);
    \filldraw[DarkGreen] (.5,1.5) circle (.07cm);
    \filldraw[DarkGreen] (.5,2.5) circle (.07cm);
    \filldraw[DarkGreen] (.5,3.5) circle (.07cm);
    \filldraw[blue] (-.5,-.5) circle (.07cm);
    \filldraw[blue] (-.5,.5) circle (.07cm);
    \filldraw[blue] (-.5,1.5) circle (.07cm);
    \filldraw[blue] (-.5,2.5) circle (.07cm);
    \filldraw[blue] (-.5,3.5) circle (.07cm);
    \filldraw[black] (0,-.5) circle (0.07cm);
    \filldraw[black] (0,.5) circle (0.07cm);
    \filldraw[black] (0,1.5) circle (0.07cm);
    \filldraw[black] (0,2.5) circle (0.07cm);
    \filldraw[black] (0,3.5) circle (0.07cm);
    \filldraw[black] (-.5,0) circle (0.07cm);
    \filldraw[black] (-.5,1) circle (0.07cm);
    \filldraw[black] (-.5,2) circle (0.07cm);
    \filldraw[black] (-.5,3) circle (0.07cm);
    \draw[blue] (-1, 4) -- (0.5,4) 
        arc [x radius = 0.25, y radius = 0.25, start angle = 90, end angle = 0] -- (0.75,-0.75) arc [x radius = 0.25, y radius = 0.25, start angle = 0, end angle = -90] -- (-1, -1);
    
    \draw[orange] (5.5,-.5)--(5.5,3.5);
    \draw[dashed,orange] (4.5,-.5)--(5.5,-.5);
    \draw[orange] (4.5,.5)--(5.5,.5);
    \draw[orange] (4.5,1.5)--(5.5,1.5);
    \draw[orange] (4.5,2.5)--(5.5,2.5);
    \draw[dashed,orange] (4.5,3.5)--(5.5,3.5);
    \filldraw[gray] (5.5,-.5) circle (.07cm);
    \filldraw[DarkGreen] (5.5,.5) circle (.07cm);
    \filldraw[DarkGreen] (5.5,1.5) circle (.07cm);
    \filldraw[DarkGreen] (5.5,2.5) circle (.07cm);
    \filldraw[gray] (5.5,3.5) circle (.07cm);
    \filldraw[black] (5,-.5) circle (0.07cm);
    \filldraw[black] (5,.5) circle (0.07cm);
    \filldraw[black] (5,1.5) circle (0.07cm);
    \filldraw[black] (5,2.5) circle (0.07cm);
    \filldraw[black] (5,3.5) circle (0.07cm);
    \filldraw[black] (5.5,0) circle (0.07cm);
    \filldraw[black] (5.5,1) circle (0.07cm);
    \filldraw[black] (5.5,2) circle (0.07cm);
    \filldraw[black] (5.5,3) circle (0.07cm);
    \draw[blue] (4.5, 4) -- (5.75,4) 
        arc [x radius = 0.25, y radius = 0.25, start angle = 90, end angle = 0] -- (6,-0.75) arc [x radius = 0.25, y radius = 0.25, start angle = 0, end angle = -90] -- (4.5, -1);
    \draw (5, 4) -- (5, -1);
    \draw (4.5, 3) -- (6, 3);
    \draw (4.5, 2) -- (6, 2);
    \draw (4.5, 1) -- (6, 1);
    \draw (4.5, 0) -- (6, 0);
    \node[fill=none] at (0.25, 4.25) {\textcolor{blue}{\Huge{$\Lambda$}}};
    \node[fill=none] at (5.25, 4.25) {\textcolor{blue}{\Huge{$\Lambda$}}};
}
\end{equation}

A complete set of basis ground states of this SPT are given by product states on each site such that all sites in $\Delta^\circ$ have the state $\frac{1}{\sqrt{|G|}}\sum\limits_{g\in G}|g\rangle$. Hence the low-energy operators are all supported on $\partial\Delta=\Delta\setminus\Delta^\circ$. 
In other words, all low-energy operators supported in $\Lambda$ are supported on the sites in $I=\partial \Delta\cap \partial \Lambda$. 
Furthermore, the algebra of all low energy operators supported in $\Lambda$ is the algebra of all $G$-equivariant operators supported on the sites in $I$.
Thus the boundary algebra of the SPT is $\End_{\Rep(G)}((\bbC^G)^{\otimes n})$ where $n=|I|$. 

We now describe a generating set for the $G$-equivariant operators supported on $I$ when $I$ is a rough boundary of the SPT/smooth boundary of the Quantum Double.
For each $k \in G$, define an operator $Z^k_{i,j}$, that acts as the identity everywhere except at the sites $i,j$, where it is defined on basis elements as 
$$
Z^k_{i,j} |g_i, g_j \rangle \coloneqq \delta_{k= g_i^{-1}g_j} |g_i, g_j \rangle.
$$
It is easy to check that $Z^k_{i,j}$ is $G$-equivariant for all $k \in G$. 
Notice that the set 
\[
\set{R^g_v , Z^k_{i,j}}{g, k \in G , \; v, i, j \in V(\mathcal{L})}
\] 
is a generating set for the algebra of $G$-equivariant operators on $\calH_{\spt}$.

Since $\Gamma$ maps $\cH_{\spt}^G$ to $\im(p_A)\subset \cH_{\qd}$,
under conjugation by $\Gamma$ with adjacent vertices $i,i+1$ in $I$ with $i+1$ just below $i$, $Z^k_{i,i+1}$ is sent to 
the operator which projects the ghost edge $(i,i+1)$ connecting $i$ and $i+1$ to $|k\rangle$;
this is exactly the operator $S^{(k)}_{s\setminus(i,i+1)}$ where $s\setminus(i,i+1)$ is the partial
star in $\Delta$ with ghost edge $(i,i+1)$ on the right.
Conjugation by $\Gamma$ also sends $R^g_v$ to $R^g_\ell$ where $\ell$ is the edge just to the left of $v$.
We immediately obtain the following corollary.

\begin{cor}\label{cor:SPTBoundaryAlg}
When $I$ is a rough boundary of the SPT/smooth boundary of the Quantum Double,
the algebra of $G$-equivariant operators supported on $I$ is isomorphic to the smooth boundary algebra for the Quantum Double model under conjugation by the gauging unitary $\Gamma$.
\end{cor}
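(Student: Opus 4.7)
The plan is to use the gauging unitary $\Gamma$ from Theorem \ref{thm:main_SPT} to intertwine the two boundary algebras by conjugation: since $\Gamma$ is a unitary, $\Ad(\Gamma)$ is automatically a $*$-isomorphism between the SPT boundary algebra and its image in the Quantum Double, so all the work is to pin down a generating set on one side and identify its image on the other.

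The first step is to fix the generating set. I would start from the identification of the SPT boundary algebra with $\End_{\Rep(G)}((\bbC^G)^{\otimes n})$, which is established in the text preceding the corollary (the ground space at every interior site of $\Delta^\circ$ is spanned by $\frac{1}{\sqrt{|G|}}\sum_g|g\rangle$, so compressing any $G$-equivariant operator supported in $\Lambda$ depends only on its action at the sites of $I$). A convenient generating set is provided by the operators $R^g_v$ for $v \in I$, $g\in G$, together with the relative-label projectors $Z^k_{i,i+1}$ for adjacent pairs in $I$, which span the $G$-equivariant operators on pairs of neighboring sites.

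The second step is to apply $\Ad(\Gamma)$ to each generator. Using the diagrammatic description of $\Gamma$ established before the corollary, one verifies
\[
\Gamma R^g_v \Gamma^* \;=\; R^g_\ell, \qquad \Gamma Z^k_{i,i+1} \Gamma^* \;=\; S^{(k)}_{s\setminus(i,i+1)},
\]
where $\ell$ is the Quantum Double edge immediately to the left of $v$ and $s\setminus(i,i+1)$ is the partial star whose ghost edge is the dual edge between $i$ and $i+1$. These are precisely the generators of the smooth-boundary algebra $\fC(I)$ as defined in Section~\ref{sec: Boundary Algebra}. Combined with the unitarity of $\Gamma$, this yields the claimed isomorphism.

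The main obstacle is purely bookkeeping: one must verify the rough/smooth duality of the gauging map, namely that a rough SPT boundary becomes the smooth Quantum Double boundary shown in \eqref{eq:RoughAndSmoothSPTvsQD}, and that the ghost edges appearing in the partial stars $S^{(k)}_{s\setminus(i,i+1)}$ correspond exactly to the missing dashed orange edges in the SPT picture. Keeping the orientations on dual edges consistent with the conventions chosen for $R^g_\ell$ and $S^{(k)}_s$ in the Quantum Double is the only delicate point; once the conventions are aligned, the verification of the two generator equalities is a direct local computation on basis vectors, exactly of the flavor carried out just before the corollary statement.
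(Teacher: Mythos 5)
Your proposal is correct and follows essentially the same route as the paper: identify the SPT boundary algebra with the $G$-equivariant operators on the sites of $I$, take the generators $R^g_v$ and the adjacent-pair projectors $Z^k_{i,i+1}$, and check that conjugation by $\Gamma$ sends them to $R^g_\ell$ and $S^{(k)}_{s\setminus(i,i+1)}$, the generators of the smooth-cut algebra $\fC(I)$. The paper's argument is exactly this generator-by-generator computation, stated in the paragraph immediately preceding the corollary.
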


We now focus on the case when $I$ is a smooth boundary of the SPT/rough boundary of the Quantum Double.
Here, we need to be slightly more careful about the support of the operators in the boundary algebra for the Quantum Double.
Observe that the $\Gamma$-conjugates of the
$Z_{i,j}^k$ for any two vertices $i,j\in I$ are supported on edges in the Quantum Double boundary,
as are all $\Gamma$-conjugates of the $R_v^g$ supported on the green vertices in the diagram on the right hand side of \eqref{eq:RoughAndSmoothSPTvsQD}. 
However, the $\Gamma$-conjugates of the $R_v^g$ supported on the two outside gray vertices has support not contained within $\Lambda$!
Hence excluding these  $R_v^g$ supported on the two outside gray vertices, conjugation by $\Gamma$ again gives an explicit isomorphism to the Quantum Double boundary algebra for such $I$.

\begin{cor}\label{cor:SPTBoundaryAlg2}
When $I$ is a smooth boundary of the SPT/rough boundary of the Quantum Double,
a distinguished subalgebra of the $G$-equivariant operators supported on $I$ is isomorphic to the rough boundary algebra for the Quantum Double model under conjugation by the gauging unitary $\Gamma$.
\end{cor}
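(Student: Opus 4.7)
The plan is to make precise the observations given in the paragraph preceding the corollary. I first recall that the $G$-equivariant operators on $\bigotimes_{v \in I} \bbC^G$ are generated by the on-site right-translations $R^g_v$ for $v \in I,\ g \in G$, together with the two-site diagonal projectors $Z^k_{i,j}$ for adjacent pairs $i,j \in I$. Writing $I^\circ$ for $I$ with its two endpoints (the gray vertices of \eqref{eq:RoughAndSmoothSPTvsQD}) removed, I define the candidate distinguished subalgebra
\[
\fE(I) \coloneqq \rmC^*\langle R^g_v,\ Z^k_{i,j} : v \in I^\circ,\ i,j \in I \text{ adjacent},\ g, k \in G\rangle.
\]

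Second, I would compute the $\Gamma$-conjugates of these generators by checking their action on basis vectors of $\cH^G_{\spt}$, paralleling the proof of Theorem \ref{thm:main_SPT}. Since $\Gamma$ sends adjacent SPT labels $g_i, g_j$ to the label $g_i^{-1}g_j$ on the dual QD edge $\ell$, the projector $Z^k_{i,j}$ conjugates to $P^k_\ell$; ranging over all adjacent pairs, these $\ell$ form the rough QD boundary interval $I'$ with $|I'| = |I| - 1$. As also derived in Theorem \ref{thm:main_SPT}, $\Gamma R^g_v \Gamma^*$ equals $B^{(g^{-1})}_{p_v}$ on the QD plaquette $p_v$ centered at $v$. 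For $v \in I^\circ$, the rightmost edge of $p_v$ lies just outside $\Lambda$ while the other three lie in $\widetilde{I'}$, and the restricted action matches the paper's generator $Q^{(g^{-1})}_{p_v}$ of $\fC(I')$. For a gray endpoint $v$, $B^{(g^{-1})}_{p_v}$ acts on additional edges outside $\Lambda$, justifying the exclusion of those $R^g_v$ from $\fE(I)$.

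Third, to conclude that conjugation by $\Gamma$ gives an isomorphism $\fE(I) \xrightarrow{\sim} \fC(I')$, I observe that the $\Gamma$-conjugates of the generators of $\fE(I)$ are precisely the generating set $\{P^k_\ell,\ Q^{(g)}_{p}\}$ of $\fC(I')$ from \eqref{eq:RoughCanonicalBasis}, giving surjectivity; injectivity is automatic because $\Gamma$ is a unitary by Theorem \ref{thm:main_SPT}, so conjugation is a $*$-isomorphism of operator algebras. The main obstacle lies in the bookkeeping of step two: one must carefully track orientation and inverse conventions so that $\Gamma R^g_v \Gamma^*$ restricts (after dropping the outside-$\Lambda$ edge) to the specific generator $Q^{(g^{-1})}_{p_v}$ with its prescribed $L_g$ and $R_{g^{-1}}$ placements, and confirm that no residual action on the excluded edge remains. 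Once this is verified, the argument is a direct parallel of Corollary \ref{cor:SPTBoundaryAlg}.
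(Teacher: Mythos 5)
Your skeleton parallels the paper's own argument (drop the endpoint $R^g_v$, conjugate the remaining generators, match with $P^k_\ell$ and $Q^{(g)}_p$), and your treatment of the $Z^k_{i,j}$ and of why the gray endpoints must be excluded is fine. The gap is in the one step that carries the whole corollary: why the green-vertex $R^g_v$ conjugate \emph{into} the rough boundary algebra. You claim $\Gamma R^g_v \Gamma^*$ is the full four-edge operator $B^{(g^{-1})}_{p_v}$ whose ``rightmost edge lies just outside $\Lambda$,'' and that one then takes ``the restricted action'' by dropping that edge. That move is not legitimate: dropping a tensor factor is not conjugation, and $B^{(g^{-1})}_{p_v}$ would carry a nontrivial translation ($L_g$ or $R_{g^{-1}}$) on that fourth edge, so the ``residual action on the excluded edge'' you defer to bookkeeping would in fact be present and the restriction would change the operator. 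Worse, by your own reasoning for the gray endpoints (``acts on additional edges outside $\Lambda$, hence excluded''), the green generators would then also have to be excluded, and the statement would collapse.

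What actually makes the green vertices work is a geometric fact your picture misses: $I\subset\partial\Delta$ (not merely $\partial\Lambda$), and the gauged Quantum Double lattice consists only of dual edges crossing edges of $\Delta$. For an interior vertex $v\in I^\circ$ there is no SPT edge to its right inside $\Delta$, so the fourth edge of $p_v$ simply does not exist in $\cH_{\qd}$; redoing the computation in Theorem~\ref{thm:main_SPT} (which as stated treats $v\in\Delta^\circ$, where the plaquette is full) for such boundary vertices shows that $\Gamma R^g_v\Gamma^*$ is, on the nose, the three-edge truncated operator $Q^{(g^{-1})}_{p_v}$, supported in $\widetilde{I'}\subseteq\Lambda$ --- no restriction needed. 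For a gray endpoint, by contrast, the edge of $\partial\Delta$ continuing past the endpoint of $I$ lies in $\Delta\setminus\Lambda$, so its dual QD edge does exist and the conjugate genuinely has support outside $\Lambda$; that asymmetry, absent from your account, is precisely why only those generators are dropped. (A smaller point, on which the paper is equally brief: $\Gamma$-conjugation lands in the compression by $\prod_{s\subset\Delta^\circ}A_s$, so strictly one identifies the image with $\fC(I')\prod_{s\subset\Delta^\circ}A_s\cong\fC(I')$, injectivity of the compression following as in the proof of \ref{LTO:Injectivity}.) With the corrected geometry your argument becomes the paper's.
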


For completeness, we include another proof below that this distinguished subalgebra is isomorphic to $\End_{\Hilb(G)}(\bbC[G]^{\otimes n})$ using the SPT boundary algebra generators.

Call the upper most gray vertex $t$ and the bottom most gray vertex $b$.
Notice that no element of the boundary algebra can mix the distinct eigenspaces of the collection of $Z_{t,b}^g$ operators.
This imposes a $G$-grading on the invariant subspaces of the action of the boundary algebra.
In particular, each of these spaces has dimension at most $|G|^n$ where $n=|I|-1$, since we have a $G$-grading on a $|G|^{|I|}$ dimensional boundary Hilbert space.
In fact, each of these spaces is of dimension $|G|^n$.
To see this, notice that the operators $R^{(g)}_v$ can map all eigenspaces of the $n$ independent $Z_{i,i+1}^g$ operators for $b\leq i<t$ into one another.
The $Z_{i,i+1}^g$ operators may project onto any such eigenspace as well.
Therefore, all states within each graded component may be mapped to one another by the boundary algebra.
Therefore, this algebra is isomorphic to $\End_{\Hilb(G)}(\bbC[G]^{\otimes n})$ where $n=|I|-1$. 
(This decomposition of  $\End_{\Hilb(G)}(\bbC[G]^{\otimes n})$ follows immediately from \eqref{eq:BratteliHilbG}.)

It is important to note that the above computation of the boundary algebras in Corollaries \ref{cor:SPTBoundaryAlg} and \ref{cor:SPTBoundaryAlg2} using Theorem \ref{thm:main_SPT} is not limited to the particular type of SPT we chose.  
Indeed, had we started with a non-trivial SPT, gauging would have produced a twisted Quantum Double model and we could have found the boundary algebra in that case.

Moreover, our construction does not rely on the geometry of the lattice we chose.  That is, if we generalize to higher dimensions or non-cubical lattices, we could do a similar construction. Lastly, by computing boundary algebras in this way, we may even consider mixed or disordered boundary conditions.

\section{3D Quantum Double Boundary Algebra}

In this section, we show how the techniques used in Sections \ref{sec: Boundary Algebra} and \ref{sec:SPT Boundary Algebra} extend naturally to the computation of the boundary algebra for the three-dimensional Quantum Double model. 
As noted earlier, this can be done for the general $n$-dimensional Quantum Double model, but for this section we focus on the three-dimensional case for simplicity.
Let $G$ be an abelian group. We show that the boundary algebra admits a nice description in terms of characters of the irreducible representations of $G$.

For this model, we consider a 3D edge lattice $\mathcal{L} \subset \mathbb{R}^3$ where each edge carries $\C[G]$ spins. For a finite region $\Delta \subset \mathcal{L}$, we obtain the finite-dimensional Hilbert space $\calH = \bigotimes_{\ell \in \Delta} \C[G]$. 
The edges of this lattice are oriented as follows.
\[
\tikzmath{
\draw[mid>] (0,0,0) -- (1,0,0);
\draw[mid>] (-1,0,0) -- (0,0,0);
\draw[mid>] (0,-1,0) -- (0,0,0);
\draw[mid>] (0,0,0) -- (0,1,0);
\draw[mid>] (0,0,1.5) -- (0,0,0);
\draw[mid>] (0,0,0) -- (0,0,-1.5);
}
\]

The plaquette term $B_p$ is the same operator as in the two-dimensional model. In particular, we may define this operator on each plaquette $p$ on each planar $\mathbb{Z}^2$ sublattice which has been oriented such that the edges point up and right.
There are actually two distinct orientations of embeddings of the $\mathbb{Z}^2$ lattice as a planar sublattice of $\mathcal{L}$ depending on which side of plane you look, but since $G$ is abelian, these two choices are actually equivalent for defining $B_p$.

The star term $A_s$ is the projection:
\[
A_s \coloneqq \sum_{ghk = \ell mn} 
\tikzmath{
\draw (0,0,1.5) -- (0,0,-1.5);
\draw (1,0,0) -- (-1,0,0);
\draw (0,1,0) -- (0,-1,0);

\filldraw (0,0,0.75) node[left]{$\scriptstyle P_h$} circle (.05cm);
\filldraw (0,0,-0.75) node[right]{$\scriptstyle P_m$} circle (.05cm);
\filldraw (0.65,0,0) node[below]{$\scriptstyle P_n$} circle (.05cm);
\filldraw (-0.5,0,0) node[above]{$\scriptstyle P_g$} circle (.05cm);

\filldraw (0,0.7,0) node[left]{$\scriptstyle P_\ell$} circle (.05cm);
\filldraw (0,-0.65,0) node[left]{$\scriptstyle P_k$} circle (.05cm);
}
\]
where $P_g$ is defined as in the 2D Quantum Double as the projection onto the state $|g\rangle$.

With these interaction operators, the Hamiltonian $H$ and the projection onto the local ground state space $p_\Lambda$ are defined as in Section \ref{subsec:qd}.

\begin{defn}
Given a character $\chi\colon G\to U(1)$,
  we define the operator $Z_\chi\colon \C[G] \to \C[G]$ by $Z_\chi|g\rangle=\chi(g)|g\rangle$.
\end{defn}

\begin{rem}
    The surrounding conditions given in Definition \ref{defn:surrounding-regions} can be defined analogously in the three-dimensional model using rectangular prisms instead of rectangles. 
    In particular, the incomplete $s$-surrounded condition $\Lambda \Subset_s \Delta$ is now modified in the sense that $\partial \Lambda \cap \partial \Delta = P$ is a non-empty two-dimensional plane in $\mathcal{L}$ that lies on one side of $\Lambda$ and $\Delta$. 
    As in the two-dimensional model, we consider $\widetilde{P}$ to be the set of edges comprising $P$ and the adjacent plane worth of edges to $P$ contained in $\Lambda$. 
    In this section, we will assume that $\widetilde{P}$ lies in the bottom face of $\Lambda$ and $\Delta$.
\end{rem}

Using the generators $R_g=L_g$ and $P_g$ for $M_{|G|}(\C)$, one can see that the three-dimensional analogue of the algorithm presented in \cite[Algorithm~3.10]{2307.12552} and an adaption of the argument presented in Proposition 3.9 of the same paper immediately yield the following theorem.

\begin{thm} \label{thm:LTO1,2,3DQuantumDouble}
    The axioms \ref{LTO:CompletelySurrounds}-\ref{LTO:Injectivity} hold for the 3D Quantum Double. 
    The boundary algebra $\mathfrak{C}(P)$ for the smooth cut is given by:
    \[
        \fC(P) = \rmC^\ast \{S^\chi_s, R^g_\ell | s \subset \tilde{P}, \ell \subset P,  g \in G\}
    \]
    where for every irreducible character $\chi$ of $G$, we define $S_s^\chi$ as the product of $Z_\chi$ operators shown in blue below:
    \[
        \begin{tikzpicture}[auto=center,every node/.style={circle, fill=black, scale=0.4}, scale=0.4]

            \def \dx{3.25};
            \def \dy{2.5};
            \def \dz{3};
            \def \nbx{5};
            \def \nby{2};
            \def \nbz{4};
            \def \nbxm{4};
            \def \nbym{1};
            \def \nbzm{3};
            
            \foreach \x in {1,...,\nbxm} {
                \foreach \y in {1,...,\nbym} {
                    \foreach \z in {1,...,\nbz} {
                        \node at (\dx/2+\x*\dx,\y*\dy,\z*\dz) { };
                    }
                }
            }
            
            \foreach \x in {1,...,\nbx} {
                \foreach \y in {1,...,\nbym} {
                    \foreach \z in {1,...,\nbzm} {
                        \node at (\x*\dx,\y*\dy,\dz/2+\z*\dz) { };
                    }
                }
            }

            \foreach \x in {1,...,\nbx} {
                \foreach \z in {1,...,\nbz}{
                    \draw (\x*\dx,\dy,\z*\dz) -- ( \x*\dx,\nby*\dy,\z*\dz);
                    \node at (\x*\dx,4.2,\z*\dz) { };
                }
            }
            
            \foreach \y in {1,...,\nbym} {
                \foreach \z in {1,...,\nbz}{
                    \draw (\dx,\y*\dy,\z*\dz) -- ( \nbx*\dx,\y*\dy,\z*\dz);
                }
            }

            \foreach \x in {1,...,\nbx} {
                \foreach \y in {1,...,\nbym}{
                    \draw (\x*\dx,\y*\dy,\dz) -- ( \x*\dx,\y*\dy,\nbz*\dz);
                }
            }
            
            \draw[blue, thick] (7.36, 0.1, 2.8) -- (7.36, 2.62, 2.8);
            \draw[blue, thick] (7.36, 0.1, 2.8) -- (10.62, 0.1, 2.8);
            \draw[blue, thick] (7.36, 0.1, 2.8) -- (4.10 , 0.1, 2.8);
            \draw[blue, thick] (7.36, 0.1, 2.8) -- (7.36, 0.1, 5.7);
            \draw[blue, thick] (7.36, 0.1, 2.8) -- (7.36, 0.1, -0.2);
            
            \node[fill=blue] at (7.36, 0.1, 1.28) {};
            \node[fill=blue] at (7.36, 0.11, 4.3) {};
            \node[fill=blue] at (7.36, 1.81, 2.8) {};
            \node[fill=blue] at (5.74, 0.11, 2.8) {};
            \node[fill=blue] at (8.99, 0.11, 2.8) {};
            
            \node[fill=none] at (8.95, -0.4, 2.8) {\huge{\textcolor{blue}{$Z_\chi$}}};
            \node[fill=none] at (5.74, -0.4, 2.8) {\huge{\textcolor{blue}{$Z_\chi$}}};
            \node[fill=none] at (8, 1.81, 2.8) {\huge{\textcolor{blue}{$Z_\chi$}}};
            \node[fill=none] at (8.15, 0.1,4.3) {\huge{\textcolor{blue}{$Z_\chi$}}};
            \node[fill=none] at (8.15, 0.1,1.28) {\huge{\textcolor{blue}{$Z_\chi$}}};
            
            \node[fill=none] at (8, 0.1,7.35) {\Huge{\textcolor{blue}{$S^\chi_s$}}};
            
            
            \draw[red,thick] (10.62,0.11,-0.2) -- (10.62,0.11,-3.2);
            \node[red] at (10.615, 0.115, -1.7) {};
            \node[fill=none] at (11.25, 0.115, -1.7) {\huge{\textcolor{red}{$R_g$}}};
            
            \node[fill=none] at (10.8, 0.115, -5) {\Huge{\textcolor{red}{$R^g_\ell$}}};
        \end{tikzpicture}
    \]
    
    Similarly, the boundary algebra for the rough cut is given by:
    \begin{align*}
        \mathfrak{C}&(P) = \rmC^\ast \{Z^{\chi}_{\ell}, Y^g_p | p \subset \tilde{P}, \ell \subset P \text{\normalfont{ internal}},g\in G\}
    \\&
        \begin{tikzpicture}[auto=center,every node/.style={circle, fill=black, scale=0.4}, scale=0.4]
        \def \dx{3.25};
        \def \dy{2.5};
        \def \dz{3};
        \def \nbx{5};
        \def \nby{2};
        \def \nbz{4};
        \def \nbxm{4};
        \def \nbym{1};
        \def \nbzm{3};
        \foreach \x in {1,...,\nbxm} {
            \foreach \y in {1,...,\nbym} {
                \foreach \z in {1,...,\nbz} {
                    \node at (\dx/2+\x*\dx,\dy+\y*\dy,\z*\dz) { };
                }
            }
        }
        \foreach \x in {1,...,\nbx} {
            \foreach \y in {1,...,\nbym} {
                \foreach \z in {1,...,\nbzm} {
                    \node at (\x*\dx,\dy+\y*\dy,\dz/2+\z*\dz) { };
                }
            }
        }
        \foreach \x in {1,...,\nbx} {
            \foreach \z in {1,...,\nbz}{
                \draw (\x*\dx,\dy,\z*\dz) -- ( \x*\dx,\nby*\dy,\z*\dz);
                \node at (\x*\dx,3.2,\z*\dz) { };
            }
        }
        \foreach \y in {1,...,\nbym} {
            \foreach \z in {1,...,\nbz}{
                \draw (\dx,\dy+\y*\dy,\z*\dz) -- ( \nbx*\dx,\dy+\y*\dy,\z*\dz);
            }
        }
        \foreach \x in {1,...,\nbx} {
            \foreach \y in {1,...,\nbym}{
                \draw (\x*\dx,\dy+\y*\dy,\dz) -- ( \x*\dx,\dy+\y*\dy,\nbz*\dz);
            }
        }
        \draw[blue, thick] (7.36, 0.11, 2.8) -- (7.36, 2.59, 2.8);
        \node[fill=blue] at (7.363, 0.815, 2.8) {};
        \node[fill=none] at (8, 0.77, 2.8) {\huge{\textcolor{blue}{$Z_\chi$}}};
        \node[fill=none] at (7.5, -0.75,2.8) {\Huge{\textcolor{blue}{$Z^\chi_\ell$}}};
        \draw[red, thick] (10.57, 2.55, 5.7) -- (13.85, 2.55, 5.7);
        \draw[red, thick] (10.59, 2.55, 5.7) -- (10.59, 0, 5.7);
        \draw[red, thick] (13.815, 2.55, 5.7) -- (13.815, 0, 5.7);
        \node[fill=red] at (10.575, 0.778, 5.7) {};
        \node[fill=red] at (13.82, 0.78, 5.7) {};
        \node[fill=red] at (12.2, 2.57, 5.7) {};
        \node[fill=none] at (11.45, 0.77, 5.7) {\huge{\textcolor{red}{$R_{g^{-1}}$}}};
        \node[fill=none] at (14.515, 0.77, 5.7) {\huge{\textcolor{red}{$L_g$}}};
        \node[fill=none] at (12.5, 2, 5.7) {\huge{\textcolor{red}{$R_{g^{-1}}$}}};
        \node[fill=none] at (12.2, -0.75,5.7) {\Huge{\textcolor{red}{$Y^g_p$}}};
        \end{tikzpicture}
    \end{align*}
\end{thm}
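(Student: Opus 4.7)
The plan is to adapt to three dimensions the two-step strategy that proved Theorems \ref{thm:LTO1QuantumDouble}, \ref{thm:LTO2QuantumDouble}, and \ref{thm:LTO3QuantumDouble} in 2D. As before, every $x\in\fA(\Lambda)$ can be written as a linear combination of simple tensors $L_{c_1}P_{c_2}$, so it suffices to work with these. For \ref{LTO:CompletelySurrounds}, Step 1 uses interior plaquette operators $B^{(g)}_p$ (now on three orthogonal families of coordinate plaquettes) together with the exchange relations $R_g L_h P_k = L_{hkgk^{-1}}P_k$ and $L_g L_h P_k = L_{gh}P_k$ to cancel every $L_{c_1}$-factor, reducing $p_\Delta L_{c_1}P_{c_2}p_\Delta$ to $p_\Delta P_{c_2}p_\Delta$. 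Step 2 shows that for any two flat configurations $f_1,f_2\colon\Lambda\to G$, one has $p_\Delta P_{f_1}p_\Delta = p_\Delta P_{f_2}p_\Delta$, by conjugating $P_{f_1}$ into $P_{f_2}$ through a product of $B^{(g)}_p$'s; this works because $\Delta^\circ$ is simply connected, exactly as in the proof of Theorem \ref{thm:main_SPT}. For \ref{LTO:BoundaryAlgebras} I would run the same algorithm but halt on reaching $\widetilde P$: the operators that cannot be further cancelled are the proposed generators of $\fC(P)$, namely the boundary translations $R^g_\ell$ together with partial-star operators on the five edges of a star $s\subset\widetilde P$ in the smooth case, and partial plaquettes $Y^g_p$ together with boundary phase operators in the rough case.

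The key new ingredient is the identification of these partial operators with the character-indexed $S^\chi_s$ and $Z^\chi_\ell$ appearing in the statement, which exploits abelianness of $G$. Here the plan is to invoke Pontryagin duality: since $\mathbbm{1}_{\{e\}}(x) = \tfrac{1}{|G|}\sum_\chi \chi(x)$, the six-edge star projector Fourier-decomposes as
\[
A_s \;=\; \frac{1}{|G|}\sum_\chi \bigotimes_{\ell\in s} Z_{\chi^{\epsilon_\ell}}\big|_\ell,
\]
where $\epsilon_\ell\in\{\pm1\}$ records edge orientation (using that $\sum_g \chi(g) P_g = Z_\chi$ and $\sum_g \chi(g)^{-1}P_g = Z_{\chi^{-1}}$). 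Restricting the sum to the five edges of $s$ lying in $\widetilde P$ yields the operators $S^\chi_s$ from the statement; invertibility of the discrete Fourier transform on the remaining ghost edge, combined with the fact that $\chi\mapsto\chi^{-1}$ is a bijection on $\widehat G$, shows that the $S^\chi_s$ generate the same $\ast$-algebra as the partial-star operators produced by the cancellation algorithm. The rough case is dual: the boundary projectors $P^g_\ell$ on internal edges Fourier-expand to give the $Z^\chi_\ell$, while the partial plaquettes $Y^g_p$ remain group-element indexed.

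Finally, \ref{LTO:Injectivity} I would handle in the manner of Theorem \ref{thm:LTO3QuantumDouble}: given $x=\sum_i a_i x_i \in \fC(P)$ expressed in the proposed canonical basis with $xp_\Delta=0$, test against flat simple tensors $|c\rangle$ whose interior edges all carry the identity. Remark \ref{rem:BasisOfImageOfPDelta} extends to 3D rectangular prisms with essentially the same proof, giving a basis of $\im(p_\Delta)$ indexed by boundary labels; the fact that distinct $x_i|c\rangle$ yield distinct basis vectors of $\im(p_\Delta)$ after applying $p_\Delta$ then forces each $a_i=0$. The main obstacle throughout is organizational rather than conceptual: the cancellation algorithm in 3D must sweep across faces in three orientations and track more edge types, so the bookkeeping is substantially more delicate than in 2D. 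Since $G$ is abelian the plaquette and translation operators commute past each other in the required sense, and no genuinely new phenomenon appears beyond the Fourier identification above---which is why it is reasonable, as the authors do, to simply cite the algorithm of \cite{2307.12552} rather than reproducing every detail.
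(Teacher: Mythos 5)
Your proposal is correct and follows essentially the same route as the paper, which itself only invokes the three-dimensional analogue of the cancellation algorithm of \cite[Algorithm~3.10, Prop.~3.9]{2307.12552} (i.e.\ the same two-step strategy as Theorems \ref{thm:LTO1QuantumDouble}--\ref{thm:LTO3QuantumDouble}) and notes in passing that one may replace the $P_g$-type generators by the $Z_\chi$'s. Your Pontryagin-duality identification $A_s=\frac{1}{|G|}\sum_\chi\bigotimes_{\ell\in s}Z_{\chi^{\epsilon_\ell}}$ and the Fourier-inversion argument equating the character-indexed truncated stars with the group-indexed partial stars is exactly the (unstated) content behind the paper's choice of generators, so you have simply supplied details the paper leaves implicit.
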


Observe that the operators $Y^g_p$ above are truncations of the $B^{(g)}_p$ from \eqref{eq:QDBp}.
One can obtain an algebra isomorphic to $\fC(P)$ using the generators $P_g$ instead of $Z_\chi$; this was the generator used in Section \ref{sec: Boundary Algebra}. 
However, the use of irreducible characters allows us to easily obtain the abstract description of the algebra $\fC(P)$ as a direct sum of matrix algebras, as an abelian group is isomorphic to its dual group.
It is worth noting that since $G$ is abelian, the boundary algebra for smooth and rough cuts are always isomorphic. 

\begin{thm}
\label{thm:3DBoundaryAlgebraMatrixSummands}
    The boundary algebra $\fC(P)$ for both rough and smooth cuts has the following direct-sum decomposition:
    \[
    \fC(P) \cong \bigoplus^{|G|^{E-V}} M_{|G|^V}(\C).
    \]
    Here $E,V$ are the number of edges $\ell \subset P$ and the number of stars $s \subset \widetilde{P}$ in the smooth cut, respectively.
\end{thm}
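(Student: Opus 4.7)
The plan is to recognize $\fC(P)$, for abelian $G$, as a bicharacter-twisted group algebra of $G^E \times \widehat{G}^V$ and then invoke the classical structure theorem for such algebras, which says they decompose as a direct sum of isomorphic matrix blocks indexed by the radical of the twist. I start with the smooth cut. Since $G$ is abelian, the basic commutation rule $Z_\chi R^g = \chi(g)\, R^g Z_\chi$ propagates to
\[
S_s^\chi R_\ell^g = \chi(g)^{n(s,\ell)} R_\ell^g S_s^\chi,
\]
where $n(s,\ell) \in \{0,1\}$ records whether $\ell \subset P$ appears in the five-fold product defining $S_s^\chi$. Combined with the facts that the $S_s^\chi$ mutually commute (each is diagonal in the position basis) and the $R_\ell^g$ mutually commute (they act on distinct edges), this exhibits $\fC(P)$ as the linear span of $\{R^{\vec g}\, S^{\vec \chi}\}$ with a twisted product determined by the bicharacter
\[
\omega(\vec\chi,\vec g) = \prod_s \chi_s\bigl((N\vec g)_s\bigr), \qquad (N\vec g)_s \coloneqq \sum_{\ell \sim s,\ \ell \subset P} g_\ell,
\]
where $N\colon G^E \to G^V$ is the star–edge incidence map read off from the generators.

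Next I would invoke the standard fact that a bicharacter-twisted group algebra $\mathbb{C}^\omega[A]$ of a finite abelian group $A$ is a direct sum of $|\operatorname{rad}(\omega)|$ copies of a single matrix algebra of size $\sqrt{|A|/|\operatorname{rad}(\omega)|}$. A direct calculation shows that the radical of $\omega$ inside $G^E \times \widehat{G}^V$ is exactly $\ker(N) \times \ker(N^{\vee})$, where $N^{\vee}\colon \widehat{G}^V \to \widehat{G}^E$ is the Pontryagin dual of $N$. The theorem then follows once we establish that $N$ is surjective: surjectivity forces $\ker(N^\vee) = 0$ and $|\ker N| = |G|^{E-V}$, producing $|G|^{E-V}$ blocks of size $|G|^{((E+V) - (E-V))/2} = |G|^{V}$, which is exactly the claimed decomposition.

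Surjectivity of $N$ is the main combinatorial input and the principal obstacle. The strategy I would pursue is to exhibit, for each star $s \subset \widetilde{P}$, an in-plane edge $\ell_s \subset P$ incident to $s$ but to no other star of $\widetilde{P}$; then setting $\vec g = g\,\delta_{\ell_s}$ realizes an arbitrary element of $G$ in the $s$-coordinate of the image and zero elsewhere, giving surjectivity star by star. For rectangular $P$ and the natural $\widetilde{P}$, such distinguished edges should exist on the boundary of $P$, since boundary edges sit adjacent to vertices that fail to carry a star in $\widetilde{P}$. If no single distinguished edge suffices uniformly, the fallback is to prove surjectivity inductively on the dimensions of $P$, peeling off one row or column at a time and applying the exchange relations to isolate each new star coordinate.

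Finally, for the rough cut with generators $Z_\ell^\chi$ and $Y_p^g$, the same analysis applies verbatim once the roles of vertex-stars and plaquettes are swapped: the pair $(Z_\ell^\chi, Y_p^g)$ obeys a Weyl commutation relation governed this time by the edge–plaquette incidence matrix, and a parallel radical computation yields the same invariants $|G|^{E-V}$ and $|G|^V$. Alternatively, since $\Hilb(G) \cong \Rep(G)$ for abelian $G$, the rough-cut decomposition can be deduced from the smooth-cut one by the same fusion-categorical identification used in \S\ref{sec:FusionCategoricalIdentification}.
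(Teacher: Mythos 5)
Your route is genuinely different from the paper's and is essentially sound. The paper first passes from $\fC(P)$ to $\fD(P)$ by stripping the vertical legs off the $S_s^\chi$, then (for cyclic $G$) simultaneously Fourier-diagonalizes the $R_\ell^g$, observes that the $\widetilde S_s^\chi$ act as commuting order-$|G|$ permutation matrices on that eigenbasis, and reads off invariant subspaces of dimension $|G|^V$, finally extending to general abelian $G$ via the cyclic decomposition. You instead package the Weyl-type relations $S_s^\chi R_\ell^g=\chi(g)^{n(s,\ell)}R_\ell^g S_s^\chi$ into a bicharacter twist on $G^E\times\widehat G^V$ and compute the radical; your radical computation ($\ker N\times\ker N^\vee$) and the resulting arithmetic ($|G|^{E-V}$ blocks of size $|G|^V$ once $N$ is onto) are correct, and this basis-free argument handles all abelian $G$ at once, making the combinatorial content (surjectivity of the star--edge incidence map) explicit where the paper's ``no additional relations'' step is left implicit.

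Two points need to be shored up. First, to invoke the structure theorem you must know that $\fC(P)$ is the \emph{full} twisted group algebra $\bbC^\omega[G^E\times\widehat G^V]$ and not a proper quotient: if the $|G|^{E+V}$ monomials $R^{\vec g}S^{\vec\chi}$ were linearly dependent, the same argument would only bound the number of blocks above by $|G|^{E-V}$. This is easy to supply --- each $R^{\vec g}S^{\vec\chi}$ is a monomial matrix in the position basis, distinct $\vec g$ give matrices with disjoint support, and for fixed $\vec g$ distinct $\vec\chi$ give distinct diagonal characters (e.g.\ because each star carries a vertical edge touched by no other star, or from surjectivity of $N$ if you work in $\fD(P)$) --- but it must be said, since the claimed block count depends on it. Second, your ``private edge per star'' argument for surjectivity of $N$ only works for stars adjacent to $\partial P$; an interior vertex has all four in-plane edges shared with other stars, so the fallback is genuinely needed. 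It does work: for an interior star $s$, take a lattice path of edges in $P$ from $s$ to a non-star boundary vertex and label its edges alternately $g,g^{-1},g,\dots$; the unsigned incidence sums then cancel at every intermediate star and give $g$ at $s$, proving surjectivity. With these two additions your argument is complete for the smooth cut; for the rough cut you rely (as does the paper) on the asserted rough/smooth isomorphism for abelian $G$, or on redoing the radical computation with the edge--plaquette incidence, which should be spelled out if used.
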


\begin{proof}
    Since the boundary algebras for smooth and rough cuts are isomorphic, we only prove the statement for the boundary algebra for the smooth case. 
    As in the proof of Theorem \ref{thm:2d-qd-matrix-decomp}, we begin by noting that $\fC(P)$ is $\ast$-isomorphic to the following algebra:
    \begin{align*}
    &
        \begin{tikzpicture}[auto=center,every node/.style={circle, fill=black, scale=0.4}, scale=0.4]
        \def \dx{3.25};
        \def \dy{2.5};
        \def \dz{3};
        \def \nbx{5};
        \def \nby{2};
        \def \nbz{4};
        \def \nbxm{4};
        \def \nbym{1};
        \def \nbzm{3};
        \foreach \x in {1,...,\nbxm} {
            \foreach \y in {1,...,\nbym} {
                \foreach \z in {1,...,\nbz} {
                    \node at (\dx/2+\x*\dx,\y*\dy,\z*\dz) { };
                }
            }
        }
        \foreach \x in {1,...,\nbx} {
            \foreach \y in {1,...,\nbym} {
                \foreach \z in {1,...,\nbzm} {
                    \node at (\x*\dx,\y*\dy,\dz/2+\z*\dz) { };
                }
            }
        }
        \foreach \y in {1,...,\nbym} {
            \foreach \z in {1,...,\nbz}{
                \draw (\dx,\y*\dy,\z*\dz) -- ( \nbx*\dx,\y*\dy,\z*\dz);
            }
        }
        \foreach \x in {1,...,\nbx} {
            \foreach \y in {1,...,\nbym}{
                \draw (\x*\dx,\y*\dy,\dz) -- ( \x*\dx,\y*\dy,\nbz*\dz);
            }
        }
        \draw[blue, thick] (7.36, 0.1, 2.8) -- (10.62, 0.1, 2.8);
        \draw[blue, thick] (7.36, 0.1, 2.8) -- (4.10 , 0.1, 2.8);
        \draw[blue, thick] (7.36, 0.1, 2.8) -- (7.36, 0.1, 5.7);
        \draw[blue, thick] (7.36, 0.1, 2.8) -- (7.36, 0.1, -0.2);
        \node[fill=blue] at (7.36, 0.1, 1.28) {};
        \node[fill=blue] at (7.36, 0.11, 4.3) {};
        \node[fill=blue] at (5.74, 0.11, 2.8) {};
        \node[fill=blue] at (8.99, 0.11, 2.8) {};
        \node[fill=none] at (8.95, -0.4, 2.8) {\huge{\textcolor{blue}{$Z_\chi$}}};
        \node[fill=none] at (5.74, -0.4, 2.8) {\huge{\textcolor{blue}{$Z_\chi$}}};
        \node[fill=none] at (8.15, 0.1,4.3) {\huge{\textcolor{blue}{$Z_\chi$}}};
        \node[fill=none] at (8.15, 0.1,1.28) {\huge{\textcolor{blue}{$Z_\chi$}}};
        \node[fill=none] at (8, 0.1,7.35) {\Huge{\textcolor{blue}{$\Tilde{S}^\chi_s$}}};
        \draw[red,thick] (10.62,0.11,-0.2) -- (10.62,0.11,-3.2);
        \node[red] at (10.615, 0.115, -1.7) {};
        \node[fill=none] at (11.25, 0.115, -1.7) {\huge{\textcolor{red}{$R_g$}}};
        \node[fill=none] at (10.8, 0.115, -5) {\Huge{\textcolor{red}{$R^g_\ell$}}};
        \end{tikzpicture}
\\
    \mathfrak{D}&(P) = \rmC^\ast \set{\widetilde{S}^\chi_s, R^g_\ell }{ s, \ell \subset P,  g \in G, \chi \text{  character}}.
    \end{align*}
    Therefore, it suffices to show the statement for $\mathfrak{D}(P)$. 
    Since $\fD(P)$ is manifestly a $*$-subalgebra of $M_{|G|}(\bbC)^{\otimes E}$,
    determining the direct sum decomposition of $\fD(P)$ is equivalent to characterizing the invariant subspaces of $\bbC[G]^{\otimes E}$ under the $\fD(P)$-action.
    
    Suppose that $G$ is a cyclic group of order $n$ for some $n \geq 1$, let $\omega_n$ be an $n$-th root of unity, and let $\chi$ be an irreducible character which generates the dual group $\widehat{G}$.
    It is straightforward to show that
    \[
    \beta_n = 
    \set{ \sum_{i=0}^{n-1} \omega_n^{m\cdot i} |r^i\rangle
    }
    {0 \leq m \leq n-1}
    \]
    is a basis for $\C[G]$ that diagonalizes the operator $R_g$ for all $g \in G$. 
    Therefore, $\beta_n^{\otimes E}$ is a basis for $\C[G]^{\otimes E}$ that diagonalizes $R^g_\ell$ for all $g \in G$ and all $\ell \subset P$. 
    Since $G$ is abelian, there are $|G|$ irreducible characters and each of these is a group homomorphism $G \to U(1)$, which necessarily takes vales in the powers of $\omega_n$.
    It follows that each $\widetilde{S}^\chi_s$ is an permutation matrix of order $n=|G|$ in the basis $\beta_n$. 
    Since the operators $\set{\widetilde{S}^\chi_s}{s\subset P}$ commute,
    and there are no additional relations amongst these operators,
    the size of each invariant subspace is $|G|^V$, 
    which immediately yields the stated direct-sum decomposition. 
    Since a general abelian group decomposes as a product of cyclics, this argument extends to the general case as well.
\end{proof}

\begin{rem}
Observe that the formula in Theorem \ref{thm:3DBoundaryAlgebraMatrixSummands} for the matrix decomposition of the boundary algebra also holds in the 2D setting for abelian groups.
There, $E-V$ is always 1, so we get $\bigoplus^{|G|} M_{|G|^V}(\bbC)$, which can be read directly off the Bratteli diagram \eqref{eq:BratteliHilbG}.
\end{rem}

As previously mentioned, one could apply the gauging procedure from Section \ref{sec:SPT Boundary Algebra} to compute the boundary algebra in the 3D model as well.


\newpage 
\bibliographystyle{alpha}
\onecolumngrid{
\footnotesize{
\bibliography{../../bibliography/bibliography}
}}

\end{document}